%%%%%%%%%%%%%%%%%%%%%%%%%%%%%%%%%%%%%%%%%%%%%%%%%%%%%%%%%%%%%%%%%%%%%%%%%%%%%%%%
%%%%%%%%%%%%%%%%%%%%%%%%%    PACKAGES    %%%%%%%%%%%%%%%%%%%%%%%%%%%%%%%%%%%%%%%
%%%%%%%%%%%%%%%%%%%%%%%%%%%%%%%%%%%%%%%%%%%%%%%%%%%%%%%%%%%%%%%%%%%%%%%%%%%%%%%%

\documentclass[a4paper,12pt]{extarticle}%{scrartcl}%{report}

\usepackage[tmargin=5.8cm]{geometry}

%supuestamente para subsubsubsubsection

\usepackage{chngcntr}
\setcounter{secnumdepth}{6}
\counterwithin{paragraph}{subsubsection}

%%%%%%%%%%%%%%%%%%%%%%%%%%%%%%%%%%%%%%%%%%%%%%%%%%%
%PACKAGES
%%%%%%%%%%%%%%%%%%%%%%%%%%%%%%%%%%%%%%%%%%%%%%%%%%%
%\usepackage{ulem}%para tachado horizontal

\usepackage{cancel}%para tachado diagonal
\usepackage{graphicx}
\usepackage{amscd}
\usepackage{pb-diagram}

\usepackage{newclude}

\usepackage [english]{ babel } %\usepackage [ english , spanish ]{ babel }
\usepackage{float} %para poner tabla en lugar fijo
\usepackage[all]{xy} %crear un diagrama es mediante una matriz
\usepackage{amsmath,amsfonts,amssymb}
\usepackage{amstext} 
\usepackage{color}
\usepackage[latin1]{inputenc}
\usepackage{setspace}

\usepackage{pst-node}
\usepackage[urlcolor=blue,citecolor=blue,linkcolor=blue]{hyperref}
\usepackage{tocloft}

\usepackage{amsthm}
%%%%%%%%%%%%%%%%%%%%%%%%%%%%%%%%%%%%%%%%%%%%%%%%%%%
%SETINGS
%%%%%%%%%%%%%%%%%%%%%%%%%%%%%%%%%%%%%%%%%%%%%%%%%%%
\setcounter{MaxMatrixCols}{30} \setlength{\headheight}{0cm}
\setlength{\headsep}{0cm} \setlength{\oddsidemargin}{0.6cm}
\setlength{\evensidemargin}{1.5cm} \setlength{\textheight}{22cm}
\setlength{\textwidth}{15cm} %\setlength{\parindent}{0cm}

%%%%%%%%%%%%%%%%%%%%%%%%%%%%%%%%%%%%%%%%%%%%%%%%%%%
%NEWCOMMANDS
%%%%%%%%%%%%%%%%%%%%%%%%%%%%%%%%%%%%%%%%%%%%%%%%%%%
\newcommand{\listdefinitionsname}{\Large{List of Definitions}}
\newlistof{mydefinitions}{def}{\listdefinitionsname}
\newcommand{\mydefinitions}[1]{%
\addcontentsline{def}{mydefinitions}{\protect\numberline{\thedefinition}#1}\par}

\newcommand{\ws}{\textcolor{white}{e}}

%
%MORE COMMANDS

% see: http://www.texample.net/tikz/examples/area/computer-science/  for latex graphics  charts, etc

\newtheorem{theorem}{Theorem}

\newtheorem{corollary}[theorem]{Corollary}

\newtheorem{definition}{Definition}

\newtheorem{lemma}[theorem]{Lemma}

\newtheorem{proposition}[theorem]{Proposition}
\newtheorem{remark}[theorem]{Remark}

%\newtheorem*{theorem}{Theorem} 
%\newtheoremstyle{named}{}{}{\itshape}{}{\bfseries}{.}{.5em}{\thmnote{#3's }#1}
\newtheoremstyle{named}{}{}{\itshape}{}{\bfseries}{.}{.5em}{\thmnote{#3} #2}
\theoremstyle{named} 
\newtheorem{namedtheorem}[theorem]{Theorem}

\newcommand{\enter} {\vskip 0.3cm}

\newcommand{\A}{\mathbb A}

\newcommand{\N}{\mathbb N}
\newcommand{\Z}{\mathbb Z}
\newcommand{\C}{\mathbb C}

\newcommand{\klk}{,\ldots,}
\newcommand{\ol}{\overline}
\newcommand{\MM}{\ol{\mathcal{M}}}

\newcommand{\M}{{\mathcal{M}}}

\DeclareMathAlphabet{\mathpzc}{OT1}{pzc}{m}{it} %para mathcal en minuscula
%\renewcommand{\CancelColor}{red} 

%-- Begin Graphics

%-- End Graphics

%%%%%%%%%%%%%%%%%%%%%%%%%%%%%%%%%%%%%%%%%%%%%%%%%%%%%%%%%%%%%%%%%%%%%%%%%%%%%%%%%
%%%%%%%%%%%%%%%%%%%%%%%%%%%%%%%%%    DOCUMENT    %%%%%%%%%%%%%%%%%%%%%%%%%%%%%%%%
%%%%%%%%%%%%%%%%%%%%%%%%%%%%%%%%%%%%%%%%%%%%%%%%%%%%%%%%%%%%%%%%%%%%%%%%%%%%%%%%%

\begin{document}

\begin{center}
\textbf{\Large{On the intrinsic complexity of elimination problems in effective algebraic geometry
\footnote{
Research partially supported by the following Argentinian, Belgian and Spanish grants: CONICET PIP 2461/01, UBACYT 20020100100945, PICT--2010--0525, FWO G.0344.05, MTM2010-16051. 
}
%\footnote{Research partially supported by the following Argentinian, Belgian and Spanish grants: CONICET PIP 2461/01, UBACYT 20020100100945, PICT--2010--0525, FWO G.0344.05, MTM2010-16051. }
}}
\end{center}

\begin{center}
\Large{Joos Heintz\footnote{Departamento de Computaci\'on, Universidad de Buenos Aires and CONICET, Ciudad Universitaria, Pab. I, 1428 Buenos Aires, Argentina, and Departamento de Matem\'aticas, Estad\'istica y Computaci\'on, Facultad de Ciencias, Universidad de Cantabria, Avda. de los Castros s/n, 39005 Santander, Spain. joos@dc.uba.ar \& joos.heintz@unican.es}, Bart Kuijpers\footnote{Database and Theoretical Computer Science Research Group, Hasselt University, Agoralaan, Gebouw D, 3590 Diepenbeek, Belgium. bart.kuijpers@uhasselt.be}, Andr\'es Rojas Paredes\footnote{Departamento de Computaci\'on, Universidad de Buenos Aires, Ciudad Universitaria, Pab. I, 1428 Buenos Aires, Argentina. arojas@dc.uba.ar}\\}
\end{center}

\enter

\begin{abstract}
The representation of polynomials by arithmetic circuits evaluating them is an alternative data structure which allowed considerable progress in polynomial equation solving in the last fifteen years. We present a circuit based computation model which captures all known symbolic elimination algorithms in effective algebraic geometry and show the intrinsically exponential complexity character of elimination in this complexity model.
\end{abstract}

%\pagenumbering{arabic}

%%%%%%%%%%%%%%%%%%%%%%%%%%%%%%%%%%%%%%%%%%%%%%%%%%%%%%%%%%%%%%%%%%%%%%%%%%%%%
%----------------------------------------------------------------------------
\section{\large{Introduction}}
%----------------------------------------------------------------------------
%%%%%%%%%%%%%%%%%%%%%%%%%%%%%%%%%%%%%%%%%%%%%%%%%%%%%%%%%%%%%%%%%%%%%%%%%%%%%
%\setcounter{page}{1}

Modern elimination theory starts with Kronecker's 1882 paper \cite{Kro1882} where the argumentation is essentially constructive, i.e., algorithmic. Questions of efficiency of algorithms become only indirectly and marginally addressed in this paper. However, later criticism of Kronecker's approach to algebraic geometry emphasized the algorithmic inefficiency of his argumentation (\cite{M16}, \cite{vdW50}). In a series of more recent contributions, that started with \cite{Caniglia89} and ended up with \cite{Giusti1}, \cite{Giusti2}, \cite{HMW01} and \cite{GLS01}, it became apparent that this criticism is based on a too narrow interpretation of Kronecker's elimination method. In fact, these contributions are, implicitly or explicitly, based on this method, nonwithstanding that they also contain other views and ideas coming from commutative algebra and algebraic complexity theory. 

A turning point was achieved by the combination of a new, global view of Newton iteration with Kronecker's method (\cite{Giusti1}, \cite{Giusti2}). The outcome was that elimination polynomials, although hard to represent by their coefficients, allow a reasonably efficient encoding by evaluation algorithms. This circumstance suggests to represent in elimination algorithms polynomials not by their coefficients but by arithmetic circuits (see \cite{HeSie81}, \cite{Ka88} and \cite{FIK86} for the first steps in this direction). This idea became fully realized by the ``Kronecker'' algorithm for the resolution of polynomial equation systems over algebraically closed fields. The algorithm was anticipated in \cite{GHMP95}, \cite{Giusti1}, \cite{HMW01}, \cite{GLS01} and implemented in a software package of identical name (see \cite{Lecerf}).

The results of the present paper imply that the complexity of the Kronecker algorithm is asymptotically optimal under reasonable architectural assumptions. In particular, we exhibit in Section \ref{A hard elimination problem} of this paper an infinite family of arithmetic input circuits encoding efficiently certain elimination problems such that any algorithm, designed by commonly accepted rules of software engineering, requires exponential time to solve these problems. In order to establish such a result we need a computation model. In this contribution we put main emphasis on this task. 

In the same spirit we prove in Section \ref{independent of the model} that there exists an infinite family of parameter dependent elimination polynomials which, under a certain requirement of branching--freeness (called ``robustness''), need arithmetic circuits of exponential size for their evaluation, whereas the circuit size of the corresponding elimination problems grows only polynomially. 

In the past, many attempts to show the non--polynomial character of the elimination of just one existential quantifier block in the arithmetic circuit based elementary language over $\C$, employed the reduction to the claim that an appropriate candidate family of specific polynomials was hard to evaluate (this approach was introduced in \cite{HeintzMorgenstern93} and became adapted to the BSS model in \cite{SS95}). We give here the first example of such a family.

We finish this paper with a discussion of the concept of an approximative algorithm and establish a link between this notion and our computation model.

\section[Concepts and tools from algebraic geometry]{\large{Concepts and tools from algebraic geometry}}\label{sec: geometry}
%Capitulo tecnico sobre geometria algebraica

In this section, we use freely standard notions and notations from commutative algebra and algebraic geometry. These can be found for example in \cite{Lang93}, \cite{ZaSa60}, \cite{Kunz85} and \cite{Shafarevich94}. In Sections~\ref{sec: geometry-basic} and~\ref{sec: geometry-robust}, we introduce the notions and definitions which constitute our fundamental tool for the modelling of elimination problems and algorithms. Most of these notions and their definitions are taken from \cite{GHMS09}.

\subsection{Basic notions and notations} \label{sec: geometry-basic}

For any $n\in\N$, we denote by $\A^n:=\A^n(\C)$ the $n$--dimensional affine space $\C^n$ equipped with its respective Zariski and Euclidean topologies over $\C$. In algebraic geometry, the Euclidean topology of $\A^n$ is also called the {\em strong topology}. We shall use this terminology only exceptionally. 

Let $X_1,\ldots,X_n$ be indeterminates over $\C$ and let $X:=(X_1,\ldots,X_n)$. We denote by 
$\C[X]$ the ring of polynomials in the variables $X$ with complex coefficients.

Let $V$ be a closed affine subvariety of $\A^n$. As usual, we write $\dim V$ for the dimension of the variety $V$. 

%Let $C_1,\dots,C_s$ be the irreducible components of $V$. For $1\leq j\leq s$ we define the degree of $C_j$ as the number of points which arise when we intersect $C_j$ with $\dim C_j$ many generic affine hyperplanes of $\A^n$. Observe that this number is a well--determined positive integer which we denote by $\deg C_j$. The \emph{(geometric) degree} $\deg V$ of $V$ is defined by $\deg V:= \sum_{1\leq j\leq s} \deg C_j$. This notion of degree satisfies the so called B\'ezout Inequality. Namely, for another closed affine subvariety $W$ of $\A^n$ we have $\deg V\cap W \leq \deg V \cdot \deg W.$ 

%For details we refer to \cite{HeintzTesis83}, where the notion of geometric degree was introduced and the B\'ezout Inequality was proved for the first time (other references are \cite{Fu84} and \cite{Vo84}).    

For $f_1,\dots,f_s,g \in \C[X]$ we  shall use the notation $\{f_1=0,\dots,f_s=0\}$ 
in order to denote the closed affine subvariety $V$ of $\A^n$ defined by $f_1,\dots,f_s$ 
and the notation $\{f_1=0,\dots,f_s=0,g\not=0\}$ in order to denote the Zariski open subset $V_g$ of $V$ defined by the intersection of $V$ with the complement of $\{g=0\}$. %
Observe that $V_g$ is a locally closed affine subvariety of $\A^n$ whose coordinate ring
is the localization ${\C[V]}_g$ of $\C[V]$.

We denote by $I(V):= \{f \in \C[X]: f(x)=0$ for any $x \in V \}$ the ideal of definition of $V$ in $\C[X]$ and by $\C[V]:= \{\varphi:V \to \C \ws; \textrm{ there exists\ } f \in \C[X] \textrm{\ with\ } {\varphi}(x)=f(x) \textrm{\ for any\ } x \in V\}$ its coordinate ring. Observe that $\C[V]$ is isomorphic to the quotient $\C$--algebra $\C[V]:=\C[X]/I(V)$. If $V$ is irreducible, then $\C[V]$ is zero--divisor free and we denote by $\C(V)$ the field formed by the rational functions of $V$ with maximal domain ($\C(V)$ is called the rational function field of $V$). Observe that $\C(V)$ is isomorphic to the fraction field of the integral domain $\C[V]$.

In the general situation where $V$ is an arbitrary closed affine
subvariety of $\A^n$, the notion of a rational function of $V$ has
also a precise meaning. The only point to underline is that the domain, say $U$, of a rational function of $V$ has to be a maximal Zariski open and dense subset of $V$ to which the given rational function can be extended. In particular, $U$ has a nonempty intersection with any of the irreducible components of $V$.

As in the case where $V$ is irreducible, we denote by $\C(V)$ the $\C$--algebra formed by the rational functions of $V$. In algebraic terms, $\C(V)$ is the total quotient ring of $\C[V]$ and is isomorphic to the direct product of the rational function fields of the irreducible components of $V$.

Let be given a partial map $\phi: V \dashrightarrow W$, where $V$ and $W$ are closed subvarieties of some affine spaces $\A^n$ and $\A^m$, and let $\phi_1\klk\phi_m$ be the components of $\phi$. With these notations we have the following definitions:

\begin{definition}[Polynomial map] \label{def: pol. map}
The map $\phi$ is called a morphism of affine varieties or just polynomial map if the complex valued functions $\phi_1\klk\phi_m$ belong to $\C[V]$. Thus, in particular, $\phi$ is a total map.
\end{definition}
\mydefinitions{\label{def: pol. map} Polynomial map}
 
\begin{definition}[Rational map] \label{def: rational map}
We call $\phi$ a rational map of $V$ to $W$, if the domain $U$ of $\phi$ is a Zariski open and dense subset of $V$ and $\phi_1\klk\phi_m$ are the restrictions of suitable rational functions of $V$ to $U$. 
\end{definition}
\mydefinitions{\label{def: rational map} Rational map}

Observe that our definition of a rational map differs from the usual one in algebraic geometry, since we do not require that the domain $U$ of $\phi$ is maximal. Hence, in the case $m:=1$, our concepts of rational function and rational map do not coincide (see also \cite{GHMS09}).

%%%%%%%%%%%%%%%%%%%%%%%%%%%%%%%%%%%%%%%%%%%%%%%%%%%%%%%%%%%%%%%%%%%%%%%%%%%%%%%%%%%%%%%%%%%%
%%%%%%%%%%%%%%%%%%%%%%%%%%%%%%%%%%%%%%%%%%%%%%%%%%%%%%%%%%%%%%%%%%%%%%%%%%%%%%%%%%%%%%%%%%%%
\subsection{Constructible sets and constructible maps}
\label{sec: geometry-basic-constuctible}
Let $\mathcal{M}$ be a subset of some affine space $\A^n$ and, for a given nonnegative integer $m$, let $\phi:\mathcal{M}\dashrightarrow \A^m$ be a partial map. 

\begin{definition}[Constructible set] \label{def: constr. set}
We call the set $\mathcal{M}$ {\em constructible} if $\mathcal{M}$ is definable by a Boolean combination of polynomial equations.
\end{definition} 
\mydefinitions{\label{def: constr. set} Constructible set}

A basic fact we shall use in the sequel is that if
$\mathcal{M}$ is constructible, then its Zariski closure is equal to
its Euclidean closure (see, e.g., \cite{Mumford88}, Chapter I, \S 10, Corollary
1). In the same vein we have the following definition.
% or \cite[Lemma 12.5.3]{SoWa05}).

\begin{definition}[Constructible map] \label{def: constr. map}
We call the partial map $\phi$ {\em constructible} if the graph of $\phi$ is constructible as a subset of the affine space $\A^n\times \A^m$.
\end{definition} 
\mydefinitions{\label{def: constr. map} Constructible map}

We say that $\phi$ is {\em polynomial} if $\phi$ is the restriction of a morphism of affine varieties $\A^n\to \A^m$ to a constructible subset $\M$ of $\A^n$ and hence a total
map from $\M$ to $\A^m$. Furthermore, we call $\phi$ a {\em
rational} map of $\mathcal{M}$ if the domain $U$ of $\phi$ is
contained in $\mathcal{M}$ and $\phi$ is the restriction to
$\mathcal{M}$ of a rational map of the Zariski closure
$\ol{\mathcal{M}}$ of $\mathcal{M}$. In this case $U$ is a Zariski
open and dense subset of $\mathcal{M}$.

Since the elementary, i.e., first--order theory of algebraically closed fields with constants in $\C$ admits quantifier elimination, constructibility means just elementary definability. In particular, $\phi$ is constructible implies that the domain and the image of $\phi$ are constructible subsets of $\A^n$ and $\A^m$, respectively.

\begin{remark}
\label{remark constructible}
A partial map $\phi:\M\dashrightarrow\A^m$ is constructible if and only if it is piecewise rational. If $\phi$ is constructible there exists a Zariski open and dense subset $U$ of $\M$ such that the restriction $\phi|_U$ of $\phi$ to $U$ is a rational map of $\M$ (and of $\MM$).    
\end{remark}

For details we refer to \cite{GHMS09}, Lemma 1.

%Observe that $\M$ and $\A^m$ have two topologies, namely the Zariski topology and the Euclidean one. Witch respect to both topologies, the closure of $\M$ in $\A^n$ gives rise to the same subset of $\A^n$, which we denote by $\MM$ (see [Scha], Chapter VII, Section 2.1, Lemma 1).

\subsection{Geometrically robust constructible maps}
\label{sec: geometry-robust}
The main mathematical tool of this paper is the notion of geometrical robustness which we are going to introduce now.

Let $\M$ be a constructible subset of the affine space $\A^n$ and let $\phi:\M\to\A^m$ be a (total) constructible map with components $\phi_1,\dots,\phi_m$.

We consider now the Zariski closure $\ol{\mathcal M}$ of the constructible subset $\mathcal M$ of $\A^n$. Observe that $\ol{\mathcal M}$ is a closed affine subvariety of $\A ^n$ and that we may interpret $\C(\ol{\mathcal M})$ as a $\C[\ol{\mathcal{M}}]$--module (or $\C[\ol{\mathcal{M}}]$--algebra). 

Fix now an arbitrary point $x$ of $\ol{\mathcal M}$. By $\mathfrak{M}_x$ we denote the maximal ideal
of coordinate functions of $\C[\ol{\mathcal M}]$ which vanish at
the point $x$. By $\C[\ol{\mathcal M}]_{\mathfrak{M}_x}$ we denote the local
$\C$--algebra of the variety $\MM$ at the point $x$, i.e., the
localization of $\C[\MM]$ at the maximal ideal $\mathfrak{M}_x$. By $\C(\MM )_{\mathfrak{M}_x}$ we denote the localization of the $\C[\MM]$--module $\C(\MM )$ at $\mathfrak{M}_x$.

Following Remark \ref{remark constructible}, we may interpret $\phi_1,\ldots,\phi_m$ as rational functions of the
affine variety $\MM$ and therefore as elements of the total fraction ring $\C(\MM)$ of $\C[\MM]$. Thus $\C[\MM][\phi_1,\ldots,\phi_m]$ and
$\C[\MM]_{\mathfrak{M}_x}[\phi_1,\ldots,\phi_m]$ are
$\C$--subalgebras of $\C(\MM)$ and $\C(\MM)_{\mathfrak{M}_x}$ which contain $\C[\MM]$ and $\C[\MM]_{\mathfrak{M}_x}$, respectively.

The following result establishes for constructible maps a bridge between a topological and an algebraic notion. It will be fundamental in the context of this paper.

\begin{namedtheorem}[Theorem--Definition]
%\begin{theorem}
\label{theorem-definition}
Let notations and assumptions be as before. We call the constructible map $\phi:\M\to\A^m$ geometrically robust if $\phi$ is continuous with respect to the Euclidean topologies of $\M$ and $\A^m$ or equivalently, if $\phi_1,\dots,\phi_m$, interpreted as rational functions of the affine variety $\MM$, satisfy at any point $x\in\M$ the following two conditions: 
\begin{itemize}
	\item[(i)] $\C[\ol{\mathcal{M}}]_{\mathfrak{M}_x}[\phi_1,\dots,\phi_m]$ is a finite $\C[\ol{\mathcal{M}}]_{\mathfrak{M}_x}$--module.	 
	\item[(ii)] $\C[\ol{\mathcal{M}}]_{\mathfrak{M}_x}[\phi_1,\dots,\phi_m]$ is a local $\C[\ol{\mathcal{M}}]_{\mathfrak{M}_x}$--algebra whose maximal ideal is generated by $\mathfrak{M}_x$ and $\phi_1-\phi_1(x),\dots,\phi_m-\phi_m(x)$.
\end{itemize}
%\end{theorem}
\end{namedtheorem}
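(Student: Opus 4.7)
The statement is a biconditional between a topological property (Euclidean continuity of $\phi$ on $\M$) and the two local--algebraic conditions (i), (ii) at each point $x \in \M$. My plan is to prove the two directions separately, working pointwise at an arbitrary $x \in \M$.

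\textbf{Direction $(\Rightarrow)$: continuity implies (i) and (ii).} First I would reduce to the case $m=1$ by treating each component $\phi_j$ independently, since the conditions (i) and (ii) are componentwise stable (finite generation and local maximal ideal generation in several generators reduce to the case of one generator at a time). Having fixed a component $\phi_j$, continuity at $x$ ensures in particular that $\phi_j$ is bounded in a Euclidean neighborhood of $x$ in $\M$. Now the point is that $\phi_j$ is a rational function on the irreducible components of $\MM$ passing through $x$, and a rational function on an affine variety which is bounded in a Euclidean neighborhood of a point $x$ is integral over the local ring at $x$. This is a consequence of Zariski's Main Theorem (or, alternatively, of the Riemann extension theorem applied to the normalization of $\MM$ at $x$: a bounded meromorphic function on a normal analytic space extends holomorphically, and the resulting holomorphic extension satisfies the monic polynomial equation that presents the normalization). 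This yields (i). For (ii), once integrality is established, $\C[\MM]_{\mx}[\phi_1,\dots,\phi_m]$ is a semilocal ring whose maximal ideals lie over $\mx$; continuity of $\phi$ at $x$, giving $\phi_i(y)\to\phi_i(x)$ as $y\to x$, ensures that only one of these maximal ideals is compatible with the values $\phi_i(x)$, so this maximal ideal is generated by $\mx$ together with the $\phi_i - \phi_i(x)$.

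\textbf{Direction $(\Leftarrow)$: (i) and (ii) imply continuity.} Assume (i) and (ii) at $x \in \M$. From (i), each $\phi_j$ satisfies a monic polynomial equation
\[
\phi_j^{d_j} + a_{j,d_j-1}\phi_j^{d_j-1} + \cdots + a_{j,0} = 0
\]
with coefficients $a_{j,k}\in\C[\MM]_{\mx}$, i.e., rational functions on $\MM$ regular in a Zariski (hence Euclidean) neighborhood of $x$. A standard argument bounding roots of a monic polynomial by its coefficients shows $\phi_j$ is uniformly bounded on a small Euclidean neighborhood $U$ of $x$ in $\M$. To upgrade boundedness to continuity at $x$, I would use (ii): the element $\phi_j - \phi_j(x)$ lies in the maximal ideal of the local ring $\C[\MM]_{\mx}[\phi_1,\dots,\phi_m]$, which by (ii) is generated by $\mx$ and the $\phi_i - \phi_i(x)$. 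Writing $\phi_j - \phi_j(x) = \sum_k b_{j,k}\,m_{j,k} + \sum_i c_{j,i}(\phi_i - \phi_i(x))$ with $m_{j,k}\in\mx$ and $b_{j,k}, c_{j,i}$ in the finite module and hence bounded on $U$, one obtains on a neighborhood of $x$ a linear system $(\phi_i(y)-\phi_i(x))_i = A(y)\cdot(\phi_i(y)-\phi_i(x))_i + \varepsilon(y)$ where $\varepsilon(y)\to 0$ as $y\to x$ and $A(y)\to 0$ as well; invertibility of $I - A(y)$ near $x$ forces $\phi_i(y)-\phi_i(x)\to 0$, which is continuity.

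\textbf{Main obstacle.} The delicate step is the integrality claim in direction $(\Rightarrow)$: passing from the purely topological hypothesis of boundedness of a rational function to the algebraic conclusion of integrality over the local ring. This is not an elementary manipulation of the constructible presentation of $\phi$; it genuinely requires either Zariski's Main Theorem or the analytic Riemann extension statement on the normalization, combined with the fact that, $\MM$ being the Zariski closure of a constructible set, its irreducible components through $x$ carry a well-defined notion of normalization near $x$. Once this step is available, the remaining manipulations (passage to components, handling of maximal ideals in semilocal extensions, and Nakayama-style neighborhood estimates) are routine.
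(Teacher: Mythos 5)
The paper does not prove this statement in-line: it defers to \cite{HKR11} (see also \cite{CaGiHeMaPa03} and \cite{GHMS09}) and only records that the argument rests on Zariski's Main Theorem. Your direction $(\Rightarrow)$ is in the spirit of that cited proof: boundedness of a rational function near $x$ gives integrality over $\C[\ol{\mathcal{M}}]_{\mathfrak{M}_x}$ via the normalization, and continuity then singles out one maximal ideal of the resulting semilocal ring. Two caveats: conditions (i)--(ii) are not ``componentwise stable'' as you assert (locality of $\C[\ol{\mathcal{M}}]_{\mathfrak{M}_x}[\phi_1,\dots,\phi_m]$ does not follow from locality of each $\C[\ol{\mathcal{M}}]_{\mathfrak{M}_x}[\phi_j]$), though your later semilocal argument treats the components jointly, so this is harmless; and continuity only bounds $\phi_j$ on a neighborhood of $x$ \emph{inside} $\M$, so you must invoke the fact that a constructible set is Euclidean dense in its Zariski closure to transfer the bound to the locus of $\MM$ where the rational representative is regular before applying Riemann extension on the normalization. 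Both points are repairable.

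Direction $(\Leftarrow)$, however, has a genuine gap. You write $\phi_j-\phi_j(x)=\sum_k b_{j,k}m_{j,k}+\sum_i c_{j,i}(\phi_i-\phi_i(x))$ from (ii) and claim the matrix $A(y)$ of the $c_{j,i}$ tends to $0$ as $y\to x$. Nothing forces this: the membership in the maximal ideal is trivially witnessed by $c_{j,j}=1$ and all other coefficients zero, and (ii) provides no canonical representation with small coefficients near $x$, so the linear-system step collapses. Moreover, the identity you evaluate is an identity of rational functions, valid only on a Zariski dense subset on which $\phi$ agrees with its rational representative; since $\phi$ is merely piecewise rational, continuity at $x$ must be checked against the actual values $\phi(y)$ at nearby points $y\in\M$ lying outside that subset, and your argument never controls these -- indeed it only uses (i)--(ii) at the single point $x$. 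The standard repair uses the hypotheses at the neighboring points as well: let $Z$ be the affine variety with coordinate ring $\C[\ol{\mathcal{M}}][\phi_1,\dots,\phi_m]$, i.e.\ the Zariski closure of the graph of the rational map. Condition (i) at $x$ makes $Z\to\MM$ finite, hence Euclidean proper, over a neighborhood of $x$ (your monic-equation bound is precisely this properness); condition (ii) at a point $y\in\M$ says that the unique point of $Z$ over $y$ is $(y,\phi(y))$, which places the graph of $\phi$ near $x$ inside $Z$; and condition (ii) at $x$ says the only point of $Z$ over $x$ is $(x,\phi(x))$. Properness plus uniqueness of the possible limit point then yield $\phi(y)\to\phi(x)$. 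Without using (i)--(ii) at the nearby points (or an equivalent device) the implication cannot be established, so as written your proof of $(\Leftarrow)$ does not go through.
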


For a proof of this result, which is based on Zariski's Main Theorem (\cite{Iversen73}, \S IV.2) we refer to \cite{HKR11} (see also \cite{CaGiHeMaPa03} and \cite{GHMS09}).

From the topological definition of a geometrically robust constructible map one deduces immediately the following statement.

\begin{corollary}
\label{proposition 1}
If we restrict a geometrically robust constructible map to a constructible subset of its domain of definition we obtain again a geometrically robust map. Moreover the composition and the cartesian product of two geometrically robust constructible maps are geometrically robust. The geometrically robust constructible functions form a commutative $\C$--algebra which contains the polynomial functions.
\end{corollary}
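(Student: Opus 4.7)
The plan is to exploit the topological half of the Theorem--Definition: a constructible map is geometrically robust if and only if it is continuous with respect to the Euclidean topologies of its source and target. Once robustness is read topologically, each assertion of the corollary reduces to combining a standard fact from point-set topology (continuity is preserved under restriction, composition, cartesian product, sum, and product) with the observation that the class of constructible sets and maps is closed under the corresponding operation. The latter closure is itself a consequence of quantifier elimination for the elementary theory of algebraically closed fields with constants in $\C$, so all four clauses follow by the same routine pattern.

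For the restriction clause, given a robust $\phi:\M\to\A^m$ and a constructible subset $\M'$ of $\M$, the graph of $\phi|_{\M'}$ equals $(\M'\times\A^m)\cap\mathrm{graph}(\phi)$ and hence is constructible, while Euclidean continuity on $\M'$ is inherited from continuity on $\M$. For the composition of robust maps $\phi:\M_1\to\A^m$ and $\psi:\M_2\to\A^k$ with $\phi(\M_1)\subseteq\M_2$, Euclidean continuity is preserved under composition and the graph
\[
\{(x,z)\in\M_1\times\A^k : \exists y\,((x,y)\in\mathrm{graph}(\phi) \wedge (y,z)\in\mathrm{graph}(\psi))\}
\]
is elementarily definable, hence constructible by quantifier elimination. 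The analogous verification for the cartesian product $\phi_1\times\phi_2$ is immediate: continuity of products is standard and the product graph is Boolean in the factor graphs.

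For the $\C$-algebra structure, polynomial functions $\A^n\to\C$ are Euclidean-continuous with algebraic graphs and therefore robust; in particular the constants and the sum and product maps $s,p:\C^2\to\C$ are robust. Given two robust functions $\phi,\psi:\M\to\C$, the pairing $(\phi,\psi):\M\to\C^2$ is continuous with a constructible graph (Boolean in the graphs of $\phi$ and $\psi$), hence robust; the composition clause then yields robustness of $\phi+\psi = s\circ(\phi,\psi)$ and $\phi\cdot\psi = p\circ(\phi,\psi)$, and containment of the polynomial functions is already settled. No step is genuinely hard: the only point requiring mild care is that each closure property at the level of constructibility rests on an invocation of quantifier elimination, while all of the analytic content has been packaged into the Theorem--Definition.
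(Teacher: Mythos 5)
Your proposal is correct and follows essentially the same route as the paper: the paper derives the corollary directly from the topological characterization in the Theorem--Definition (Euclidean continuity), exactly as you do, with the constructibility of restrictions, compositions, products, sums and graphs being the routine quantifier-elimination bookkeeping you spell out. No gap; your write-up is just a more explicit version of the paper's one-line deduction.
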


The origin of the concept of a geometrically robust map can be found, implicitly, in \cite{GH01}. It was introduced explicitly for constructible maps with irreducible domains of definition in \cite{GHMS09}, where it is used to analyze the complexity character of multivariate Hermite--Lagrange interpolation.

%%Hasta aqui pagina 6

\section{Robust parameterized arithmetic circuits}
We shall use freely standard concepts from algebraic complexity theory which can be found in $\cite{Burgisser97}$.

Let us fix natural numbers $n$ and $r$, indeterminates $X_1,\dots ,X_n$ and a non--empty constructible subset $\mathcal{M}$ of $\A^r$. By $\pi_1,\dots ,\pi_r$ we denote the restrictions to $\mathcal{M}$ of the canonical projections $\A^r\to\A^1$.

A \emph{(by $\mathcal{M}$) parameterized arithmetic circuit $\beta$} (with \emph{basic parameters $\pi_1,\dots ,\pi_r$} and \emph{inputs $X_1,\dots ,X_n$}) is a labelled directed acyclic graph (labelled DAG) satisfying the following conditions:\\
each node of indegree zero is labelled by a scalar from $\C$, a basic parameter $\pi_1,\dots ,\pi_r$ or a input variable $X_1,\dots ,X_n$. Following the case, we shall refer to the \emph{scalar, basic parameter} and (standard) \emph{input} nodes of $\beta$. All other nodes of $\beta$ have indegree two and are called \emph{internal}. They are labelled by arithmetic operations (addition, subtraction, multiplication, division). A \emph{parameter} node of $\beta$ depends only on scalar and basic parameter nodes, but not on any input node of $\beta$ (here ``dependence'' refers to the existence of a connecting path). An addition or multiplication node whose two ingoing edges depend on an input is called \emph{essential}. The same terminology is applied to division nodes whose second argument depends on an input. Moreover, at least one circuit node becomes labelled as output. Without loss of generality we may suppose that all nodes of outdegree zero are outputs of $\beta$.	

We consider $\beta$ as a syntactical object which we wish to equip with a certain semantics. In principle there exists a canonical evaluation procedure of $\beta$ assigning to each node a rational function of $\mathcal{M}\times \A^n$ which, in case of a parameter node, may also be interpreted as a rational function of $\mathcal{M}$. In either situation we call such a rational function an \emph{intermediate result} of $\beta$. 

The evaluation procedure may fail if we divide at some node an intermediate result by another one which vanishes on a Zariski dense subset of a whole irreducible component of $\mathcal{M}\times \A^n$. If this occurs, we call the labelled DAG $\beta$ \emph{inconsistent}, otherwise \emph{consistent}. 

If nothing else is said, we shall from now on assume that $\beta$ is a consistent parameterized arithmetic circuit. The intermediate results associated with output nodes will be called \emph{final results} of $\beta$. 

We call an intermediate result associated with a parameter node a \emph{parameter} of $\beta$ and interpret it generally as a rational function of $\mathcal{M}$. A parameter associated with a node which has an outgoing edge into a node which depends on some input of $\beta$ is called \emph{essential}. In the sequel we shall refer to the constructible set $\mathcal{M}$ as the \emph{parameter domain} of $\beta$.

We consider $\beta$ as a syntactic object which represents the final results of $\beta$, i.e., the rational functions of $\mathcal{M}\times\A^n$ assigned to its output nodes. 

%In this way becomes introduced an abstraction function which associates $\beta$ with these rational functions. This abstraction function assigns therefore to $\beta$ a rational map $\mathcal{M}\times\A^n \dashrightarrow \A^q$, where $q$ is the number of output nodes of $\beta$. On its turn, this rational map may also be understood as a (by $\mathcal{M}$) parameterized family of rational maps $\A^n \dashrightarrow \A^q$.

Now we suppose that the parameterized arithmetic circuit $\beta$ has been equipped with an additional structure, linked to the semantics of $\beta$. We assume that for each node $\rho$ of $\beta$ there is given a \emph{total} constructible map $\mathcal{M}\times \A^n \to \A^1$ which extends the intermediate result associated with $\rho$. In this way, if $\beta$ has $K$ nodes, we obtain a total constructible map $\Omega:\mathcal{M}\times \A^n \to \A^K$ which extends the rational map $\mathcal{M}\times \A^n \dashrightarrow \A^K$ given by the labels at the indegree zero nodes and the intermediate results of $\beta$.
    
\begin{definition}[Robust circuit] \label{def: robust circuit}
Let notations and assumptions be as before. The pair $(\beta,\Omega)$ is called a robust parameterized arithmetic circuit if the constructible map $\Omega$ is geometrically robust. 
\end{definition}
\mydefinitions{\label{def: robust circuit} Robust circuit}

Observe that the above rational map $\M\times\A^n \dashrightarrow \A^K$ can be extended \emph{to at most one} geometrically robust constructible map $\Omega:\M\times\A^n\to\A^K$. Therefore we shall apply from now on the term ``robust'' also to the circuit $\beta$.

Robust parameterized arithmetic circuits may be restricted as follows:\\
Let $\mathcal{N}$ be a constructible subset of $\mathcal{M}$ and suppose that $(\beta,\Omega)$ is robust. Then Corollary \ref{proposition 1} implies that the restriction $\Omega\vert_{\mathcal{N}\times\A^n}$ of the constructible map $\Omega$ to $\mathcal{N}\times\A^n$ is still a geometrically robust constructible map.

This implies that $(\beta,\Omega)$ induces a by $\mathcal{N}$ parameterized arithmetical circuit $\beta_{\mathcal{N}}$. Observe that $\beta_{\mathcal{N}}$ may become inconsistent. If $\beta_{\mathcal{N}}$ is consistent then $(\beta_{\mathcal{N}}, \Omega\vert_{\mathcal{N}\times\A^n})$ is robust. The nodes where the evaluation of $\beta_{\mathcal{N}}$ fails correspond to divisions of zero by zero which may be replaced by so called approximative algorithms having unique limits (see \cite{HKR11}, Section 3.3.2 and Section \ref{A hard elimination problem} below). These limits are given by the map $\Omega\vert_{\mathcal{N}\times\A^n}$. We call $(\beta_{\mathcal{N}}, \Omega\vert_{\mathcal{N}\times\A^n})$, or simply $\beta_{\mathcal{N}}$, the \emph{restriction} of $(\beta,\Omega)$ or $\beta$ to $\mathcal{N}$. 

We say that the parameterized arithmetic circuit $\beta$ is \emph{totally division--free} if any division node of $\beta$ corresponds to a division by a non--zero complex scalar.

We call $\beta$ \emph{essentially division--free} if only parameter nodes are labelled by divisions. Thus the property of $\beta$ being totally division--free implies that $\beta$ is essentially division--free, but not vice versa. Moreover, if $\beta$ is totally division-free, the rational map given by the intermediate results of $\beta$ is polynomial and therefore a geometrically robust constructible map. Thus, any by $\mathcal{M}$ parameterized, totally division--free circuit is in a natural way robust.

In the sequel, we shall need the following elementary fact. 

\begin{lemma}
\label{lemma intermediate results}
Let notations and assumptions be as before and suppose that the parameterized arithmetic circuit $\beta$ is robust. Then all intermediate results of $\beta$ are polynomials in $X_1,\dots,X_n$ over the $\C$--algebra of geometrically robust constructible functions defined on $\mathcal{M}$.
\end{lemma}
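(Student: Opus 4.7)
Let $\mathcal{A}$ denote the $\C$-algebra of geometrically robust constructible functions on $\mathcal{M}$ (this is a $\C$-algebra by Corollary \ref{proposition 1}), and for each node $\rho$ of $\beta$ write $\Omega_\rho:\mathcal{M}\times\A^n\to\A^1$ for the corresponding component of $\Omega$. I argue by induction on the topological order of the nodes of $\beta$ that $\Omega_\rho\in\mathcal{A}[X_1,\dots,X_n]$. The base case is immediate: scalar nodes give constants in $\C\subset\mathcal{A}$; basic parameter nodes give the coordinate projections $\pi_i$, which are polynomial functions on $\mathcal{M}$ and therefore lie in $\mathcal{A}$; and input nodes give the variables $X_j\in\mathcal{A}[X]$. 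For the inductive step, the cases of addition, subtraction, and multiplication are trivial since $\mathcal{A}[X]$ is a ring. The only substantive case is division, which is where the robustness hypothesis genuinely enters.

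Suppose at a division node $\rho$ we are given $\Omega_u=P$ and $\Omega_v=Q$ in $\mathcal{A}[X]$ by induction, and let $\Omega_\rho$ be the robust extension of $P/Q$. Set $D_i:=\deg_{X_i}P$. By consistency of $\beta$, $Q$ does not vanish identically on any irreducible component of $\mathcal{M}\times\A^n$; hence $U:=\{m\in\mathcal{M}:Q(m,\cdot)\not\equiv 0\text{ in }\C[X]\}$ is Zariski open and dense in $\mathcal{M}$. For $m\in U$, the slice $X\mapsto\Omega_\rho(m,X)$ is continuous on $\A^n$ (by robustness) and agrees with the rational function $P(m,\cdot)/Q(m,\cdot)$ wherever the latter is defined. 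A rational function on $\A^n(\C)$ that is continuous in the Euclidean topology must be a polynomial, because in lowest terms any zero of the denominator --- guaranteed by the Nullstellensatz if the denominator is non--constant --- would force the function to blow up. Hence $\Omega_\rho(m,\cdot)\in\C[X]$, and the relation $P(m,\cdot)=Q(m,\cdot)\cdot\Omega_\rho(m,\cdot)$ forces $\deg_{X_i}\Omega_\rho(m,\cdot)\leq D_i$.

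To convert this pointwise polynomiality into a representative lying in $\mathcal{A}[X]$, fix distinct scalars $t^{(i)}_0,\dots,t^{(i)}_{D_i}\in\C$ for $i=1,\dots,n$ and let $\lambda^\alpha_j\in\C$ be the corresponding multivariate Lagrange interpolation coefficients on the resulting grid. Define
\[
c_\alpha(m) := \sum_{j} \lambda^\alpha_j \,\Omega_\rho\bigl(m,\,t^{(1)}_{j_1},\dots,t^{(n)}_{j_n}\bigr).
\]
Each $c_\alpha$ is a $\C$-linear combination of compositions of the continuous constructible map $\Omega_\rho$ with the continuous constructible sections $\mathcal{M}\to\mathcal{M}\times\A^n$ given by $m\mapsto(m,t^{(1)}_{j_1},\dots,t^{(n)}_{j_n})$, so $c_\alpha:\mathcal{M}\to\A^1$ is continuous and constructible, i.e., $c_\alpha\in\mathcal{A}$. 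The polynomial $\sum_\alpha c_\alpha(m)X^\alpha\in\mathcal{A}[X]$ agrees with $\Omega_\rho$ on the Zariski (hence Euclidean) dense set $U\times\A^n$ by the Lagrange identity applied fiberwise, and since both maps are continuous on $\mathcal{M}\times\A^n$ they coincide everywhere, giving $\Omega_\rho\in\mathcal{A}[X]$ and completing the induction. The main obstacle is precisely this division step: one has to translate the continuity of $\Omega_\rho$ into pointwise polynomiality of $P(m,\cdot)/Q(m,\cdot)$ via the classical fact that Euclidean--continuous complex rational functions are polynomials, and then lift this pointwise statement to a single uniform formula whose coefficients are robust on $\mathcal{M}$; Lagrange interpolation on a grid sized by the degree bound $D_i$ is the natural mechanism for doing so.
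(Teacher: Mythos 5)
Your argument is correct, but note that this paper does not actually contain a proof of Lemma \ref{lemma intermediate results}: it delegates it to \cite{HKR11}, Section 3.1, where the statement is obtained from the algebraic side of Theorem--Definition \ref{theorem-definition} (finiteness and locality of $\C[\ol{\mathcal{M}}]_{\mathfrak{M}_x}[\phi_1,\dots,\phi_m]$). You instead work entirely with the topological characterization of robustness: induction over the nodes, with the only nontrivial case the division node, where you slice at a fixed parameter instance $m$, invoke the classical fact that a Euclidean--continuous rational function on $\A^n(\C)$ is a polynomial, bound its partial degrees by those of the numerator, and then recover coefficients lying in the algebra of robust functions by Lagrange interpolation on a fixed grid, identifying the resulting polynomial with $\Omega_\rho$ by density and continuity. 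This is a genuinely different and more elementary route: it avoids Zariski's Main Theorem and integral--closure arguments altogether, at the price of being tied to the base field $\C$ and the Euclidean topology, whereas the algebraic route generalizes more readily. A few small points you should make explicit to be airtight: (a) in the addition/subtraction/multiplication steps you need the (one--line) uniqueness of the geometrically robust extension, i.e., that the sum or product of the children's representatives in $\mathcal{A}[X]$ really is the given component $\Omega_\rho$, which follows since both are continuous and agree on a Zariski (hence Euclidean) dense constructible subset; (b) before slicing you need the global identity $P=Q\cdot\Omega_\rho$ on all of $\mathcal{M}\times\A^n$ (again continuity plus density), since a priori $\Omega_\rho$ is only known to agree with the rational function $P/Q$ on the dense set where the evaluation procedure is defined; and (c) your set $U=\{m\in\mathcal{M}:Q(m,\cdot)\not\equiv 0\}$ is constructible and Zariski dense in $\mathcal{M}$ (argue componentwise from consistency of $\beta$), but it need not be Zariski open; density and constructibility are all you use, so this is harmless but should be stated accurately. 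With these clarifications your proof is complete and self--contained.
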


For a proof of Lemma \ref{lemma intermediate results} we refer to \cite{HKR11}, Section 3.1.

The statement of this lemma should not lead to confusions with the notion of an essentially division--free parameterized circuit. We say just that the intermediate results of $\beta$ are polynomials in $X_1,\dots , X_n$ and do not restrict the type of arithmetic operations contained in $\beta$ (as we did defining the notion of an essentially division--free parameterized circuit). 

To our parameterized arithmetic circuit $\beta$ we may associate different complexity measures and models. In this paper we shall mainly be concerned with \emph{sequential computing time}, measured by the \emph{size} of $\beta$. Here we refer with ``size'' to the number of internal nodes of $\beta$ which count for the given complexity measure. Our basic complexity measure is the \emph{non--scalar} one (also called \emph{Ostrowski measure}) over the ground field $\C$. This means that we count, at unit costs, only essential multiplications and divisions (involving basic parameters or input variables in both arguments in the case of a multiplication and in the second argument in the case of a division), whereas $\C$--linear operations are free (see \cite{Burgisser97} for details). 

\subsection{Operations with robust parameterized arithmetic circuits}

\subsubsection{The operation join}

Let $\gamma_1$ and $\gamma_2$ be two robust parameterized arithmetic circuits with parameter domain $\M$ and suppose that there is given a one--to--one correspondence $\lambda$ which identifies the output nodes of $\gamma_1$ with the input nodes of $\gamma_2$ (thus they must have the same number). Using this identification we may now join the circuit $\gamma_1$ with the circuit $\gamma_2$ in order to obtain a new parameterized arithmetic circuit $\gamma_2*_{\lambda}\gamma_1$ with parameter domain $\M$. The circuit $\gamma_2*_{\lambda}\gamma_1$ has the same input nodes as $\gamma_1$ and the same output nodes as $\gamma_2$ and one deduces easily from Lemma \ref{lemma intermediate results} and Corollary \ref{proposition 1} that the circuit $\gamma_2*_{\lambda}\gamma_1$ is robust and represents a composition of the rational maps defined by $\gamma_1$ and $\gamma_2$, if $\gamma_2*_{\lambda}\gamma_1$ is consistent. The (consistent) circuit $\gamma_2*_{\lambda}\gamma_1$ is called the (consistent) \emph{join} of $\gamma_1$ with $\gamma_2$.

Observe that the final results of a given robust parameterized arithmetic circuit may constitute a vector of parameters. The join of such a circuit with another robust parameterized arithmetic circuit, say $\beta$, is again a robust parameterized arithmetic circuit which is called an \emph{evaluation} of $\beta$. Hence, mutatis mutandis, the notion of join of two routines includes also the case of circuit evaluation.

\subsubsection{The operations reduction and broadcasting}

We describe now how, based on its semantics, a given parameterized arithmetic circuit $\beta$ with parameter domain $\M$ may be rewritten as a new circuit over $\M$ which computes the same final results as $\beta$.

The resulting two rewriting procedures, called \emph{reduction} and \emph{broadcasting}, will neither be unique, nor generally confluent. To help understanding, the reader may suppose that there is given an (efficient) algorithm which allows identity checking between intermediate results of $\beta$. However, we shall not make explicit reference to this assumption. We are now going to explain the first rewriting procedure.

Suppose that the parameterized arithmetic circuit $\beta$ computes at two different nodes, say $\rho$ and $\rho'$, the same intermediate result. Assume first that $\rho$ neither depends on $\rho'$, nor $\rho'$ on $\rho$. Then we may erase $\rho'$ and its two ingoing edges (if $\rho'$ is an internal node) and draw an outgoing edge from $\rho$ to any other node of $\beta$ which is reached by an outgoing edge of $\rho'$. If $\rho'$ is an output node, we label $\rho$ also as output node. Observe that in this manner a possible indexing of the output nodes of $\beta$ may become changed but not the final results of $\beta$ themselves.

Suppose now that $\rho'$ depends on $\rho$. Since the DAG $\beta$ is acyclic, $\rho$ does not depend on $\rho'$. We may now proceed in the same way as before, erasing the node $\rho'$.

Let $\beta'$ be the parameterized arithmetic circuit obtained, as described before, by erasing the node $\rho'$. Then we call $\beta'$ a \emph{reduction} of $\beta$ and call the way we obtained $\beta'$ from $\beta$ a \emph{reduction step}. A \emph{reduction procedure} is a sequence of successive reduction steps.

One sees now easily that a reduction procedure applied to $\beta$ produces a new parameterized arithmetic circuit $\beta^*$ (also called a \emph{reduction} of $\beta$) with the same basic parameter and input nodes, which computes the same final results as $\beta$ (although their possible indexing may be changed). Moreover, if $\beta$ is a robust parameterized circuit, then $\beta^*$ is robust too. Observe also that in the case of robust parameterized circuits our reduction commutes with restriction. 

Now we introduce the second rewriting procedure.

Let assumptions and notations be as before and let be given a set $P$ of nodes of $\beta$ and a robust parameterized arithmetic circuit $\gamma$ with parameter domain $\mathcal{M}$ and $\# P$ input nodes, namely for each $\rho\in P$ one which becomes labelled by a new input variable $Y_{\rho}$. We obtain a new parameterized arithmetic circuit, denoted by $\gamma*_P\beta$, when we join $\gamma$ with $\beta$, replacing for each $\rho\in P$ the input node of $\gamma$, which is labelled by the variable $Y_{\rho}$, by the node $\rho$ of $\beta$. The output nodes of $\beta$ constitute also the output nodes of $\gamma*_P\beta$. Thus $\beta$ and $\gamma*_P\beta$ compute the same final results. Observe that $\gamma*_P\beta$ is robust if it is consistent. We call the circuit $\gamma*_P\beta$ and all its reductions \emph{broadcastings} of $\beta$. Thus broadcasting a robust parameterized arithmetic circuit means rewriting it using only valid polynomial identities.

If we consider arithmetic circuits as computer programs, then reduction and broadcasting represent a kind of program transformations.

%hasta aqui pagina 12

%------------------------------------------------------------------------------------------
%\section[Applications of the extended computation model to complexity issues of effective elimination theory]{\large{Applications of the extended computation model to\\ complexity issues of effective elimination theory}}\label{sec:Model-Applications}

\section{A family of hard elimination polynomials} 
\label{independent of the model}

As a major result of this paper we are now going to exhibit an infinite family of parameter dependent elimination polynomials which require essentially division--free, \emph{robust} parameterized arithmetic circuits of exponential size for their evaluation, whereas the circuit size of the corresponding input problems grows only polynomially. The proof of this result, which is absolutely new in his kind, is astonishly elementary and simple.

Let $T,U_1,\dots,U_n$ and $X_1,\dots,X_n$ be indeterminates and let $U:=(U_1,\dots,U_n)$ and $X:=(X_1,\dots,X_n)$. Consider for given $n\in\N$ the polynomial $H^{(n)}:= \sum_{1 \leq i \leq n} 2^{i-1}$ $X_i + T \prod_{1 \leq i \leq n} (1+ (U_i-1)X_i)$. Observe that $H^{(n)}$ can be evaluated using $n-1$ non--scalar multiplications involving $X_1,\dots,X_n$.  

The set $\mathcal{O}:=\{ \sum_{1 \leq i \leq n} 2^{i-1}X_i + t \prod_{1 \leq i \leq n} (1+ (u_i-1)X_i);(t,u_1,\dots,u_n)\in\A^{n+1} \}$ is contained in a finite--dimensional $\C$--linear subspace of $\C[X]$ and therefore $\mathcal{O}$ and its closure $\ol{\mathcal{O}}$ are constructible sets.

From \cite{GHMS09}, Section 3.3.3 we deduce the following facts:\\
there exist $K:=16n^2+2$ integer points $\xi_1,\dots,\xi_K \in\Z^n$ of bit length at most $4n$ such that for any two polynomials $f,g\in \ol{\mathcal{O}}$ the equalities $f(\xi_k)=g(\xi_k), 1\leq k \leq K$, imply $f=g$. Thus the polynomial map $ \Xi:\ol{\mathcal{O}}\to\A^K$ defined for $f\in \ol{\mathcal{O}}$ by $\Xi(f):= (f(\xi_1),\dots,f(\xi_K))$ is injective. Moreover $\mathcal{M}:=\Xi(\mathcal{O})$ is an irreducible constructible subset of $\A^K$ and we have $\ol{\mathcal{M}}=\Xi(\ol{\mathcal{O}})$. Finally, the constructible map $\phi:=\Xi^{-1}$, which maps $\mathcal{M}$ onto $\mathcal{O}$ and $\ol{\mathcal{M}}$ onto $\ol{\mathcal{O}}$, is a restriction of a geometrically robust map and therefore by Corollary \ref{proposition 1} itself geometrically robust.

For $\epsilon\in\{ 0,1 \}^n$ we denote by $\phi_\epsilon$ the map $\ol{\mathcal{M}}\to\A^1$ which assigns to each point $v\in\ol{\mathcal{M}}$ the value $\phi(v)(\epsilon)$. From Corollary \ref{proposition 1} we conclude that $\phi_{\epsilon}$ is a geometrically robust constructible function which belongs to the function field $\C(\ol{\mathcal{M}})$ of the irreducible algebraic variety $\ol{\mathcal{M}}$.

Observe that for $t\in\A^1$ and $u\in\A^n$ the identities $\phi_{\epsilon}(\Xi(H^{(n)}(t,u,X)))=\phi(\Xi(H^{(n)}(t,u,X)))(\epsilon)= ((\Xi^{-1}\circ\Xi)(H^{(n)}(t,u,X)))(\epsilon)= H^{(n)}(t,u,\epsilon)$ hold.

Let $P^{(n)}:= \prod_{\epsilon\in \{ 0,1\}^n } (Y-\phi_\epsilon)$. Then $P^{(n)}$ is a geometrically robust constructible function which maps $\ol{\mathcal{M}}\times\A^1$ (and hence $\mathcal{M}\times\A^1$) into $\A^1$. 

Consider now the polynomial $F^{(n)}:= \prod_{\epsilon\in \{ 0,1\}^n } (Y- H^{(n)}(T,U,\epsilon)) = 
\prod_{0 \leq j \leq 2^n-1}(Y-(j + T\prod_{1 \leq i\leq n} U_i^{[j]_i}))$, where $[j]_i$ denotes the $i$--th digit of the binary representation of the integer $j$, $0 \leq j \leq 2^n-1$, $1 \leq i \leq n$. We have for $t\in\A^1$ and $u\in\A^n$ the identities 

\begin{equation}
\begin{array}{l}
\displaystyle P^{(n)}(\Xi(H^{(n)}(t,u,X)),Y) = \prod_{\epsilon\in \{ 0,1 \}^n} (Y - \phi_\epsilon(\Xi(H^{(n)}(t,u,X))))=\\
\displaystyle  \prod_{\epsilon\in \{ 0,1 \}^n} (Y-H^{(n)}(t,u,\epsilon)) = F^{(n)}(t,u,Y)
\end{array} 
\label{equ (*)}
\end{equation}

Let $S_1,\dots,S_K$ be new indeterminates and observe that the existential first order formula of the elementary theory of $\C$, namely
\begin{equation}
\begin{array}{l}
\displaystyle (\exists X_1)\dots(\exists X_n)(\exists T)(\exists U_1)\dots(\exists U_n) (X_1^2-X_1=0 \wedge\dots \wedge X_n^2-X_n=0 \wedge\\
\displaystyle  \bigwedge_{1 \leq j \leq K} S_j=H^{(n)}(T,U,\xi_j) \wedge Y=H^{(n)}(T,U,X))
\end{array} 
\label{equ (**)}
\end{equation}

describes the constructible subset $\{ (s,y)\in\A^{K+1}; s\in\mathcal{M},y\in\A^1,P^{(n)}(s,y)=0 \}$ of $\A^{K+1}$. Moreover, $P^{(n)}$ is the greatest common divisor in $\C(\ol{\mathcal{M}})[Y]$ of all polynomials of $\C[\ol{\mathcal{M}}][Y]$ which vanish identically on the constructible subset of $\A^{K+1}$ defined by the formula (\ref{equ (**)}). Hence $P^{(n)}\in\C(\ol{\mathcal{M}})[Y]$ is a (parameterized) \emph{elimination polynomial}.

Observe that the polynomials contained in the formula (\ref{equ (**)}) can be represented by a totally division--free arithmetic circuit of size $O(n^3)$. Therefore, the formula (\ref{equ (**)}) is also of size $O(n^3)$. 

\begin{theorem}
\label{theorem model independent}
Let notations and assumptions be as before and let $\gamma$ be an essentially division--free, robust parameterized arithmetic circuit with domain of definition $\mathcal{M}$ such that $\gamma$ evaluates the elimination polynomial $P^{(n)}$.

Then $\gamma$ performs at least $\Omega(2^{\frac{n}{2}})$ essential multiplications and at least $\Omega(2^n)$ multiplications with parameters.
\end{theorem}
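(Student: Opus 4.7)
The two bounds rely on a common preliminary linear-independence observation combined with classical Paterson--Stockmeyer--Strassen-style arguments adapted to the parameterized setting over $R$, the $\C$-algebra of geometrically robust constructible functions on $\mathcal{M}$. By Lemma~\ref{lemma intermediate results}, every intermediate result of $\gamma$ lies in $R[Y]$, and the final output is $P^{(n)}(S,Y)=\sum_{k=0}^{2^n}c_k(S)\,Y^k$. The preliminary step is to establish that the $2^n+1$ coefficients $c_0,\dots,c_{2^n}\in R$ are $\C$-linearly independent. Pulling back via the parameterization $(t,u)\mapsto\Xi(H^{(n)}(t,u,X))$, each $c_k$ becomes the coefficient of $Y^k$ in $F^{(n)}(t,u,Y)=\prod_{j=0}^{2^n-1}(Y-j-tU^{[j]})$, which is, up to sign, the $(2^n-k)$-th elementary symmetric polynomial in the $2^n$ linear forms $\{j+tU^{[j]}\}$ and therefore has $t$-degree exactly $2^n-k$. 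Pairwise distinct $t$-degrees yield the required linear independence.

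For the $\Omega(2^{n/2})$ lower bound on essential multiplications, I would specialize $\gamma$ at a point $s_0\in\mathcal{M}$. Under specialization every parameter becomes a $\C$-scalar, so each multiplication involving a parameter becomes a $\C$-scalar multiplication and is free in the non-scalar Ostrowski measure over $\C$; only the $L_E$ essential multiplications persist as non-scalar operations. Hence the specialized circuit computes $P^{(n)}(s_0,Y)\in\C[Y]_{\leq 2^n}$ with at most $L_E$ non-scalar multiplications over $\C$. Invoking the Paterson--Stockmeyer--Strassen lower bound for the non-scalar $\C$-complexity of a univariate polynomial of degree $d=2^n$, the linear independence of the $c_k$ (together with the dimension of the family $\{P^{(n)}(s,Y)\}_{s\in\mathcal{M}}\subseteq\C[Y]_{\leq 2^n}$) guarantees a specialization $s_0$ that falls outside the $O(L_E^2)$-dimensional subvariety of polynomials computable with at most $L_E$ non-scalar multiplications, forcing $L_E^2\geq 2^n$, i.e.\ $L_E=\Omega(2^{n/2})$.

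For the $\Omega(2^n)$ lower bound on multiplications with parameters, I would track the $\C$-linear span $V\subseteq R$ of all elements of $R$ occurring as a coefficient of some $Y^k$ in some intermediate result of $\gamma$. Initially $\dim V\leq K+1=O(n^2)$. Each arithmetic operation grows $V$ by a controlled amount: a parameter-parameter multiplication adds at most one dimension; a parameter times a $Y$-dependent polynomial $a\cdot g$ adds at most $\deg_Y g+1$; and an essential multiplication contributes new coefficients that are bilinear combinations of existing elements, with the number of genuinely new dimensions constrained by the structure of the circuit. Since $V$ must contain the $2^n+1$ linearly independent coefficients $c_k$, and since the $L_E$ essential multiplications collectively form an $O(L_E^2)$-parameter family over $R$ (which caps their dimension contribution given the first-part bound $L_E=\Omega(2^{n/2})$), the remaining $\Omega(2^n)$ linearly independent dimensions must be supplied by multiplications with parameters, yielding $L_P=\Omega(2^n)$.

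The main obstacle is the rigorous execution of both dimension counts. The Paterson--Stockmeyer--Strassen lower bound is classical for generic polynomials, whereas our family is parameterized over $\mathcal{M}$; making the argument rigorous requires running the dimension count directly in $R[Y]$ to show that any polynomial computable with $L_E$ essential multiplications lies in an $O(L_E^2)$-parameter family over $R$, and combining this with the $\C$-linear independence of the $c_k$ to force $L_E^2\geq 2^n$. The same parameterized viewpoint drives the bookkeeping for $L_P$: the delicate point is that essential multiplications can in principle create many new bilinear coefficients, but the $O(L_E^2)$ bound on the parameter family pins down how many genuinely independent dimensions they add, so that the remaining ones must come from parameter multiplications.
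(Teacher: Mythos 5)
Your preliminary observation (the coefficients of $P^{(n)}$, pulled back to $F^{(n)}$, have pairwise distinct $T$--degrees and are hence $\C$--linearly independent) is correct, but the core of your argument for the bound $L_E=\Omega(2^{n/2})$ does not work. After specializing at a point $s_0\in\mathcal{M}$ you want to invoke the Paterson--Stockmeyer/Strassen dimension count: polynomials computable with $L_E$ non--scalar operations over $\C$ form a constructible set $W_{L_E}$ of dimension $O(L_E^2)$, and you claim that linear independence of the coefficient functions $c_k$ plus ``the dimension of the family'' yields a specialization outside $W_{L_E}$ unless $L_E^2\geq 2^n$. But the family $\{P^{(n)}(s,Y);s\in\mathcal{M}\}$ is the image of the at most $(n+1)$--dimensional set $\mathcal{M}$, so its dimension is $O(n)$; a dimension comparison with $W_{L_E}$ can therefore never force $L_E^2\geq 2^n$, only $L_E^2\gtrsim n$. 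Nor does linear independence of the $c_k$ as functions help: it only says the family is not contained in a hyperplane, and a low--dimensional family can have full affine span while consisting entirely of easy polynomials (the two--parameter family $(aY+b)^{2^n}$, computable with $O(n)$ non--scalar steps, spans the whole coefficient space). In fact, proving that some individual specialization $\prod_j\bigl(Y-j-t\prod_i u_i^{[j]_i}\bigr)$ with fixed complex $t,u$ is hard over $\C$ is exactly the kind of open problem (Pochhammer--Wilkinson type) that the paper explicitly avoids; your specialization step throws away the parametric/robust structure that the theorem is actually about. Your second count ($\Omega(2^n)$ parameter multiplications) is too vague to assess (``genuinely new dimensions constrained by the structure of the circuit'') and in any case leans on the flawed first part. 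Symptomatically, robustness --- an explicit hypothesis --- plays no essential role anywhere in your argument, so if your strategy worked it would prove a statement that is not known (and not claimed) to be true.

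The paper's proof runs differently and uses robustness crucially, not to bound the complexity of any single specialization, but to bound from below the number $m$ of essential parameters of $\gamma$. One joins $\gamma$ with a circuit computing $H^{(n)}(T,U,\xi_1),\dots,H^{(n)}(T,U,\xi_K)$, so that the essential parameters become $\tilde{\nu}=\psi\circ\nu$ and the coefficients of $F^{(n)}$ are $\varphi=\omega\circ\tilde{\nu}$ for a polynomial map $\omega$. Geometric robustness (condition $(i)$ of the Theorem--Definition) makes the entries of $\tilde{\nu}$ integral over the local ring $\C[\theta]_{\mathfrak{M}}$ at the point corresponding to $T=0$; substituting $T=0$ into the integral dependence relations forces every entry of $\tilde{\nu}(0,U)$ to be a constant, i.e.\ $\tilde{\nu}(0,U)=w$ is a single point of $\A^m$ independent of $U$. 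Then one differentiates along the curves $t\mapsto\tilde{\nu}(t,u)$ and $t\mapsto\varphi(t,u)$ at $t=0$: by \cite{CaGiHeMaPa03}, Lemma 6, the first--order $T$--coefficients $L_1,\dots,L_{2^n}$ of $\varphi$ are $\C$--linearly independent (a finer fact than the independence you prove), and the chain rule gives $(L_1(u),\dots,L_{2^n}(u))=\epsilon_u'(0)M$ with the single Jacobian matrix $M$ of $\omega$ at $w$. Choosing $2^n$ points $u_l$ making $(L_\kappa(u_l))$ invertible forces $\operatorname{rank} M\geq 2^n$, hence $m\geq 2^n$; the standard Paterson--Stockmeyer rearrangement then converts $m\geq 2^n$ essential parameters into $\Omega(2^{n/2})$ essential multiplications and $\Omega(2^n)$ multiplications involving parameters. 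This degeneration--plus--derivative argument at the distinguished point $T=0$ is the missing idea in your proposal.
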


\begin{proof}
We fix the natural number $n$. Let us write $H:=H^{(n)}=\sum_{1 \leq i\leq n} 2^{i-1}X_i + \prod_{1 \leq i\leq n} T(U_i-1)X_i $ as a polynomial in the main indeterminates $X_1,\dots,X_n$ with coefficients $\theta_{\kappa_1,\dots,\kappa_n}\in\C[T,U]$, $\kappa_1,\dots,\kappa_n\in\left\{0,1\right\}$, namely 
$$H=\sum_{\kappa_1,\dots,\kappa_n\in\left\{0,1\right\}} \theta_{\kappa_1,\dots,\kappa_n} X_1^{\kappa_1},\dots,X_n^{\kappa_n}.$$
Observe that for $\kappa_1,\dots,\kappa_n\in\left\{0,1\right\}$ the polynomial $\theta_{\kappa_1,\dots,\kappa_n}(0,U)\in \C[U]$ is of degree at most zero, i.e., a constant complex number, independent of $U_1,\dots,U_n$.

Let $\theta:=(\theta_{\kappa_1,\dots,\kappa_n})_{\kappa_1,\dots,\kappa_n\in\left\{0,1\right\}}$ and observe that the vector $\theta(0,U)$ is a fixed point of the affine space $\A^{2^n}$. We denote by $\mathfrak{M}$ the vanishing ideal of the $\C$--algebra $\C[\theta]$ at this point. We interpret $\theta$ as a geometrically robust constructible map $\A^{n+1}\to\A^{2^n}$ with (constructible) image $\mathcal{T}$.

Let us write $F:=F^{(n)}= \prod_{0\leq j\leq 2^{n}-1} (Y-(j+T \prod_{1\leq i\leq n}U_i^{\left[j\right]_i}))$ as a polynomial in the main indeterminate $Y$ with coefficients $\varphi_\kappa\in\C[T,U]$, $1\leq \kappa\leq 2^n$, namely
$F= Y^{2^n} + \varphi_1 Y^{2^n -1} +\dots+ \varphi_{2^n }.$ 

Observe that for $1\leq \kappa\leq 2^n $ the polynomial $\varphi_\kappa(0,U)\in\C[U]$ is of degree at most zero. Let $\lambda_\kappa:=\varphi_\kappa(0,U)$, $\lambda:=(\lambda_\kappa)_{1\leq \kappa\leq 2^n}$ and $\varphi:=(\varphi_\kappa)_{1\leq \kappa\leq 2^n }$. Observe that $\lambda$ is a fixed point of the affine space $\A^{2^n}$.

Let $\nu:\A^{n+1}\to\A^K$ be the polynomial map defined for $t\in\A^1$ and $u\in\A^n$ by $\nu(t,u):=\Xi(H(t,u,X))=(H(t,u,\xi_1),\dots,H(t,u,\xi_K))$. Observe that there exists a geometrically robust constructible map $\sigma:\mathcal{T}\to\A^K$ such that $\sigma\circ\theta=\nu$ holds. Since by assumption the parameterized arithmetic circuit $\gamma$ is essentially division--free and robust, there exists a geometrically robust constructible map $\psi$ defined on $\M$ such that the entries of $\psi$ constitute the essential parameters of the circuit $\gamma$. Moreover, for $m$ being the number of components of $\psi$, there exists a vector $\omega$ of $m$--variate polynomials over $\C$ such that the entries of $\omega(\psi)=\omega\circ\psi$ become the coefficients of the elimination polynomial $P:=P^{(n)}=\prod_{\epsilon\in\{ 0,1 \}^n}(Y-\phi_{\epsilon})$.

One sees easily that there exists a totally  division--free ordinary arithmetic circuit $\gamma '$ which evaluates the polynomials $H(T,U,\xi_1),\dots,H(T,U,\xi_K)$.

The join $\gamma*\gamma '$ of $\gamma '$ with $\gamma$ at the basic parameter nodes of $\gamma$ is an essentially division--free robust parameterized circuit with domain of definition $\A^{n+1}$ which by (\ref{equ (*)}) evaluates the polynomial $F(T,U,Y):=F^{(n)}(T,U,Y)$. The entries of the vector $\tilde{\nu}:=\psi\circ\nu$ constitute the essential parameters of the circuit $\gamma*\gamma'$ and the entries of $\omega\circ\tilde{\nu}=\omega\circ\psi\circ\nu$ become by  (\ref{equ (*)}) the coefficients of the polynomial $F(T,U,Y)$ with respect to $Y$. So we have $\varphi=\omega\circ\tilde{\nu}$.

Taking into account $\tilde{\nu}=\psi\circ\nu=\psi\circ\sigma\circ\theta$, Theorem--Definition \ref{theorem-definition} $(i)$ and \cite{GHMS09}, Corollary 12 we conclude that the entries of $\tilde{\nu}$ are polynomials of $\C[T,U]$ which are integral over the local $\C$--subalgebra $\C[\theta]_{\mathfrak{M}}$ of $\C(T,U)$.

Let $\mu\in\C[T,U]$ be such an entry. Then there exists an integer $s$ and polynomials $a_0,a_1,\dots,a_s \in\C[\theta]$ with $a_0\notin\mathfrak{M}$ such that the algebraic dependence relation
\begin{equation}
\label{(*)}
a_0\mu^s + a_1\mu^{s-1}+\dots+ a_s =0	
\end{equation}
is satisfied in $\C[T,U]$. From \eqref{(*)} we deduce the algebraic dependence relation
\begin{equation}
\label{(**)}
a_0(0,U)\mu(0,U)^s +a_1(0,U)\mu(0,U)^{s-1} +\dots+a_s(0,U) =0	
\end{equation}
in $\C[U]$.

Since the polynomials $a_0,a_1,\dots,a_s$ belong to $\C[\theta]$ and $\theta(0,U)$ is a fixed point of $\A^{2^n}$, we conclude that $\alpha_0:=a_0(0,U),\alpha_1 := a_1(0,U), \dots , \alpha_s:= a_s(0,U)$ are complex numbers. Moreover, $a_0\notin\mathfrak{M}$ implies $\alpha_0\neq 0$. 

Thus \eqref{(**)} may be rewritten into the algebraic dependence relation
\begin{equation}
\label{(***)}
\alpha_0\mu(0,U)^s +\alpha_1 \mu(0,U)^{s-1} +\dots+ \alpha_s =0
\end{equation}
in $\C[U]$ with $\alpha_0\neq 0$.

This implies that the polynomial $\mu(0,U)$ of $\C[U]$ is of degree at most zero. Therefore $w:=\tilde{\nu}(0,U)$ is a fixed point of the affine space $\A^m$. 

Recall that $\lambda=(\lambda_\kappa)_{1\leq \kappa\leq 2^n}$ with $\lambda_\kappa:=\varphi_\kappa(0,U)$, $1\leq \kappa\leq 2^n$, is a fixed point of the affine space $\A^{2^n}$.

%For $1\leq \kappa\leq 2^n$ we may write the polynomial $\varphi_\kappa\in\C[T,U]$ as follows:
%\begin{equation}
%\label{(****)}
%\varphi_\kappa=\lambda_\kappa + \Delta_\kappa T + \text{\ terms\ of\ higher\ degree\ in\ } T	
%\end{equation}

%with $\Delta_\kappa\in\C[U]$. 

From \cite{CaGiHeMaPa03}, Lemma 6 we deduce that for $1 \leq \kappa \leq 2^n$ the coefficient $\varphi_\kappa$ of $F$ is an element of $\C[T,U]$ of the form
\begin{equation}
\label{(****)}
\varphi_\kappa=\lambda_\kappa + T L_\kappa +\text{ terms of higher degree in $T$}
\end{equation}

%
%$$B_l = (-1)^l \sum_{l\leq j_1<\dots <j_l< 2^n} j_1\dots j_l +TL_l+\text{\ terms\ of\ higher\ degree\ in\ }T,$$

where $L_1,\dots,L_{2^n} \in \C[U]$ are $\C$--linearly independent.

%Choose now different complex numbers $\eta_1,\dots,\eta_{2^n}$ from $\C$ and observe that for $1\leq \kappa'\leq 2^n$ the identities
%
%$$\frac{\partial F}{\partial T} (0,U,\eta_{\kappa '}) =
%\sum_{1 \leq l\leq 2^n} L_l \eta_{\kappa '}^{2^n-l}$$
%and 
%$$\frac{\partial F}{\partial T} (0,U,\eta_{\kappa '}) =  \sum_{1\leq \kappa \leq 2^n} \Delta_\kappa \eta_{\kappa '}^{2^n - \kappa}$$
%hold.

%Since $L_1,\dots,L_{2^n}$ are $\C$--linearly independent, we deduce from the non--singularity of the Vandermonde matrix $(\eta_{\kappa '}^{2^n - l})_{1\leq l,\kappa'\leq 2^n}$ that the polynomials $\Delta_1,\dots,\allowbreak \Delta_{2^n}$ of $\C[U]$ are $\C$--linearly independent.

Consider now an arbitrary point $u\in\A^n$ and let $\epsilon_u:\A^1\to\A^m$ and $\delta_u:\A^1\to\A^{2^n}$ be the polynomial maps defined for $t\in\A^1$ by $\epsilon_u(t):=\tilde{\nu}(t,u)$ and $\delta_u(t):=\varphi(t,u)$. Then we have $\epsilon_u(0)=\tilde{\nu}(0,u)=w$ and $\delta_u(0)=\varphi(0,u)=\lambda$, independently of $u$. Moreover, from $\varphi=\omega\circ\tilde{\nu}$ we deduce $\delta_u=\omega\circ\epsilon_u$.%80

%Consider now an arbitrary point $u\in\A^n$ and let $\epsilon_u:\A^1\to\A^m$ and $\delta_u:\A^1\to\A^{2^n q}$ be the polynomial maps defined for $t\in\A^1$ by $\epsilon_u(t):=\nu^*(t,u)$ and $\delta_u(t):=\varphi(t,u)$. Then we have $\epsilon_u(0)=\nu^*(0,u)=w$ and $\delta_u(0)=\varphi(0,u)=\lambda$, independently of $u$. Moreover, from $\varphi=\omega\circ\nu^*$ we deduce $\delta_u=\omega\circ\epsilon_u$.%80

Thus \eqref{(****)} implies
\begin{equation}
\label{(+)}
(L_1(u),\dots,L_{2^n }(u))=\frac{\partial\varphi}{\partial t}(0,u)=\delta_u'(0)=(D\omega)_w(\epsilon_u'(0)),	
\end{equation}

where $(D \omega)_w$ denotes the (first) derivative of the $m$--variate polynomial map $\omega$ at the point $w\in\A^m$ and $\delta_u'(0)$ and $\epsilon_u'(0)$ are the derivatives of the parameterized curves $\delta_u$ and $\epsilon_u$ at the point $0\in\A^1$. We rewrite now \eqref{(+)} in matrix form, replacing $(D\omega)_w$ by the corresponding transposed Jacobi matrix $M\in\A^{m\times 2^n }$ and $\delta_u'(0)$ and $\epsilon_u'(0)$ by the corresponding points of $\A^{2^n }$ and $\A^m$, respectively. 

%where $(D \omega)_w$ denotes the (first) derivative of the $m$--variate polynomial map $\omega$ at the point $w\in\A^m$ and $\delta_u'(0)$ and $\epsilon_u'(0)$ are the derivatives of the parameterized curves $\delta_u$ and $\epsilon_u$ at the point $0\in\A^1$. We rewrite now \eqref{(+)} in matrix form, replacing $(D\omega)_w$ by the corresponding transposed Jacobi matrix $M\in\A^{m\times 2^n q}$ and $\delta_u'(0)$ and $\epsilon_u'(0)$ by the corresponding points of $\A^{2^n q}$ and $\A^m$, respectively. 

Then \eqref{(+)} takes the form
\begin{equation}
\label{(++)}
(L_1(u),\dots,L_{2^n }(u))= \epsilon_u'(0)M,	
\end{equation}
where the complex $(m\times 2^n )$--matrix $M$ is independent of $u$.

Since the polynomials $L_1,\dots,L_{2^n}\in\C[U]$ are $\C$--linearly independent, we may choose points $u_1,\dots,u_{2^n}\in\A^n$ such that the complex $(2^n \times 2^n)$--matrix  
$$N:=(L_{\kappa}(u_l))_{1\leq l,\kappa \leq 2^n}$$ 
has rank $2^n$.

Let $K$ be the complex $(2^n\times m)$--matrix whose rows are $\epsilon_{u_1}'(0),\dots,\epsilon_{u_{2^n}}'(0)$. 

Then \eqref{(++)} implies the matrix identity 
$$N=K\cdot M.$$
Since $N$ has rank $2^n$, the rank of the complex $(m\times 2^n)$--matrix $M$ is at least $2^n$. This implies
\begin{equation}
\label{(+++)}
m\geq 2^n.	
\end{equation}
Therefore the circuit $\gamma$ contains $m\geq 2^n$ essential parameters.

Let $L$ be the number of essential multiplications executed by the parameterized arithmetic circuit $\gamma$ and let $L'$ be the total number of multiplications of $\gamma$, excepting those by scalars from $\C$. Then, after a well--known standard rearrangement \cite{PS73} of $\gamma$, we may suppose without loss of generality, that there exists a constant $c>0$ (independent of the input circuit $\gamma$) such that $L\geq cm^{\frac{1}{2}}$ and $L'\geq cm$ holds.

Let $L$ be the number of essential multiplications executed by the parameterized arithmetic circuit $\gamma$ and let $L'$ be the total number of multiplications of $\gamma$, excepting those by scalars from $\C$. Then, after a well--known standard rearrangement \cite{PS73} of $\gamma$, we may suppose without loss of generality, that there exists a constant $c>0$ (independent of the input circuit $\gamma$ and the procedure $\mathcal{A}$) such that $L\geq cm^{\frac{1}{2}}$ and $L'\geq cm$ holds.

From the estimation \eqref{(+++)} we deduce now that the circuit $\gamma$ performs at least $\Omega(2^{\frac{n}{2}})$ essential multiplications and at least $\Omega(2^n)$ multiplications, including also multiplications with parameters. This finishes the proof of the theorem.
\end{proof}

\enter

Theorem \ref{theorem model independent} is essentially contained in the arguments of the proof of \cite{GH01}, Theorem 5 and \cite{CaGiHeMaPa03}, Theorem 4.

Observe that a quantifier--free description of $\mathcal{M}$ by means of circuit represented polynomials, together with an essentially division--free, robust parameterized arithmetic circuit $\gamma$ with domain of definition $\mathcal{M}$, which evaluates the elimination polynomial $P^{(n)}$ captures the intuitive meaning of an algorithmic solution of the elimination problem described by the formula (\ref{equ (**)}), when we restrict our attention to solutions of this kind and minimize the number of equations and branchings. In particular the circuit $\gamma$ can be evaluated for any input point $(s,y)$ with $s\in\mathcal{M}$ and $y\in\C$ and the intermediate results of $\gamma$ are polynomials of $\C(\ol{\mathcal{M}})[Y]$ whose coefficients are geometrically robust constructible functions defined on $\mathcal{M}$.

%doc1

With respect to the indeterminate $Y$, the coefficients of the polynomial $P^{(n)}\in\C(\ol{\M})[Y]$ are geometrically robust constructible functions of the parameter domain $\M$. In order to consider $P^{(n)}$ as an elimination polynomial as we did, the reader might expect that the coefficients of $P^{(n)}$ should belong, for any point $s\in\M$, to the local ring of $\ol{\M}$ at $s$. This would be true if the algebraic variety $\ol{\M}$ would be normal at any $s \in \M$ (see \cite{GHMS09}, Corollary 12). From \cite{CaGiHeMaPa03}, Corollary 3 we deduce that the variety $\ol{\M}$ is definitely not normal. This leads us to the question how elimination polynomials should look like when the closure of the parameter domain is not normal. 

In order to elucidate this question we shall consider the following general situation. It turns out that the requirement that the coefficients of elimination polynomials should be geometrically robust constructible functions is quite natural. 

%doc2
\enter

Let $\varphi:V\to W$ be a finite surjective morphism of irreducible affine varieties $V$ and $W$ over $\C$ such that there exists a coordinate function $y\in\C[V]$ with $\C[V]=\C[W][y]$. Let $d:=[\C(V):\C(W)]$ be the degree of $\varphi$ and suppose that for any point $w\in W$ the cardinality of the fiber $\varphi^{-1}(w)$ is exactly $d$. Finally, let $Y$ be a new indeterminate and $F:=Y^d+\varphi_{d-1}Y^{d-1}+\dots+\varphi_0 \in\C(W)[Y]$ the minimal polynomial of $y$. Observe that coefficients of $F$, namely $\varphi_0,\dots,\varphi_{d-1}\in\C(W)$, are integral over $\C[W]$. We are now going to discuss a condition under which $F$ may be considered as an elimination polynomial. This condition will imply that $\varphi_0,\dots,\varphi_{d-1}$ are geometrically robust constructible functions.

We shall use the following abbreviations:\\
$A:=\C[W]$, $B:=A[\varphi_0,\dots,\varphi_{d-1}]$, $C:=A[y]$, $D:=B[y]$. We have the following commutative diagram of integral $\C$--algebra extensions:\\

%\begin{center}
%   \begin{psmatrix}[colsep=0.8,rowsep=0.1]
%                  & [name=B] $B$  &            \\
%     [name=A] $A$ &               & [name=D] $D$  \\
%                  & [name=C] $C$  &            \\
%  	 \ncline[nodesep=2pt]{->}{A}{B}
%		 \ncline[nodesep=2pt]{->}{A}{C}
%		 \ncline[nodesep=2pt]{->}{B}{D}
%		 \ncline[nodesep=2pt]{->}{C}{D}
%		 \end{psmatrix}
%\end{center}

\[
\begin{diagram}
\node{}  \node{B} \arrow{se}{} \node{}\\
\node{A} \arrow{se}{} \arrow{ne}{} \node{}  \node{D}\\
\node{}  \node{C} \arrow{ne}{}\node{}\\
\end{diagram}
\]
Observe that $D$ is isomorphic to $B[Y]/_{B[Y]\cdot F}$ and in particular a free $B$--module of rank $d$.

\begin{proposition}
Suppose that for any maximal ideal $\mathfrak{m}$ of $A$ the canonical $\C$--algebra homomorphism $A/_{\mathfrak{m}}\to C/_{\mathfrak{m}C}$ is unramified (\cite{Iversen73}, Chapter I) and that $\mathfrak{m}$ is contained in at most $d$ maximal ideals of $D$ (thus, intuitively, $F$ is an elimination polynomial). Then $\varphi_0,\dots,\varphi_{d-1}$ are geometrically robust constructible functions of $W$.
\end{proposition}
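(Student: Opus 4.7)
Plan. I propose to prove the stronger statement that $\varphi_0,\dots,\varphi_{d-1}\in A=\C[W]$, i.e., that they are actually regular (polynomial) functions on $W$. This will imply geometric robustness since polynomial functions are geometrically robust by the Corollary in Section~\ref{sec: geometry-robust}. Interestingly, this argument will not need the hypothesis on the maximal ideals of $D$: under the standing fibre--cardinality assumption, the unramifiedness hypothesis alone already forces the $\varphi_i$ to be regular.

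The first step is to establish that the fibre dimension $\mathfrak{p}\mapsto\dim_{k(\mathfrak{p})} C\otimes_A k(\mathfrak{p})$ is constantly $d$ on $\mathrm{Spec}(A)$. At a maximal ideal $\mathfrak{m}$, unramifiedness forces $C/\mathfrak{m}C$ to be a product of copies of $\C$, one for each point of $\varphi^{-1}(w)$; combined with the standing assumption of fibre cardinality $d$ this gives $\dim_{\C} C/\mathfrak{m}C=d$. At the generic point, $C\otimes_A \C(W)=\C(V)$ has dimension $d$ over $\C(W)$. Upper semicontinuity of the fibre dimension function, together with the fact that $\mathrm{Spec}(A)$ is Jacobson (any non-empty closed subset contains a closed point), then forces the fibre dimension to be identically $d$ on all of $\mathrm{Spec}(A)$.

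The second step derives local freeness. For each prime $\mathfrak{p}$ of $A$, Nakayama's lemma provides a surjection $\pi\colon A_{\mathfrak{p}}^{d}\twoheadrightarrow C_{\mathfrak{p}}$. Tensoring the resulting exact sequence $0\to\ker\pi\to A_{\mathfrak{p}}^{d}\to C_{\mathfrak{p}}\to 0$ with $\C(W)$ over $A_{\mathfrak{p}}$ yields a surjection $\C(W)^d\twoheadrightarrow \C(V)$ between two $d$-dimensional $\C(W)$-vector spaces, which must be an isomorphism; hence $\ker\pi$ is $A_{\mathfrak{p}}$-torsion. But $\ker\pi$ embeds in the free $A_{\mathfrak{p}}$-module $A_{\mathfrak{p}}^{d}$ over the domain $A_{\mathfrak{p}}$ and is therefore torsion-free, forcing $\ker\pi=0$. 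So $C_{\mathfrak{p}}\cong A_{\mathfrak{p}}^{d}$ is free of rank $d$.

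The third step applies Cayley--Hamilton: multiplication by $y$ is an $A_{\mathfrak{m}_w}$-linear endomorphism of the free rank-$d$ module $C_{\mathfrak{m}_w}$, with monic characteristic polynomial $\chi_w\in A_{\mathfrak{m}_w}[Y]$ of degree $d$ satisfying $\chi_w(y)=0$ in $\C(V)$. Since the minimal polynomial $F\in\C(W)[Y]$ of $y$ has degree $d$ and is monic, $F$ divides $\chi_w$ and equality of degrees forces $F=\chi_w\in A_{\mathfrak{m}_w}[Y]$. As $A$ is a Noetherian domain, $A=\bigcap_w A_{\mathfrak{m}_w}$, so $\varphi_0,\ldots,\varphi_{d-1}\in A$ as claimed. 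The main subtlety lies in the first step, in which one must combine the algebraic facts at closed points with upper semicontinuity and the Jacobson property to promote constancy of the fibre dimension from closed points to all primes; the remaining steps are then routine.
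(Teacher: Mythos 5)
Your proof is correct, and it takes a genuinely different route from the paper's -- in fact it establishes a stronger conclusion from weaker hypotheses. The paper argues directly on the algebra $B=A[\varphi_0,\dots,\varphi_{d-1}]$: from unramifiedness it produces a monic $G\in A[Y]$ of degree $d$ with discriminant $\rho\notin\mathfrak{m}$, uses $D\cong B[Y]/(F)$ to conclude that the reduction $\ol{F}=\ol{G}$ modulo any maximal ideal $\mathfrak{N}$ of $B$ over $\mathfrak{m}$ is separable, so that exactly $d$ maximal ideals of $D$ lie over $\mathfrak{N}$; the second hypothesis (at most $d$ maximal ideals of $D$ over $\mathfrak{m}$) then forces $\mathfrak{N}$ to be the unique maximal ideal of $B$ over $\mathfrak{m}$, and geometric robustness follows from the Theorem--Definition via integrality of $B$ over $A$. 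You instead note that unramifiedness together with the standing assumption $\#\varphi^{-1}(w)=d$ gives $\dim_{\C}C/\mathfrak{m}C=d$ at every maximal ideal, which combined with generic rank $d$ and your Nakayama/torsion argument makes $C_{\mathfrak{m}}$ free of rank $d$ over $A_{\mathfrak{m}}$ (so $\varphi$ is flat, indeed \'etale), and Cayley--Hamilton then identifies $F$ with the characteristic polynomial of multiplication by $y$, giving $\varphi_i\in A_{\mathfrak{m}}$ for every $\mathfrak{m}$ and hence $\varphi_i\in A=\C[W]$. This is stronger than the stated conclusion (robustness follows since polynomial functions are geometrically robust, Corollary \ref{proposition 1}) and shows that, under the standing fibre-cardinality assumption, the hypothesis on the maximal ideals of $D$ is redundant -- consistently so, since with $B=A$ and $D=C$ it is automatically satisfied. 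Two minor remarks: your semicontinuity/Jacobson step is superfluous, because only the maximal ideals (where the fibre dimension is computed directly) and the generic point enter the argument; and while your conclusion is correct, it somewhat deflates the paper's motivating point here, namely that over a non-normal parameter domain elimination coefficients should only be expected to be robust rather than regular -- under the proposition's hypotheses they turn out to be regular, whereas the paper's argument extracts only the weaker robustness property.
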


\begin{proof}

Let $\mathfrak{m}$ be an arbitrary maximal ideal of $A$. Since $A \to B$ is an integral ring extension, we deduce from Theorem--Definition \ref{theorem-definition} that it suffices to show that there exists a single maximal ideal $\mathfrak{N}$ of $B$ which contains $\mathfrak{m}$. Our assumptions yield a commutative diagram 

%\begin{center}
%   \begin{psmatrix}[colsep=0.8,rowsep=0.1]
%     [name=A] $A/_{\mathfrak{m}}$    &   &  [name=C] $C/_{\mathfrak{m}C}$ \\%
%	&	&	\\
%     [name=A2] $\C$                &   &  [name=C2] $\C^d$.       \\
%  	 \ncline[nodesep=2pt]{->}{A}{C}
%		 \ncline[nodesep=6pt]{->}{A2}{C2}		 \ncline[nodesep=10pt,linecolor=white]{A}{A2}\mput*{\rotatebox{270}{\scalebox{2}[1]{$\cong$}}}
%\ncline[nodesep=10pt,linecolor=white]{C}{C2}\mput*{\rotatebox{270}{\scalebox{2}[1]{$\cong$}}}
%		 \end{psmatrix}
%\end{center}

\[
\begin{diagram}
\node{A/_{\mathfrak{m}}} \arrow[2]{e}{}  \node{} \node{C/_{\mathfrak{m}C}}\\
\node{\rotatebox{270}{\scalebox{2}[1]{$\cong$}}}  \node{}  \node{\rotatebox{270}{\scalebox{2}[1]{$\cong$}}}\\
\node{\C} \arrow[2]{e}{}  \node{} \node{\C^d}\\
\end{diagram}
\]

Taking into account $C=A[y]$ we conclude that there exists a monic polynomial $G\in A[Y]$ of degree $d$ with discriminant $\rho\in A$ such that $C/_{\mathfrak{m}C}$ is isomorphic to $A[Y]$ divided by the ideal generated $\mathfrak{m}$ and $G$ and such that $\rho$ does not belong to $\mathfrak{m}$.

Let $\mathfrak{N}$ be an arbitrary maximal ideal of $B$ which contains $\mathfrak{m}$ and let $\ol{F}$ and $\ol{G}$ be the images of $F$ and $G$ in $B/_{\mathfrak{N}}[Y]$. Then we have $D/_{\mathfrak{N}D} \cong B/_{\mathfrak{N}}[Y]/_{B/_{\mathfrak{N}}[Y]\cdot \ol{F}}$ and therefore $\ol{F}$ divides $\ol{G}$ in $B/_{\mathfrak{N}}[Y]$. From $d=\text{deg}\ol{F}=\text{deg}\ol{G}$ and the fact that $F$ and $G$ are monic we deduce $\ol{F}=\ol{G}$. Since the discriminant $\rho$ of $G$ does not belong to $\mathfrak{m}$ we have $\rho \notin \mathfrak{N}$ and therefore the polynomial $\ol{F}$ is separable. Thus we obtain a commutative diagram 
\[
\begin{diagram}
\node{B/_{\mathfrak{N}}} \arrow[2]{e}{}  \node{} \node{D/_{\mathfrak{N}D}}\\
\node{\rotatebox{270}{\scalebox{2}[1]{$\cong$}}}  \node{}  \node{\rotatebox{270}{\scalebox{2}[1]{$\cong$}}}\\
\node{\C} \arrow[2]{e}{}  \node{} \node{\C^d}\\
\end{diagram}
\]

and in particular the canonical $\C$--algebra homomorphism $B/_{\mathfrak{N}} \to D/_{\mathfrak{N}D}$ is unramified. Hence the number of maximal ideals of $D$ which contain $\mathfrak{N}$ is exactly $d$. By assumption there are at most $d$ maximal ideals of $D$ containing $\mathfrak{m}$. Therefore any such ideal must contain $\mathfrak{N}$. Since $B\to D$ is an integral ring extension, we conclude that $\mathfrak{N}$ is the unique maximal ideal of $B$ which contains $\mathfrak{m}$. 
\end{proof}

   \section{A computation model with robust parameterized arithmetic circuits}
\label{A computation model with robust parameterized arithmetic circuits}

This section is devoted to a deeper understanding of the assumptions which lead to Theorem \ref{theorem model independent}. To this end we introduce a computation model which will be comprehensive enough to capture the essence of all known circuit based elimination algorithms in effective algebraic geometry and, mutatis mutandis, also of all other (linear algebra and truncated rewriting) elimination procedures (see \cite{Mora03}, \cite{Mora05}, and the references cited therein, and for truncated rewriting methods especially \cite{Dickens}).

The elimination problem and polynomial of Section \ref{independent of the model} were somewhat artificial. We shall show that the conclusions of Theorem \ref{theorem model independent} are still valid for much more natural elimination problems and polynomials if we restrict the notion of algorithm to the computation model we are going to introduce in this section.

The routines of our computation model will transform a given robust parameterized arithmetic (input) circuit into another robust parameterized arithmetic (output) circuit such that both circuits have the same parameter domain.

In the sequel, we shall use ordinary arithmetic circuits over $\C$ as \emph{generic computations} \cite{Burgisser97} (also called \emph{computation schemes} in \cite{Hei89}). The indegree zero nodes of these arithmetic circuits are labelled by scalars and parameter and input variables. 

The aim is to represent different parameterized arithmetic circuits of similar size and appearance by different specializations (i.e., instantiations) of the parameter variables in one and the same generic computation. For a suitable specialization of the parameter variables, the original parameterized arithmetic circuit may then be recovered by an appropriate reduction process applied to the specialized generic computation.

This alternative view of parameterized arithmetic circuits will be fundamental for the design of routines of the computation model we are going to describe in this section. The routines of our computation model will operate on robust parameterized arithmetic circuits and their basic ingredients will be subroutines which calculate parameter instances of suitable, by the model previously fixed, generic computations. These generic computations will be organized in finitely many families which will only depend on a constant number of discrete parameters. These discrete families constitute the basic building block of our computation model (see \cite{HKR11}, Section 3.2 for details about generic computation).

\enter

In the sequel we shall distinguish sharply between the notions of input variable and parameter and the corresponding categories of circuit nodes.

Input variables, called ``standard'', will occur in parameterized arithmetic circuits and generic computations. The input variables of generic computations will appear subdivided in three sorts, namely as ``parameter'', ``argument'' and ``standard'' input variables.

The computation model we are going to introduce now will assume different \emph{shapes}, each shape being determined by a finite number of a priori given \emph{discrete} (i.e., by tuples of natural numbers indexed) families of generic computations. The labels of the inputs of the ordinary arithmetic circuits which represent these generic computations will become subdivided into \emph{parameter}, \emph{argument} and \emph{standard} input variables. We shall use the letters like $U,U',U'',\dots$ and $W,W',W''$ to denote vectors of parameters, $Y,Y',Y'',\dots$ and $Z,Z',Z''$ to denote vectors of argument and $X,X',X'',\dots$ to denote vectors of standard input variables. 

We shall not write down explicitly the indexings of our generic computations by tuples of natural numbers. Generic computations will simply be distinguished by subscripts and superscripts, if necessary.

Ordinary arithmetic circuits of the form
\begin{center}
\begin{tabular}{l l l}
$R_{X_1}(W_{1};X^{(1)})$, 		& $R_{X_2}(W_{2};X^{(2)})$, 		& $\dots$ \\ 
$R_{X_1}'(W_{1'};X^{(1')})$, 	& $R_{X_2}'(W_{2'};X^{(2')})$, 	& $\dots$ \\ 
$\dots$ 											& $\dots$ 											& $\dots$ \\ 
\end{tabular}
\end{center}
represent a first type of a discrete family of generic computations (for each variable $X_1,X_2,\dots,X_n$ we suppose to have at least one generic computation). Other types of families of generic computations are of the form
\begin{center}
\begin{tabular}{l l l l}
$R_+(W;U,Y;X)$, 		& $R_+'(W';U',Y';X')$, 		& $R_+''(W'';U'',Y'';X'')$	 & $\dots$ \\ 
$R_{._/}(W;U,Y;X)$, & $R_{._/}'(W';U',Y';X')$, & $R_{._/}''(W'';U'',Y'';X'')$	 &$\dots$ \\ 
$R_{add}(W;Y,Z;X)$, & $R_{add}'(W';Y',Z';X')$, & $R_{add}''(W'';Y'',Z'';X'')$	 &$\dots$ \\ 
$R_{mult}(W;Y,Z;X)$,& $R_{mult}'(W';Y',Z';X')$,& $R_{mult}''(W'';Y'',Z'';X'')$	 & $\dots$ \\ 
\end{tabular}
\end{center}
and
\begin{center}
\begin{tabular}{l l l l}
$R_{div}(W;Y,Z;X)$, & $R_{div}'(W';Y',Z';X')$, & $R_{div}''(W'';Y'',Z'';X'')$	& $\dots$. \\ 
\end{tabular}
\end{center}

Here the subscripts refer to addition of, and multiplication or division by a parameter (or scalar) and to essential addition, multiplication and division. A final type of families of generic computations is of the form
$$R(W;Y;X),\ws R'(W';Y';X'),\ws R''(W'';Y'';X''),\dots$$ 
The inputs of the circuits handled by our computation model will only consist of standard variables.

From now on we have in mind a previously fixed shape when we refer to the computation model we are introducing. We start with a given finite set of discrete families of generic computations which constitute a shape as described before.

%falta algo?

\subsection[The notions of well behavedness under restrictions, isoparametricity and well behavedness under reductions]{The notions of well behavedness under restrictions,\\ isoparametricity and well behavedness under reductions}
\label{The notions of well behavedness under restrictions, isoparametricity and well behavedness under reductions}

A fundamental issue is how we recursively transform a given input circuit into another one with the same parameter domain. During such a transformation we make an iterative use of previously fixed generic computations. On their turn these determine the corresponding \emph{recursive routine} of our branching--free computation model.

We consider again our input circuit $\beta$. We suppose that we have already chosen for each node $\rho$, which depends at least on one of the input variables $X_1,\dots,X_n$, a generic computation 
$$R^{(\rho)}_{X_i}(W_{\rho};X^{(\rho)}),$$
$$R^{(\rho)}_{+}(W_{\rho};U_{\rho},Y_{\rho};X^{(\rho)}),$$
$$R^{(\rho)}_{._/}(W_{\rho};U_{\rho},Y_{\rho};X^{(\rho)}),$$
$$R^{(\rho)}_{add}(W_{\rho};Y_{\rho},Z_{\rho};X^{(\rho)}),$$
$$R^{(\rho)}_{mult}(W_{\rho};Y_{\rho},Z_{\rho};X^{(\rho)}),$$ 
$$R^{(\rho)}_{div}(W_{\rho};Y_{\rho},Z_{\rho};X^{(\rho)}),$$

and that this choice was made according to the label of $\rho$, namely $X_i, 1\leq i\leq n$, or addition of, or multiplication or division by an essential parameter, or essential addition, multiplication or division. Here we suppose that $U_{\rho}$ is a single variable, whereas $W_{\rho},Y_{\rho},Z_{\rho}$ and $X^{(\rho)}$ may be arbitrary vectors of variables.

Furthermore, we suppose that we have already precomputed for each node $\rho$ of $\beta$, which depends at least on one input, a vector $w_{\rho}$ of geometrically robust constructible functions defined on $\mathcal{M}$. If $\rho$ is an input node we assume that $w_{\rho}$ is a vector of complex numbers. Moreover, we assume that the length of $w_{\rho}$ equals the length of the variable vector $W_{\rho}$. We call the entries of $w_{\rho}$ the \emph{parameters at the node $\rho$} of the routine $\mathcal{A}$ applied to the input circuit $\beta$. 

We are now going to develop the routine $\mathcal{A}$ step by step. The routine $\mathcal{A}$ takes over all computations of $\beta$ which involve only parameter nodes, without modifying them. 

Consider an arbitrary internal node $\rho$ of $\beta$ which depends at least on one input. The node $\rho$ has two ingoing edges which come from two other nodes of $\beta$, say $\rho_1$ and $\rho_2$. Suppose that the routine $\mathcal{A}$, on input $\beta$, has already computed two results, namely $F_{\rho_1}$ and $F_{\rho_2}$, corresponding to the nodes $\rho_1$ and $\rho_2$. Suppose inductively that these results are vectors of polynomials depending on those standard input variables that occur in the vectors of the form $X^{(\rho')}$, where $\rho'$ is any predecessor node of $\rho$. Furthermore, we assume that the coefficients of these polynomials constitute the entries of a geometrically robust, constructible map defined on $\mathcal{M}$. Finally we suppose that the lengths of the vectors $F_{\rho_1}$ and $Y_{\rho}$ (or $U_{\rho}$) and $F_{\rho_2}$ and $Z_{\rho}$ coincide. 

The parameter vector $w_{\rho}$ of the routine $\mathcal{A}$ forms a geometrically robust, constructible map defined on $\mathcal{M}$, whose image we denote by $\mathcal{K}_{\rho}$. Observe that $\mathcal{K}_{\rho}$ is a constructible subset of the affine space of the same dimension as the length of the vectors $w_{\rho}$ and $W_{\rho}$. Denote by $\kappa_{\rho}$ the vector of the restrictions to $\mathcal{K}_{\rho}$ of the canonical projections of this affine space. We consider $\mathcal{K}_{\rho}$ as a new parameter domain with basic parameters $\kappa_{\rho}$. For the sake of simplicity we suppose that the node $\rho$ is labelled by an essential multiplication. Thus the corresponding generic computation has the form
%\begin{equation}
%\label{(1)}
%R^{(\rho)}_{._/}(W_{\rho};U_{\rho},Y_{\rho};X^{(\rho)})	
%\end{equation}
%or
\begin{equation}
\label{(2)}
R^{(\rho)}_{mult}(W_{\rho};Y_{\rho},Z_{\rho};X^{(\rho)}).	
\end{equation}

Let the specialized generic computation 
$$R_{mult}^{(\rho)}(\kappa_{\rho},Y_{\rho},Z_{\rho},X^{(\rho)})$$
be the by $\mathcal{K}_{\rho}$ parameterized arithmetic circuits obtained by substituting in the generic computation \eqref{(2)} for the vector of parameter variables $W_{\rho}$ the basic parameters $\kappa_{\rho}$. At the node $\rho$ we shall now make the following requirement on the routine $\mathcal{A}$ applied to the input circuit $\beta$:
\textit{\begin{enumerate}
	\item[(A)]  The by $\mathcal{K}_{\rho}$ parameterized arithmetic circuit of $R^{(\rho)}_{mult}(\kappa_{\rho};Y_{\rho},Z_{\rho};X^{(\rho)}),$ should be consistent and robust.
\end{enumerate}}
 
Observe that the requirement $(A)$ is automatically satisfied if all the generic computations of our shape are realized by totally division--free ordinary arithmetic circuits. 

Assume now that the routine $\mathcal{A}$ applied to the circuit $\beta$ satisfies the requirement $(A)$ at the node $\rho$ of $\beta$.

Recall that we assumed that the node $\rho$ is labelled by an essential multiplication and that the vectors $F_{\rho_1}$ and $Y_{\rho}$ and $F_{\rho_2}$ and $Z_{\rho}$ have the same length. Joining with the generic computation 
$$R^{(\rho)}_{mult}(W_{\rho};Y_{\rho},Z_{\rho};X^{(\rho)})$$ 
at $W_{\rho}, Y_{\rho}$ and $Z_{\rho}$ the previous computations of $w_{\rho}, F_{\rho_1}$ and $F_{\rho_2}$ we obtain a parameterized arithmetic circuit with parameter domain $\mathcal{M}$, whose final results are the entries of a vector which we denote by $F_{\rho}$. 

One deduces easily from our assumptions on $w_{\rho},F_{\rho_1}$ and $F_{\rho_2}$ and from the requirement $(A)$ in combination with Lemma \ref{lemma intermediate results} and Corollary \ref{proposition 1}, that the resulting parameterized arithmetic circuit is robust if it is consistent. The other possible labellings of the node $\rho$ by arithmetic operations are treated similarly. In particular, in case that $\rho$ is an input node labelled by the variable $X_i, 1\leq i\leq n$, the requirement $(A)$ implies that the ordinary arithmetic circuit $R^{(\rho)}_{X_i}(w_{\rho};X^{(\rho)})$ is consistent and robust and that all its intermediate results are polynomials in $X^{(\rho)}$ over $\C$ (although $R^{(\rho)}_{X_i}(w_{\rho};X^{(\rho)})$ may contain divisions). 

We call the recursive routine $\mathcal{A}$ (on input $\beta$) \emph{well behaved under restrictions} if the requirement $(A)$ is satisfied at any node $\rho$ of $\beta$ which depends at least on one input and if joining the corresponding generic computation with $w_{\rho}$, $F_{\rho_1}$ and $F_{\rho_2}$ produces a consistent circuit (observe that this last condition is automatically satisfied when the specialized generic computation of $(A)$ is essentially division--free). If the routine $\mathcal{A}$ is well behaved under restrictions, then $\mathcal{A}$ transforms step by step the input circuit $\beta$ into another consistent robust arithmetic circuit, namely $\mathcal{A}(\beta)$, with parameter domain $\mathcal{M}$. 

As a consequence of the recursive structure of $\mathcal{A}(\beta)$, each node $\rho$ of $\beta$ generates a subcircuit of $\mathcal{A}(\beta)$ which we call the component of $\mathcal{A}(\beta)$ generated by $\rho$. The output nodes of each component of $\mathcal{A}(\beta)$ form the hypernodes of a hypergraph $\mathcal{H}_{\mathcal{A}(\beta)}$ whose hyperedges are given by the paths connecting the nodes of $\mathcal{A}(\beta)$ contained in distinct hypernodes of $\mathcal{H}_{\mathcal{A}(\beta)}$. The hypergraph $\mathcal{H}_{\mathcal{A}(\beta)}$ may be shrunk to the DAG structure of $\beta$ and therefore we denote the hypernodes of $\mathcal{H}_{\mathcal{A}(\beta)}$ in the same way as the nodes of $\beta$. Notice that well behavedness under restrictions is in fact a property which concerns the hypergraph $\mathcal{H}_{\mathcal{A}(\beta)}$. 

We call $\mathcal{A}$ a (recursive) \emph{parameter routine} if $\mathcal{A}$ does not introduce new standard variables. In the previous recursive construction of the routine $\mathcal{A}$, the parameters at the nodes of $\beta$, used for the realization of the circuit $\mathcal{A}(\beta)$, are supposed to be generated by recursive parameter routines.

We are now going to consider another requirement of our recursive routine $\mathcal{A}$, which will lead us to the notion of \emph{isoparametricity} of $\mathcal{A}$. 

Let us turn back to the previous situation at the node $\rho$ of the input circuit $\beta$. Notations and assumptions will be the same as before. From Lemma \ref{lemma intermediate results} we deduce that the intermediate result of $\beta$ associated with the node $\rho$, say $G_{\rho}$, is a polynomial in $X_1,\dots,X_n$ whose coefficients form the entries of a geometrically robust, constructible map defined on $\mathcal{M}$, say $\theta_{\rho}$. Let $\mathcal{T}_{\rho}$ be the image of this map and observe that $\mathcal{T}_{\rho}$ is a constructible subset of a suitable affine space. The intermediate results of the circuit $\mathcal{A}(\beta)$ at the elements of the hypernode $\rho$ of $\mathcal{H}_{\mathcal{A}(\beta)}$ constitute a polynomial vector which we denote by $F_{\rho}$.

We shall now make another requirement on the routine $\mathcal{A}$ at the node $\rho$ of $\beta$.%at the node $\rho$ on the routine $\mathcal{A}$ applied to the input circuit $\beta$:
\textit{\begin{enumerate}
	\item[(B)] There exists a geometrically robust, constructible map $\sigma_{\rho}$ defined on $\mathcal{T}_{\rho}$ such that $\sigma_{\rho}\circ \theta_{\rho}$ constitutes the coefficient vector of $F_{\rho}$.
\end{enumerate}}

In view of the comments made in \cite{HKR11}, Section 3.3.1 we call the recursive routine $\mathcal{A}$ \emph{isoparametric} (on input $\beta$) if requirements $(A)$ and $(B)$ are satisfied at any node $\rho$ of $\beta$ which depends at least on one input. 

\enter

Suppose again that the recursive routine $\mathcal{A}$ is well behaved under restrictions. We call $\mathcal{A}$ \emph{well behaved under reductions} (on input $\beta$) if $\mathcal{A}(\beta)$ satisfies the following requirement:

\begin{quote}
\textit{Let $\rho$ and $\rho'$ be distinct nodes of $\beta$ which compute the same intermediate results. Then the intermediate results at the hypernodes $\rho$ and $\rho'$ of $\mathcal{H}_{\mathcal{A}(\beta)}$ are identical. Mutatis mutandis the same is true for the computation of the parameters of $\mathcal{A}$ at any node of $\beta$.
}	
\end{quote}

Assume that the routine $\mathcal{A}$ is recursive and well behaved under reductions. One verifies then easily that, taking into account the hypergraph structure $\mathcal{H}_{\mathcal{A}(\beta)}$ of $\mathcal{A}(\beta)$, any reduction procedure on $\beta$ may canonically be extended to a reduction procedure of $\mathcal{A}(\beta)$.

It can be shown that under very light assumptions well behavedness under reductions implies isoparametricity. Since from the point of view of software architecture well behavedness under reductions is a well motivated quality attribute of recursive routines we see that isoparametricity is computationally meaningful concept. For details we refer to \cite{HKR11}, Section 3.3.2.3.

\subsection{Operations with recursive routines}

Let $\mathcal{A}$ and $\mathcal{B}$ be recursive routines as before and suppose that they are well behaved under restrictions and isoparametric or even well behaved under reductions. Assume that $\mathcal{A}(\beta)$ is an admissible input for $\mathcal{B}$. We define the composed routine $\mathcal{B}\circ\mathcal{A}$ in such a way that $(\mathcal{B}\circ\mathcal{A})(\beta)$ becomes the parameterized arithmetic circuit $\mathcal{B}(\mathcal{A}(\beta))$. Since the routines $\mathcal{A}$ and $\mathcal{B}$ are well behaved under restrictions, we see easily that $(\mathcal{B}\circ\mathcal{A})(\beta)$ is a consistent, robust parameterized arithmetic circuit with parameter domain $\mathcal{M}$. From Lemma \ref{lemma intermediate results} and Corollary \ref{proposition 1} we deduce that $\mathcal{B}\circ\mathcal{A}$ is a isoparametric recursive routine if $\mathcal{A}$ and $\mathcal{B}$ are isoparametric. In case that $\mathcal{A}$ and $\mathcal{B}$ are well behaved under reductions, one verifies immediately that $\mathcal{B}\circ\mathcal{A}$ is also well behaved under reductions. Therefore, under these assumptions, we shall consider $\mathcal{B}\circ\mathcal{A}$ also as a routine of our computation model. 
  
The identity routine is trivially well behaved under restrictions and reductions and in particular isoparametric.

Let $\mathcal{A}$ and $\mathcal{B}$ be two isoparametric recursive routines of our computation model. Assume that the robust parameterized arithmetic circuit $\beta$ is an admissible input for $\mathcal{A}$ and $\mathcal{B}$ and that there is given a one--to--one correspondence $\lambda$ which identifies the output nodes of $\mathcal{A}(\beta)$ with the input nodes of $\mathcal{B}(\beta)$. Often, for a given input circuit $\beta$, the correspondence $\lambda$ is clear by the context. If we limit ourselves to input circuits $\beta$ where this occurs, we obtain from $\mathcal{A}$ and $\mathcal{B}$ a new routine, called their \emph{join}, which transforms the input circuit $\beta$ into the output circuit $\mathcal{B}(\beta)*_{\lambda}\mathcal{A}(\beta)$ (here we suppose that $\mathcal{B}(\beta)*_{\lambda}\mathcal{A}(\beta)$ is consistent). Analyzing now $\mathcal{B}(\beta)*_{\lambda}\mathcal{A}(\beta)$, we see that the join of $\mathcal{A}$ with $\mathcal{B}$ is well behaved under restrictions in the most obvious sense. Since by assumption the routines $\mathcal{A}$ and $\mathcal{B}$ are recursive, the circuits $\mathcal{A}(\beta)$ and $\mathcal{B}(\beta)$ inherit from $\beta$ a superstructure given by the hypergraphs $\mathcal{H}_{\mathcal{A}(\beta)}$ and $\mathcal{H}_{\mathcal{B}(\beta)}$. Analyzing again this situation, we see that any reduction procedure on $\beta$ can be extended in a canonical way to the circuit $\mathcal{B}(\beta)*_{\lambda}\mathcal{A}(\beta)$. This means that the join of $\mathcal{A}$ with $\mathcal{B}$ is also well behaved under reductions if the same is true for $\mathcal{A}$ and $\mathcal{B}$. More caution is at order with the notion of isoparametricity. In a strict sense, the join of two isoparametric recursive routines $\mathcal{A}$ and $\mathcal{B}$ is not necessarily isoparametric. However, condition $(B)$ is still satisfied between the output nodes of $\beta$ and $\mathcal{B}(\beta)*_{\lambda}\mathcal{A}(\beta)$. A routine with this property is called \emph{output isoparametric}.

The \emph{union} of the routines $\mathcal{A}$ and $\mathcal{B}$ assigns to the input circuit $\beta$ the juxtaposition of $\mathcal{A}(\beta)$ and $\mathcal{B}(\beta)$. Thus, on input $\beta$, the final results of the union of $\mathcal{A}$ and $\mathcal{B}$ are the final results of $\mathcal{A}(\beta)$ and $\mathcal{B}(\beta)$ (taken separately in case of ambiguity). The union of $\mathcal{A}$ and $\mathcal{B}$ behaves well under restrictions and reductions and is isoparametric if the same is true for $\mathcal{A}$ and $\mathcal{B}$. 

Observe also that for a recursive routine $\mathcal{A}$ which behaves well under restrictions and reductions the following holds: let $\beta$ be a robust parameterized arithmetic circuit that broadcasts to a circuit $\beta^*$ and assume that $\beta$ and $\beta^*$ are admissible circuits for $\mathcal{A}$. Then $\mathcal{A}(\beta)$ broadcasts to $\mathcal{A}(\beta^*)$.

\subsection{Elementary routines}

From these considerations we conclude that routines, constructed as before by iterated applications of the operations isoparametric recursion, composition, join and union, are still, in a suitable sense, well behaved under restrictions and output isoparametric. If only recursive routines become involved that behave well under reductions, we may also allow broadcastings at the interface of two such operations.

This remains true when we introduce, as we shall do now, in our computational model the following additional type of routine construction.

Let $\beta$ be the robust, parameterized circuit considered before, and let $R(W;Y;X)$ be a generic computation belonging to our shape list. Let $w_{\beta}$ be a precomputed vector of geometrically robust constructible functions with domain of definition $\M$ and suppose that $w_{\beta}$ and $W$ have the same vector length and that the entries of $w_{\beta}$ are the final results of an output isoparametric parameter routine applied to the circuit $\beta$. Moreover suppose that the final results of $\beta$ form a vector of the same length as $Y$.

Let $\mathcal{K}$ be the image of $w_{\beta}$. Observe that $\mathcal{K}$ is a constructible subset of the affine space which has the same dimension as the vector length of $W$. Denote by $\kappa$ the vector of the restrictions to $\mathcal{K}$ of the canonical projections of this affine space. We denote by $R(\kappa;Y;X)$ the ordinary arithmetic circuit over $\C$ obtained by substituting in the generic computation $R(W;Y;X)$ the vector of parameter variables $W$ by $\kappa$. We shall now make the following requirement: 
\textit{
\begin{enumerate}
	\item[(C)] The ordinary arithmetic circuit $R(\kappa;Y;X)$ should be consistent and robust.
\end{enumerate}
}
Observe that requirement $(C)$ is obsolete when $R(W;Y;X)$ is a totally division--free ordinary arithmetic circuit.

Suppose now that requirement $(C)$ is satisfied. A new routine, say $\mathcal{B}$, is obtained in the following way: on input $\beta$ the routine $\mathcal{B}$ joins with the generic computation $R(W;Y;X)$ at $W$ and $Y$ the previous computation of $w_{\beta}$ and the circuit $\beta$. 

From Lemma \ref{lemma intermediate results} and Corollary \ref{proposition 1} we deduce that the resulting parameterized arithmetic circuit $\mathcal{B}(\beta)$ has parameter domain $\mathcal{M}$ and is robust if it is consistent. We shall therefore require that $\mathcal{B}(\beta)$ is consistent (this condition is automatically satisfied if $R(\kappa;Y;X)$ is essentially division--free). One sees immediately that the routine $\mathcal{B}$ is well behaved under restrictions and reductions and is output isoparametric.

From now on we shall always suppose that all our recursive routines are isoparametric and that requirement $(C)$ is satisfied when we apply this last type of routine construction.%56

An \emph{elementary routine} of our computation model is finally obtained by the iterated application of all these construction patterns, in particular the last one, isoparametric recursion, composition, join and union. As far as only recursion becomes involved that is well behaved under reductions, we allow also broadcastings and reductions at the interface of two constructions. Of course, the identity routine belongs also to our model. The set of all these routines is therefore closed under these constructions and operations. 

We call an elementary routine \emph{essentially division--free} if it admits as input only essentially division--free, robust parameterized arithmetic circuits and all specialized generic computations used to compose it are essentially division--free. The outputs of essentially division--free elementary routines are always essentially division--free robust circuits. The set of all essentially division--free elementary routines is also closed under the mentioned constructions and operations. 

We have seen that elementary routines are, in a suitable sense, well behaved under restrictions. In the following statement we formulate explicitly the property of an elementary routine to be output isoparametric. This will be fundamental in our subsequent complexity considerations.

\begin{proposition}\label{prop: T subset and composition}
Let $\mathcal{A}$ be an elementary routine of our computation model. Then $\mathcal{A}$ is output isoparametric. More explicitly, let $\beta$ be a robust, parameterized arithmetic circuit with parameter domain $\mathcal{M}$. Suppose that $\beta$ is an admissible input for $\mathcal{A}$. Let $\theta$ be a geometrically robust, constructible map defined on $\mathcal{M}$ such that $\theta$ represents the coefficient vector of the final results of $\beta$ and let $\mathcal{T}$ be the image of $\theta$. Then $\mathcal{T}$ is a constructible subset of a suitable affine space and there exists a geometrically robust, constructible map $\sigma$ defined on $\mathcal{T}$ such that the composition map $\sigma\circ\theta$ represents the coefficient vector of the final results of $\mathcal{A}(\beta)$.
\end{proposition}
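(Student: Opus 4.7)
The plan is to prove the proposition by structural induction on the recursive construction of the elementary routine $\mathcal{A}$, tracking through each of the five routine-construction patterns introduced in the preceding subsections: the identity routine, isoparametric recursion, composition, join, union, and the ``last type'' of construction that instantiates a generic computation $R(W;Y;X)$ at a precomputed parameter vector. At each stage the inductive hypothesis supplies, for every admissible input $\beta$ with coefficient map $\theta$ at the final results, a geometrically robust constructible map $\sigma$ such that $\sigma\circ\theta$ is the coefficient map of the final results of the constructed routine applied to $\beta$; the task is to exhibit such a $\sigma$ for the next construction using only robust operations.

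For the base case the identity routine admits $\sigma=\mathrm{id}$. For isoparametric recursion condition $(B)$ provides, at every node $\rho$ of $\beta$ depending on at least one input, a geometrically robust constructible map $\sigma_\rho$ on the image $\mathcal{T}_\rho$ of $\theta_\rho$ such that $\sigma_\rho\circ\theta_\rho$ constitutes the coefficient vector of $F_\rho$; specializing this to the output nodes of $\beta$ gives the required $\sigma$ directly, since there $\theta_\rho=\theta$ and $F_\rho$ is the final result of $\mathcal{A}(\beta)$. For composition $\mathcal{B}\circ\mathcal{A}$, apply the inductive hypothesis to $\mathcal{A}$ on $\beta$ and then to $\mathcal{B}$ on $\mathcal{A}(\beta)$ and take $\sigma:=\sigma_{\mathcal{B}}\circ\sigma_{\mathcal{A}}$, geometrically robust by Corollary \ref{proposition 1}. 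For union the desired map is the Cartesian product $(\sigma_{\mathcal{A}},\sigma_{\mathcal{B}})$, again robust by Corollary \ref{proposition 1}. For join the output circuit $\mathcal{B}(\beta)*_\lambda\mathcal{A}(\beta)$ has, as final results, polynomial expressions of the coefficient vectors of $\mathcal{B}(\beta)$ applied to the final results of $\mathcal{A}(\beta)$ in place of its standard inputs; since both coefficient vectors are geometrically robust constructible maps of $\theta$ by the inductive hypothesis, and polynomial substitution preserves this class (Corollary \ref{proposition 1} and Lemma \ref{lemma intermediate results}), the composite $\sigma$ exists and is geometrically robust. For the last type of construction, $w_\beta$ is the output of an output isoparametric parameter routine, hence already of the form $\sigma_1\circ\theta$, and the final results of $\mathcal{B}(\beta)$ arise as polynomial expressions (from the specialized generic computation $R(\kappa;Y;X)$) in $w_\beta$ and the coefficients of the final results of $\beta$; Lemma \ref{lemma intermediate results} together with requirement $(C)$ guarantees that the resulting coefficient map of $\mathcal{B}(\beta)$ is again of the form $\sigma\circ\theta$ with $\sigma$ geometrically robust.

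Finally, the permissible insertion of broadcastings and reductions at the interface of two constructions (when the involved routines are well behaved under reductions) neither introduces new standard input variables nor alters final results beyond reindexing, so the coefficient map is preserved up to a trivial rearrangement; this closes the induction. The main obstacle, and the place where the argument needs care, is the join step, where one must verify that substituting the outputs of $\mathcal{A}(\beta)$ as standard inputs of $\mathcal{B}(\beta)$ really does produce coefficients that factor through $\theta$ alone (rather than depending on additional data extracted from $\beta$). This is exactly what Lemma \ref{lemma intermediate results} and the $\C$-algebra structure of geometrically robust constructible functions (Corollary \ref{proposition 1}) guarantee, and this invocation is the crux of the argument.
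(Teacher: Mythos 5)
Your structural induction over the construction patterns (identity, isoparametric recursion via requirement $(B)$ at the output nodes, composition, join, union, and the generic-computation construction under requirement $(C)$), with Lemma \ref{lemma intermediate results} and Corollary \ref{proposition 1} doing the work at each step, is correct and is essentially the paper's own argument: the paper proves the proposition only implicitly, by exactly these case-by-case observations in the preceding subsections, noting in particular that for recursive routines the statement is just requirement $(B)$ applied to the output nodes and that joins preserve output isoparametricity even though they need not preserve isoparametricity. Your write-up merely makes that distributed argument explicit as an induction, so no further comparison is needed.
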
%57

In case that $\mathcal{A}$ is a recursive routine, Proposition \ref{prop: T subset and composition} expresses nothing but the requirement $(B)$ applied to the output nodes of $\beta$.

\subsection{End of the description of our computation model}

Elementary routines do not contain branchings. In order to capture the whole spectrum of all known circuit based elimination methods in effective algebraic geometry, we are now going to introduce the concepts of \emph{algorithm} and \emph{procedure} admitting some limited branchings. This finishes the description of our computation model. We shall proceed rather informally.

An algorithm will be a dynamic DAG of elementary routines which will be interpreted as pipes. At the end points of the pipes, decisions may be taken which depend on testing the validity of suitable universally quantified Boolean combinations of equalities between robust constructible functions defined on the parameter domain under consideration. The output of such an \emph{equality test} is a bit vector which determines the next elementary routine (i.e., pipe) to be applied to the output circuit produced by the preceding elementary routine (pipe). This gives rise to a computation model which contains branchings. These branchings depend on a limited type of decisions at the level of the underlying abstract data type, namely the mentioned equality tests. Because of this limitation of branchings, we shall call the algorithms of our model \emph{branching parsimonious} (compare \cite{GH01} and \cite{CaGiHeMaPa03}). A branching parsimonious algorithm $\mathcal{A}$ which accepts a robust parameterized arithmetic circuit $\beta$ with parameter domain $\mathcal{M}$ as input produces a new robust circuit $\mathcal{A}(\beta)$ with parameter domain $\mathcal{M}$. In particular $\mathcal{A}(\beta)$ \emph{does not contain any branchings}. 

Recall that our two main constructions of elementary routines depend on a previous selection of generic computations from a given shape list. This selection may be handled by calculations with the indexing of the shape list. We shall think that these calculations become realized by deterministic Turing machines. At the beginning, for a given robust parametric input circuit $\beta$ with parameter domain $\mathcal{M}$, a tuple of fixed (i.e., of $\beta$ independent) length of natural numbers is determined. This tuple constitutes an initial configuration of a Turing machine computation which determines the generic computations of our shape list that intervene in the elementary routine under construction. The entries of this tuple of natural numbers are called \emph{invariants} of the circuit $\beta$. These invariants, whose values may also be Boolean (i.e., realized by the natural numbers $0$ or $1$), depend mainly on algebraic or geometric properties of the final results of $\beta$. However, they may also depend on structural properties of the labelled DAG $\beta$.%60

For example, the invariants of $\beta$ may express that $\beta$ has $r$ parameters, $n$ inputs and outputs, (over $\C$) non--scalar size and depth at most $L$ and $l$, that $\beta$ is totally division--free, that the final results of $\beta$ have degree at most $d\leq 2^{l}$ and that for any parameter instance their specializations form a reduced regular sequence in $\C[X_1,\dots,X_n]$, where $X_1,\dots,X_n$ are the inputs of $\beta$.

Some of these invariants (e.g., the syntactical ones like number of parameters, inputs and outputs and non--scalar size and depth) may simply be read--off from the labelled DAG structure of $\beta$. Others, like the truth value of the statement that the specializations of the final results of $\beta$ at any parameter instance form a reduced regular sequence, have to be precomputed by an elimination algorithm from a previously given software library in effective commutative algebra or algebraic geometry or their value has to be fixed in advance as a precondition for the elementary routine which becomes applied to $\beta$.

In the same vein we may equip any elementary routine $\mathcal{A}$ with a Turing computable function which from the values of the invariants of a given input circuit $\beta$ decides whether $\beta$ is admissible for $\mathcal{A}$, and, if this is the case, determines the generic computations of our shape list which intervene in the application of $\mathcal{A}$ to $\beta$. %61

We shall now go a step further letting depend the internal structure of the computation on the circuit $\beta$. In the simplest case this means that we admit that the vector of invariants of $\beta$, denoted by $\text{inv}(\beta)$, determines the architecture of a first elementary routine, say $\mathcal{A}_{\text{inv}(\beta)}$, which admits $\beta$ as input. Observe that the architectures of the elementary routines of our computation model may be characterized by tuples of fixed length of natural numbers. We consider this characterization as an \emph{indexing} of the elementary routines of our computation model. We may now use this indexing in order to combine dynamically elementary routines by composition, join and union. Let us restrict our attention to the case of composition. In this case the output circuit of one elementary routine is the input for the next routine. The elementary routines which compose this display become implemented as pipes which start with a robust input circuit and end with a robust output circuit. Given such a pipe and an input circuit $\gamma$ for the elementary routine $\mathcal{B}$ representing the pipe, we may apply suitable equality tests to the final results of $\mathcal{B}(\gamma)$ in order to determine a bit vector which we use to compute the index of the next elementary routine (seen as a new pipe) which will be applied to $\mathcal{B}(\gamma)$ as input.

A \emph{low level program} of our extended computation model is now a text, namely the transition table of a deterministic Turing machine, which computes a function $\psi$ realizing the following tasks.

Let as before $\beta$ be a robust parameterized arithmetic circuit. Then $\psi$ returns first on input $\text{inv}(\beta)$ a Boolean value, zero or one, where one is interpreted as the informal statement ``$\beta$ is an admissible input''. If this is the case, then $\psi$ returns on $\text{inv}(\beta)$ the index of an elementary routine, say $\mathcal{A}_{\text{inv}(\beta)}$, which admits $\beta$ as input. Then $\psi$ determines the equality tests which have to be realized with the final results of $\mathcal{A}_{\text{inv}(\beta)}(\beta)$. Depending on the outcome of these equality tests $\psi$ determines an index value corresponding to a new elementary routine which admits $\mathcal{A}_{\text{inv}(\beta)}(\beta)$ as input. Continuing in this way one obtains as end result an elementary routine $\mathcal{A}^{(\beta)}$, which applied to $\beta$, produces a final output circuit $\mathcal{A}^{(\beta)}(\beta)$. The function $\psi$ represents all these index computations. We denote by $\psi(\beta)$ the \emph{dynamic} vector of all data computed by $\psi$ on input $\beta$. 

The \emph{algorithm} represented by $\psi$ is the partial map between robust parametric arithmetic circuits that assigns to each admissible input $\beta$ the circuit $\mathcal{A}^{(\beta)}(\beta)$ as output. Observe that elementary routines are particular algorithms. This kind of algorithms constitute our \emph{computation model}. We remark that any algorithm of this model is \emph{output isoparametric}. If the pipes of an algorithm are all represented by essentially division--free elementary routines, we call the algorithm itself \emph{essentially division--free}.

One sees easily that the ``Kronecker algorithm'' \cite{GLS01} (compare also \cite{Giusti1}, \cite{Giusti2} and \cite{Giusti3}) for solving non--degenerate polynomial equation systems over the complex numbers may be programmed in our extended computation model. Observe that the Kronecker algorithm requires more than a single elementary routine for its design. In order to understand this, recall that the Kronecker algorithm accepts as input an ordinary division--free arithmetic circuit which represents by its output nodes a reduced regular sequence of polynomials $G_1,\dots,G_n$ belonging to $\C[X_1,\dots,X_n]$. In their turn, the polynomials $G_1,\dots,G_n$ determine a \emph{degree pattern}, say $\Delta:=(\delta_1,\dots,\delta_n)$, with $\delta_i:=\deg \{G_1=0,\dots,G_i=0 \}$ for $1\leq i\leq n$. 

After putting the variables $X_1,\dots,X_n$ in generic position with respect to $G_1,\dots,$ $G_n$, the algorithm performs $n$ recursive steps to eliminate them, one after the other. Finally the Kronecker algorithm produces an ordinary arithmetic circuit which computes the coefficients of $n+1$ univariate polynomials $P,V_1,\dots,V_n$ over $\C$. These polynomials constitute a ``geometric solution'' (see \cite{GLS01}) of the equation system $G_1=0,\dots,G_n=0$ because they represent the zero dimensional algebraic variety $V:=\left\{ G_1=0,\dots,G_n=0 \right\}$ in the following ``parameterized'' form:
$$V:=\left\{ (V_1(t),\dots,V_n(t));t\in\C,P(t)=0 \right\}.$$ 
Let $\beta$ be any robust, parameterized arithmetic circuit with the same number of inputs and outputs, say $X_1,\dots,X_n$ and $G_1(U,X_1,\dots,X_n),\dots,G_n(U,X_1,\dots,X_n)$, respectively. Suppose that the parameter domain of $\beta$, say $\mathcal{M}$, is irreducible and that $\text{inv}(\beta)$ expresses that for each parameter instance $u\in\mathcal{M}$ the polynomials $G_1(u,X_1,\dots,X_n),\dots,G_n(u,X_1,\dots,X_n)$ form a reduced regular sequence in $\C[X_1,\dots,X_n]$ with fixed (i.e., from $u\in\mathcal{M}$ independent) degree pattern. Suppose, furthermore, that the degrees of the individual polynomials $G_1(u,X_1,\dots,X_n)$, $\dots$, $G_n(u,X_1,\dots,X_n)$ are also fixed and that the variables $X_1,\dots,X_n$ are in generic position with respect to the varieties $\{ G_1(u,X)=0,\dots,G_i(u,X)=0 \}, 1\leq i\leq n$. Then, on input $\beta$, the Kronecker algorithm runs a certain number (which depends on $\Delta$) 
of elementary routines of our computation model which finally become combined by consistent iterative joins until the desired output is produced.

We say that a given algorithm $\mathcal{A}$ of our extended model \emph{computes} (only) \emph{parameters} if $\mathcal{A}$ satisfies the following condition:
\begin{quote}
\emph{for any admissible input $\beta$ the final results of $\mathcal{A}(\beta)$ are all parameters.}	
\end{quote}

Suppose that $\mathcal{A}$ is such an algorithm and $\beta$ is the robust parametric arithmetic circuit with parameter domain $\mathcal{M}$ which we have considered before. Observe that $\mathcal{A}(\beta)$ contains the input variables $X_1,\dots,X_n$ and that possibly new variables, which we call \emph{auxiliary}, become introduced during the execution of the algorithm $\mathcal{A}$ on input $\beta$. Since the algorithm $\mathcal{A}$ computes only parameters, the input and auxiliary variables become finally eliminated by the application of recursive parameter routines and evaluations. We may therefore \emph{collect garbage} in order to reduce $\mathcal{A}(\beta)$ to a \emph{final output circuit} $\mathcal{A}_{\text{final}}(\beta)$ whose intermediate results are only parameters.

If we consider the algorithm $\mathcal{A}$ as a partial map which assigns to each admissible input circuit $\beta$ its final output circuit $\mathcal{A}_{\text{final}}(\beta)$, we call $\mathcal{A}$ a \emph{procedure}.

In this case, if $\psi$ is a low level program defining $\mathcal{A}$, we call $\psi$ a \emph{low level procedure program}. 

A particular feature of our extended computation model is the following:\\
there exists a non--negative integer $f$ (depending on the recursion depth of $\mathcal{A}$) and non--decreasing real valued functions $C_f \geq 0$ ,\dots, $C_0 \geq 0$ depending on one and the same dynamic integer vector, such that with the previous notations and $L_{\beta}$, $L_{\mathcal{A}(\beta)}$ denoting the non--scalar sizes of the circuits $\beta$ and $\mathcal{A}(\beta)$ the condition
$$L_{\mathcal{A}(\beta)} \leq C_f(\psi(\beta))L^f_{\beta} +\dots+ C_0(\psi(\beta))$$
is satisfied.

In the case of the Kronecker algorithm (and most other elimination algorithms of effective algebraic geometry) we have $f:=1$, because the recursion depth of the basic routines which intervene is one.

\subsection{Procedures}\ws\ws

In the sequel we shall need a particular variant of the notion of a procedure which enables us to capture the following situation.

Suppose we have to find a computational solution for a formally specified general algorithmic problem and that the formulation of the problem depends on certain parameter variables, say $U_1,\dots,U_r$, input variables, say $X_1,\dots,X_n$ and output variables, say $Y_1,\dots,Y_s$. Let such a problem formulation be given and suppose that its input is implemented by the robust parameterized arithmetic circuit $\beta$ considered before, interpreting the parameter variables $U_1,\dots,U_r$ as the basic parameters $\pi_1,\dots,\pi_r$. 

Then an algorithm $\mathcal{A}$ of our extended computation model which \emph{solves} the given algorithmic problem should satisfy the architectural requirement we are going to describe now.%65

The algorithm $\mathcal{A}$ should be the composition of two subalgorithms $\mathcal{A}^{(1)}$ and $\mathcal{A}^{(2)}$ of our computation model which satisfy on input $\beta$ the following conditions:
\textit{\begin{enumerate}
	\item[(i)] The subalgorithm $\mathcal{A}^{(1)}$ computes only parameters, $\beta$ is admissible for $\mathcal{A}^{(1)}$ and none of the indeterminates $Y_1,\dots,Y_s$ is introduced in $\mathcal{A}^{(1)}(\beta)$ as auxiliary variable (all other auxiliary variables become eliminated during the execution of the subalgorithm $\mathcal{A}^{(1)}$ on the input circuit $\beta$).
	\item[(ii)] The circuit $\mathcal{A}_{\text{final}}^{(1)}(\beta)$ is an admissible input for the subalgorithm $\mathcal{A}^{(2)}$, the indeterminates $Y_1,\dots,Y_s$ occur as auxiliary variables in $\mathcal{A}^{(2)}(\mathcal{A}_{\text{final}}^{(1)}(\beta))$ and the final results of $\mathcal{A}^{(2)}(\mathcal{A}_{\text{final}}^{(1)}(\beta))$ depend only on $\pi_1,\dots,\pi_r$ and $Y_1,\dots,Y_s$.	 
\end{enumerate}}

To the circuit $\mathcal{A}^{(2)}(\mathcal{A}_{\text{final}}^{(1)}(\beta))$ we may, as in the case when we compute only parameters, apply garbage collection. In this manner $\mathcal{A}^{(2)}(\mathcal{A}_{\text{final}}^{(1)}(\beta))$ becomes reduced to a final output circuit $\mathcal{A}_{\text{final}}(\beta)$ with parameter domain $\mathcal{M}$ which contains only the inputs $Y_1,\dots,Y_s$.

Observe that the subalgorithm $\mathcal{A}^{(1)}$ is by Proposition \ref{prop: T subset and composition} an output isoparametric procedure of our extended computation model (the same is also true for the subalgorithm $\mathcal{A}^{(2)}$, but this will not be relevant in the sequel).

We consider the algorithm $\mathcal{A}$, as well as the subalgorithms $\mathcal{A}^{(1)}$ and $\mathcal{A}^{(2)}$, as \emph{procedures} of our extended computation model. In case that the \emph{subprocedures} $\mathcal{A}^{(1)}$ and $\mathcal{A}^{(2)}$ are essentially division--free, we call also the procedure $\mathcal{A}$ \emph{essentially division--free}. 

The architectural requirement given by conditions $(i)$ and $(ii)$ may be interpreted as follows:\\
the subprocedure $\mathcal{A}^{(1)}$ is a pipeline which transmits only parameters to the subprocedure $\mathcal{A}^{(2)}$. In particular, no (true) polynomial is transmitted from $\mathcal{A}^{(1)}$ to $\mathcal{A}^{(2)}$. 	

Nevertheless, let us observe that on input $\beta$ the procedure $\mathcal{A}$ establishes by means of the underlying low level program $\psi$ an additional link between $\beta$ and the subprocedure $\mathcal{A}^{(2)}$ applied to the input $\mathcal{A}^{(1)}(\beta)$. The elementary routines which constitute $\mathcal{A}^{(2)}$ on input $\mathcal{A}^{(1)}(\beta)$ become determined by index computations which realizes $\psi$ on $\text{inv}(\beta)$ and certain equality tests between the intermediate results of $\mathcal{A}^{(1)}(\beta)$. In this sense the subprocedure $\mathcal{A}^{(1)}$ transmits not only parameters to the subprocedure but also a limited amount of digital information which stems from the input circuit $\beta$.   	 

The decomposition of the procedure $\mathcal{A}$ into two subprocedures $\mathcal{A}^{(1)}$ and $\mathcal{A}^{(2)}$ satisfying conditions $(i)$ and $(ii)$ represents an architectural restriction which is justified when it makes sense to require that on input $\beta$ the number of essential additions and multiplications contained in $\mathcal{A}_{\text{final}}(\beta)$ is bounded by a function which depends only on $\text{inv}(\beta)$. This is the case in the example of the next subsection where a substantial use of this restriction is made (see \cite{HKR11}, Section 4.1, Observations). 

\subsection{A hard elimination problem}
\label{A hard elimination problem}

Let $n\in\N$ and $S_1,\dots,S_n,T,U_1,\dots,U_n$ and $X_1,\dots,X_n$ be indeterminates. Let $U:=(U_1,\dots,U_n)$, $S:=(S_1,\dots,S_n)$, $X:=(X_1,\dots,X_n)$ and $G_1^{(n)}:=X_1^2-X_1-S_1,\dots,G_n^{(n)}:=X_n^2-X_n-S_n$, $H^{(n)}:=\sum_{1 \leq i \leq n} 2^{i-1}X_i + T \prod_{1 \leq i \leq n} (1+ (U_i-1)X_i)$.

Observe that the polynomials $G_1^{(n)},\dots,G_n^{(n)}$ form a reduced regular sequence in $\C[S,T,U,X]$ and that they define a subvariety $V_n$ of the affine space $\A^n\times\A^1\times\A^n\times\A^n$ which is isomorphic to $\A^n\times\A^1\times\A^n$ and hence irreducible and of dimension $2n+1$. Moreover, the morphism $V_n \to \A^n\times\A^1\times\A^n $ which associates to any point $(s,t,u,x)\in V_n$ the point $(s,t,u)$, is finite and generically unramified. Therefore the morphism $\pi_n:V_n \to \A^n\times\A^1\times\A^n\times\A^1$ which associates to any $(s,t,u,x)\in V_n$ the point $(s,t,u,H^{(n)}(t,u,x))\in \A^n\times\A^1\times\A^n\times\A^1$ is finite and its image $\pi_n(V_n)$ is a hypersurface of $\A^n\times\A^1\times\A^n\times\A^1$ with irreducible minimal equation $F^{(n)}\in \C[S,T,U,Y]$.

Thus, $F^{(n)}$ is an irreducible elimination polynomial of degree $2^n$. Therefore any equation of $\C[S,T,U,Y]$ which defines $\pi_n(V_n)$ in $\A^n\times\A^1\times\A^n\times\A^1$ is up to a scalar factor a power of $F^{(n)}$.

The equations $G_1^{(n)}=0,\dots,G_n^{(n)}=0$ and the polynomial $H^{(n)}$ represent a so called \emph{flat family of zero--dimensional elimination problems} with associated elimination polynomial $F^{(n)}$ (see \cite{HKR11}, Section 4.1 for the notion of a flat family of zero--dimensional elimination problems).

We consider now $S_1,\dots,S_n,T,U_1,\dots,U_n$ as basic parameters, $X_1,\dots,X_n$ as input and $Y$ as output variables.

Let $\mathcal{A}$ be an essentially division--free procedure of our extended computation model satisfying the following condition: \\
$\mathcal{A}$ accepts as input any robust parameterized arithmetic circuit $\beta$ which represents a flat family of zero--dimensional elimination problem with associated elimination polynomial $F$ and $\mathcal{A}_{\text{final}}(\beta)$ has a single input $Y$ and a single final result which defines the same hypersurface as $F$.

With this notions and notations we have the following result.

\begin{theorem}
\label{proposition A}
There exist an ordinary division--free arithmetic circuit $\beta_n$ of size $O(n)$ over $\C$ with inputs $S_1,\dots,S_n$, $T$, $U_1,\dots,U_n$, $X_1,\dots,X_n$ and final results $G_1^{(n)},\dots,G_n^{(n)},H^{(n)}$. The essentially division--free, robust parameterized arithmetic circuit $\gamma_n:=\mathcal{A}_{\text{final}}(\beta_n)$ depends on the basic parameters $S_1,\dots,S_n$, $T$, $U_1,\dots,U_n$ and the input $Y$ and its single final result is a power of $F^{(n)}$. The circuit $\gamma_n$ performs at least $\Omega(2^{\frac{n}{2}})$ essential multiplications and at least $\Omega(2^n)$ multiplications with parameters. As ordinary arithmetic circuit over $\C$ with inputs $S_1,\dots,S_n$, $T$, $U_1,\dots,U_n$ and $Y$, the circuit $\gamma_n$ has non--scalar size at least $\Omega(2^n)$.
\end{theorem}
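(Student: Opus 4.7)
I would construct $\beta_n$ directly. Each $G_i^{(n)} = X_i^2 - X_i - S_i$ requires one non-scalar multiplication; the product $\prod_{i=1}^n (1 + (U_i - 1) X_i)$ can be built iteratively in $O(n)$ non-scalar multiplications; a final multiplication by $T$ produces $H^{(n)}$. Since $\C$-linear combinations are free in the non-scalar measure, $\beta_n$ has size $O(n)$. The polynomials $G_1^{(n)}, \dots, G_n^{(n)}$ visibly form a reduced regular sequence in $\C[S, T, U, X]$, and by the discussion preceding the theorem statement the associated elimination polynomial is exactly the irreducible $F^{(n)}$. Hence $\beta_n$ is an admissible input for $\mathcal{A}$, and $\gamma_n := \mathcal{A}_{\text{final}}(\beta_n)$ is an essentially division-free, robust parameterized arithmetic circuit with parameters $S, T, U$, input $Y$, and single final result of the form $(F^{(n)})^k$ for some $k \geq 1$.

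The decisive step is to restrict the parameter domain to the linear subspace $\mathcal{N} := \{S_1 = \dots = S_n = 0\} \subset \A^{2n+1}$. By Corollary \ref{proposition 1}, $\gamma_n\vert_{\mathcal{N}}$ is again a robust parameterized arithmetic circuit, now with parameter domain $\mathcal{N} \cong \A^{n+1}$; since restriction does not alter the labelled DAG structure, its essential and parameter multiplication counts coincide with those of $\gamma_n$. Its final result is
\[
F^{(n)}(0, T, U, Y)^k \;=\; \Bigl(\textstyle\prod_{\epsilon \in \{0, 1\}^n}(Y - H^{(n)}(T, U, \epsilon))\Bigr)^k,
\]
i.e., the $k$-th power of precisely the polynomial analyzed in Section \ref{independent of the model}.

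The remainder is an adaptation of the proof of Theorem \ref{theorem model independent}. Writing $P(Y) := \prod_{j=0}^{2^n-1}(Y - j)$ and expanding
\[
\textstyle\prod_{\epsilon \in \{0, 1\}^n}(Y - H^{(n)}(T, U, \epsilon)) \;=\; P(Y) + T \, \Phi(U, Y) + O(T^2),
\]
a Lagrange-interpolation argument at the distinct nodes $j_\epsilon := \sum_i 2^{i-1} \epsilon_i$ shows that the $2^n$ coefficients of $\Phi(U, Y)$ with respect to $Y$ form a $\C$-basis of the subspace of $\C[U]$ spanned by the squarefree monomials $\prod_{i : \epsilon_i = 1} U_i$. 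Raising to the $k$-th power gives $(P + T\Phi + O(T^2))^k = P^k + k T \, P^{k-1} \Phi + O(T^2)$, and since multiplication by the nonzero polynomial $P^{k-1}$ is $\C$-linearly injective on $\C[Y]$, the linear-in-$T$ coefficients of the final result of $\gamma_n\vert_{\mathcal{N}}$ with respect to $Y$ still span the same $2^n$-dimensional $\C$-subspace of $\C[U]$. Output isoparametricity (Proposition \ref{prop: T subset and composition}) combined with Theorem--Definition \ref{theorem-definition} then forces the essential parameters $\psi$ of $\gamma_n\vert_{\mathcal{N}}$ to be integral over the local ring at $T = 0$ of the coefficient vector $\theta$ of $\beta_n\vert_{\mathcal{N}}$'s outputs; since $\theta$ does not depend on $U$ when $T = 0$, the integrality relation collapses exactly as in the proof of Theorem \ref{theorem model independent} to show that $\psi(0, U)$ is a fixed point $w \in \A^m$. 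Differentiating $\varphi = \omega \circ \psi$ at $T = 0$ and applying the chain rule, the $2^n$-dimensional span of the left-hand side forces the Jacobian $(D\omega)_w$ to have rank at least $2^n$, so $m \geq 2^n$. The standard rearrangement \cite{PS73} converts this into $\Omega(2^{n/2})$ essential multiplications and $\Omega(2^n)$ parameter multiplications in $\gamma_n\vert_{\mathcal{N}}$, and therefore in $\gamma_n$; viewed as an ordinary arithmetic circuit with inputs $S, T, U, Y$, every parameter multiplication counts as a non-scalar operation, yielding the $\Omega(2^n)$ non-scalar lower bound. The sole new obstacle over Section \ref{independent of the model} is the exponent $k$, handled by the $P^{k-1}$ injectivity observation.
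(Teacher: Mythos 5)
Your proposal is correct and follows essentially the route the paper itself indicates: Theorem \ref{proposition A} is proved by adapting the argument of Theorem \ref{theorem model independent} (essential parameters factoring robustly through the coefficient vector of the input, integrality at the special point, linear independence of the linear--in--$T$ coefficients, Jacobian rank bound, and the rearrangement of \cite{PS73}), the details being delegated to \cite{HKR11}, Section 4. Your two added devices --- restricting the parameter domain to $\{S_1=\dots=S_n=0\}$ to land exactly in the Section \ref{independent of the model} situation, and the $P^{k-1}$ injectivity observation to handle the power of $F^{(n)}$ --- are sound and fill in precisely the adaptations the paper leaves implicit.
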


The proof of Theorem \ref{proposition A} is similar as that of Theorem \ref{theorem model independent}. Moreover, Theorem \ref{proposition A} implies the asymptotic optimality of the Kronecker algorithm within our computation model. For details we refer the reader to \cite{HKR11}, Section 4 where also other examples of elimination problems are exhibited which are hard for algorithms of our computation model.

%-------------------------------------------------------------------------------
\section{Approximative computations}\label{sec:Approximative computations}
%-------------------------------------------------------------------------------
We are now going to apply the algorithmic model of Section \ref{A computation model with robust parameterized arithmetic circuits} in a different context, namely that of approximative computations.

Let $\beta$ be a robust parameterized arithmetic circuit with parameter domain $\mathcal{M}$, basic parameters $\pi_1,\dots,\pi_r$ and inputs $X_1,\dots,X_n$. Let $U_1,\dots,U_r$ be parameter variables, $U:=(U_1,\dots,U_r)$, $\pi:=(\pi_1,\dots,\pi_r)$, $X:=(X_1,\dots,X_n)$ and suppose that $\beta$ is essentially division-free and has a single final result $G$.

In this section we are going to introduce the notion of an \emph{approximative $\beta$--computation} and to discuss how in our computation model an approximative $\beta$--computation can be transformed in an exact one.

Let $\mathfrak{a}$ be the vanishing ideal of $\ol{\mathcal{M}}$ in $\C[U]$ and let us fix a polynomial $P\in\C[U]$ such that $\ol{\mathcal{M}}_{P}$ is a Zariski open and dense subset of $\mathcal{M}$. Let $\epsilon$ be a new indeterminate.

\begin{definition}[Approximative parameter instance]\label{def: Approximative parameter instance}
An approximative parameter instance for $\beta$ is a vector $u(\epsilon)= (u_1(\epsilon),\dots,u_r(\epsilon)) \in
\C{((\epsilon))}^r$ which constitutes a meromorphic map germ at the origin such that any polynomial of $\mathfrak{a}$ vanishes at $u(\epsilon)$ and $P(u(\epsilon))\neq 0$ holds.
\end{definition}
\mydefinitions{\label{def: Approximative parameter instance} Approximative parameter instance}

Let $u(\epsilon)$ be an approximative parameter instance for $\beta$. Then there exists an open disc $\Delta$ around $0$ such that for any complex number $c\in A - \{ 0 \}$ the germ $u(\epsilon)$ is holomorphic at $c$ and such that $P(u(c))\neq 0$ holds. This implies that any polynomial of $\mathfrak{a}$ vanishes at $u(c)$ and that in particular $u(c)$ belongs to $\mathcal{M}$.

For technical reasons we need the following result.

\begin{lemma}
\label{lemma open disc}
Let notations and assumptions be as before. Let $\phi:\mathcal{M}\to\A^m$ be a geometrically robust constructible map and let $u(\epsilon)$ be an approximative parameter instance for $\beta$. Then there exists an open disc $\Delta$ of $\C$ around the origin and a germ $\psi$ of meromorphic functions at the origin such that $u(\epsilon)$ and $\psi$ are holomorphic on $\Delta - \{ 0 \}$ and such that any complex number $c\in\Delta - \{ 0 \}$ satisfies the conditions $P(u(c))\neq 0$ and $\psi(c)=\phi(u(c))$. 
\end{lemma}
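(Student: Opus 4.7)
The plan is to pass from the formal Laurent--series data to an analytic setting through the irreducible subvariety of $\ol{\M}$ cut out by $u(\epsilon)$, and to invoke both directions of Theorem--Definition~\ref{theorem-definition}: the algebraic one to build $\psi$ and the topological one to verify the required pointwise identity.

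First, since each coordinate of $u(\epsilon)\in\C((\epsilon))^r$ is a meromorphic germ at the origin, there is a disc $\Delta_1$ on which $u$ is holomorphic outside $0$. The element $P(u(\epsilon))\in\C((\epsilon))$ is nonzero by hypothesis, so its zero set in $\Delta_1$ is discrete; shrinking to a smaller disc $\Delta_2$ we may assume $P(u(c))\neq 0$ for all $c\in\Delta_2\setminus\{0\}$. Since $u(\epsilon)$ kills every polynomial in $\mathfrak{a}$, each such $u(c)$ lies in $\ol{\M}_P\subseteq\M$, so $\phi(u(c))$ is well defined.

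Next I would construct $\psi$ through the Zariski closure $W\subseteq\ol{\M}$ of the image of $u$. Algebraically, the evaluation map $\C[\ol{\M}]\to\C((\epsilon))$ sending $f\mapsto f(u(\epsilon))$ has a prime kernel (the target is a field), so $W$ is an irreducible closed affine subvariety of $\ol{\M}$ and the induced map $\C[W]\hookrightarrow\C((\epsilon))$ is injective. The constructible set $W\cap\M$ satisfies $\ol{W\cap\M}=W$ (it contains the image of $u$, whose Zariski closure is $W$), so by Corollary~\ref{proposition 1} the restriction $\phi|_{W\cap\M}$ is geometrically robust, and by Remark~\ref{remark constructible} piecewise rational; hence each component $\phi_i$ extends to a rational function $\phi_i^{W}\in\C(W)$. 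Writing $\phi_i^{W}=g_i/h_i$ with $g_i,h_i\in\C[W]$ and $h_i\neq 0$, the injectivity of $\C[W]\hookrightarrow\C((\epsilon))$ yields $h_i(u(\epsilon))\neq 0$, and
\[
\psi_i(\epsilon):=\frac{g_i(u(\epsilon))}{h_i(u(\epsilon))}\in\C((\epsilon))
\]
is a well-defined meromorphic germ, independent of the chosen representation. Set $\psi:=(\psi_1,\ldots,\psi_m)$ and shrink $\Delta_2$ once more to a disc $\Delta$ on which $\psi$ is holomorphic outside $0$.

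Finally, one checks $\psi(c)=\phi(u(c))$ for every $c\in\Delta\setminus\{0\}$. On the Zariski open dense subset $V_i:=\{h_i\neq 0\}\cap W\cap\M$ of $W\cap\M$ one has $\phi_i=g_i/h_i$, so whenever $u(c)\in V_i$ the identity $\psi_i(c)=g_i(u(c))/h_i(u(c))=\phi_i(u(c))$ follows by direct substitution. The exceptional set of $c\in\Delta\setminus\{0\}$ with $u(c)\notin V_i$ is the discrete zero locus in $\Delta$ of the meromorphic germ $h_i(u(\epsilon))$. The main obstacle is handling these exceptional values, and it is precisely here that the topological content of geometric robustness becomes essential: at such a point $c_0$ the germ $\psi_i$ is holomorphic while $c\mapsto\phi_i(u(c))$ is continuous at $c_0$ by the Euclidean continuity of $\phi$ furnished by Theorem--Definition~\ref{theorem-definition}. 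Since these two functions agree on a punctured neighbourhood of $c_0$, continuity forces $\psi_i(c_0)=\phi_i(u(c_0))$, completing the verification.
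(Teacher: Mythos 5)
Your proof is correct and takes essentially the same route as the paper's: both arguments pass to the Zariski closure of the image of the germ $u(\epsilon)$, use Remark \ref{remark constructible} to represent the robust map $\phi$ by a rational map on a Zariski dense subset of that closure, and then deploy the Euclidean continuity guaranteed by Theorem--Definition \ref{theorem-definition} to deal with the remaining points (the paper shrinks the disc so that the image of the punctured disc avoids the bad locus altogether and then uses continuity to get holomorphy, while you define $\psi$ by Laurent--series substitution and use continuity at a discrete exceptional set). The only loose assertion is that $\phi_i=g_i/h_i$ holds on all of $V_i=\{h_i\neq 0\}\cap W\cap\M$, whereas Remark \ref{remark constructible} only gives this on some Zariski open dense subset $U$ of $W\cap\M$; this is harmless, since you may either enlarge your exceptional set to $\{c:\ u(c)\notin U\cap V_i\}$, which is still discrete because its complement in $W$ is a proper Zariski closed subset of the irreducible $W$ and the evaluation $\C[W]\hookrightarrow\C((\epsilon))$ is injective, or deduce equality on all of $V_i$ from your own continuity argument together with the fact that $U\cap V_i$ is Euclidean dense in $V_i$ (Zariski and Euclidean closures of constructible sets coincide).
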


\begin{proof}
There exists an open disc $\Delta'$ of $\C$ around the origin such that $u(\epsilon)$ is everywhere defined on $\Delta' - \{ 0 \}$ and such that any $c \in \Delta' - \{ 0 \}$ satisfies the condition $P(u(c))\neq 0$. Let $\mathcal{N}$ be the Zariski closure of the image of $\Delta' - \{ 0 \}$ under $u(\epsilon)$. There exists a Zariski open and dense subset $\mathcal{U}$ of $\mathcal{N}$ with $\mathcal{U}\subset\mathcal{M}$ such that $\phi$ is rational and everywhere defined on $\mathcal{U}$. Moreover there exists a non--zero polynomial $Q\in\C[U]$ such that $\mathcal{N}_Q$ is contained in $\mathcal{U}$ and Zariski dense in $\mathcal{N}$. Therefore there exists a complex number $c_0\in \Delta'-\{ 0 \}$ with $Q(u(c_0))\neq 0$. Hence the set $K:=\{ c\in \Delta'-\{ 0 \}; P(u(c))=0 \}$ is finite. Thus we may chose an open disc $\Delta$ around the origin with $\Delta \subset \Delta'$ and $\Delta \cap K =\emptyset$ such that the image of $\Delta$ under $u(\epsilon)$ is contained in $\mathcal{N}_Q$. We conclude now that $u(\epsilon)$ is everywhere defined on $\Delta-\{ 0 \}$ and that every $c\in\Delta - \{ 0 \}$ satisfies the condition $u(c)\in\mathcal{U}$. For $c\in\Delta-\{ 0 \}$ let $\psi(c):=\phi(u(c))$. Then $\psi:\Delta-\{ 0 \} \to \C^m$ is a well defined meromorphic function. Let $c\in\Delta-\{ 0 \}$ and let $(c_k)_{k\in\N}$ be a sequence of complex numbers $c_k\in\Delta -\{ 0 \}$ which converges to $c$. Then $(u(c_k))_{k\in\N}$ is a sequence of points $u(c_k)\in\mathcal{U}$ which converges to $u(c)$ and hence the sequence $(\phi(u(c_k)))_{k\in\N}$ converges to $\phi(u(c))$ and is therefore bounded. This implies that $\psi$ is holomorphic at $c$. Thus $\psi$ is holomorphic on $\Delta -\{ 0 \}$ and for any $c\in \Delta -\{ 0 \}$ we have $\psi(c)=\phi(u(c))$. Therefore we may interpret $\psi$ as a meromorphic map germ at the origin. 
\end{proof}

\ws

Let $u(\epsilon)$ be an approximative parameter instance. Then following Lemma \ref{lemma open disc} there exists an open disc $\Delta$ of $\C$ around the origin such that for any node $\rho$ of $\beta$ with intermediate result $G_{\rho}(\pi,X)$ the expression $G_{\rho}(u(\epsilon),X)$ defines a polynomial in $X_1,\dots,X_n$ whose coefficients are well determined meromorphic functions on $\Delta$ which are holomorphic on $\Delta-\{ 0 \}$. We denote by $\beta^{(u(\epsilon))}$ the labelled DAG of $\beta$ where we assign to each node $\rho$ of $\beta$ the polynomial $G_{\rho}(u(\epsilon),X)$. We call $\beta^{(u(\epsilon))}$ an \emph{approximative $\beta$--computation} and denote by $G^{(u(\epsilon))}$ the final result of $\beta^{(u(\epsilon))}$. 

We say that the approximative $\beta$--computation $\beta^{(u(\epsilon))}$ represents the polynomial $H\in\C[X]$ if there exists a polynomial $H^{(u(\epsilon))}\in\C[[\epsilon]][X]$ whose coefficient vector with respect to $X$ constitutes a germ of functions which are holomorphic at the origin such that the final result $G^{(u(\epsilon))}$ of $\beta^{(u(\epsilon))}$ can be written as $G^{(u(\epsilon))}= H+\epsilon H^{(u(\epsilon))}$.

%Let $u=(u_1(\epsilon),\dots,u_r(\epsilon))$ be an approximative parameter instance for $\beta$. Then we may interpret $u(\epsilon)$ as a map from $\A^1$ to $\A^r$ which is holomorphic at the origin and whose image is contained in $\mathcal{M}$, except for finitely many arguments. In the arguments the images belong to $\ol{\mathcal{M}}$. Since $\beta$ is a robust parameterized arithmetic circuit we see that $\beta^{(u(\epsilon))}$ is consistent.

Let $W_{\beta}$ be the set of coefficient vectors of the final results of the ordinary arithmetic circuits $\beta^{(u)}$ in $\C[X]$, where $u\in\mathcal{M}$. We consider $W_{\beta}$ as a constructible subset of a suitable affine ambient space. Consequently the (Zariski or strong) closure $\ol{W}_{\beta}$ of $W_{\beta}$ in its ambient space is well defined. In view of the next result the expression ``approximative $\beta$--computation'' becomes selfexplanatory.

\begin{theorem}
\label{th: equivalent conditions}
Let notations and assumptions be as before and let $H\in\C[X]$. Then the following three conditions are equivalent
\begin{itemize}
	\item[$(i)$] there exists an approximative $\beta$--computation that represents $H$. 
	\item[$(ii)$] there exists a sequence $(u_k)_{k\in\N}$ with $u_k\in\mathcal{M}$ such that the final results of the sequence $(\beta^{(u_k)})_{k\in\N}$ of ordinary circuits converge to $H$ in $\C[X]$.  
	\item[$(iii)$] the coefficient vector of $H$ belongs to $\ol{W}_{\beta}$. 
\end{itemize}
\end{theorem}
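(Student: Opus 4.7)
The plan is to prove the implications (i)$\Rightarrow$(ii)$\Leftrightarrow$(iii)$\Rightarrow$(i). To set the stage, note that since $\beta$ is essentially division--free, Lemma \ref{lemma intermediate results} tells us that the final result $G$ of $\beta$ is a polynomial in $X_1,\dots,X_n$ whose coefficients form the components of a geometrically robust constructible map $\Psi:\mathcal{M}\to\A^N$, where $N$ is determined by the syntactic degree bound of $G$. Its image is exactly $W_\beta$, which is therefore constructible in $\A^N$, so its Zariski and Euclidean closures agree (by the fact recalled after Definition \ref{def: constr. set}). Since all the final results $\beta^{(u)}$, $u\in\mathcal{M}$, live in a common finite--dimensional subspace of $\C[X]$, convergence of such polynomials is equivalent to coordinatewise convergence of their coefficient vectors. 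Hence (ii) is literally the statement that the coefficient vector of $H$ lies in the Euclidean closure of $W_\beta$, which yields (ii)$\Leftrightarrow$(iii).

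For (i)$\Rightarrow$(ii), I would apply Lemma \ref{lemma open disc} with $\phi:=\Psi$: it produces an open disc $\Delta$ around $0$ on which $u(\epsilon)$ and the germ $\psi(\epsilon):=\Psi(u(\epsilon))$ are holomorphic off of $0$, with $u(c)\in\mathcal{M}$ for every $c\in\Delta-\{0\}$. The hypothesis that $\beta^{(u(\epsilon))}$ represents $H$ means precisely that the coefficient vector of the final result is holomorphic at the origin with value the coefficient vector of $H$, i.e.\ $\psi(0)=c(H)$. Picking any sequence $c_k\to 0$ in $\Delta-\{0\}$ yields parameter instances $u_k:=u(c_k)\in\mathcal{M}$ with $\beta^{(u_k)}\to H$ in $\C[X]$.

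The substantive content is (iii)$\Rightarrow$(i), which I would establish by curve selection. Form the graph $\Gamma:=\{(u,\Psi(u)):u\in\mathcal{M}\}$, a constructible subset of $\mathcal{M}\times\A^N$, and let $\hat\Gamma$ be its Zariski closure inside $\hat{\mathcal{M}}\times\A^N$, where $\hat{\mathcal{M}}\subset\mathbb{P}^r$ denotes the projective closure of $\mathcal{M}$. Given (ii), there is a sequence $u_k\in\mathcal{M}$ with $\Psi(u_k)\to c(H)$; the tuples $(u_k,\Psi(u_k))$ eventually lie in a compact product of $\hat{\mathcal{M}}$ with a closed ball of $\A^N$, so a subsequence converges to a limit $(u^*,c(H))\in\hat\Gamma$. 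Let $\hat\Gamma^*$ be an irreducible component of $\hat\Gamma$ through $(u^*,c(H))$. Since $\Gamma$ is Zariski dense in $\hat\Gamma$ and $\Gamma\cap(\ol{\mathcal{M}}_P\times\A^N)$ is Zariski dense in $\Gamma$, the open set $\hat\Gamma^*\cap(\ol{\mathcal{M}}_P\times\A^N)$ is dense in $\hat\Gamma^*$. Applying Puiseux parametrization to an analytic branch of $\hat\Gamma^*$ through $(u^*,c(H))$, I obtain a holomorphic arc $\tilde\gamma=(u(\epsilon),w(\epsilon)):\Delta'\to\hat\Gamma^*$ with $\tilde\gamma(0)=(u^*,c(H))$ and $\tilde\gamma(\epsilon)\in\Gamma\cap(\ol{\mathcal{M}}_P\times\A^N)$ for every $\epsilon\in\Delta'-\{0\}$.

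It remains to read off the conditions on $u(\epsilon)$ and $w(\epsilon)$. In affine coordinates on $\A^r$, the first coordinate $u(\epsilon)$ is a meromorphic germ (possibly with a pole at $0$, if $u^*$ sits at infinity) with $u(\epsilon)\in\ol{\mathcal{M}}_P$ for $\epsilon\in\Delta'-\{0\}$. Consequently every polynomial of $\mathfrak{a}$ vanishes at $u(\epsilon)$ on the Zariski dense subset $\Delta'-\{0\}$ of $\Delta'$ and hence identically as a meromorphic function, while $P(u(\epsilon))\not\equiv 0$; thus $u(\epsilon)$ is an approximative parameter instance. The second coordinate $w(\epsilon)$ is holomorphic on $\Delta'$ with $w(0)=c(H)$ and $w(\epsilon)=\Psi(u(\epsilon))$ for $\epsilon\neq 0$, so writing $w(\epsilon)=c(H)+\epsilon\,w_1(\epsilon)$ with $w_1$ holomorphic at $0$, the final result of $\beta^{(u(\epsilon))}$ takes the required shape $H+\epsilon\,H^{(u(\epsilon))}$ with $H^{(u(\epsilon))}\in\C[[\epsilon]][X]$, giving (i). The hardest part of this plan is the curve selection step: one must pass to a projective compactification of $\mathcal{M}$ because the witnesses $u_k$ may be unbounded, and one must simultaneously steer the analytic branch away from the points at infinity and from the hypersurface $\{P=0\}$ for $\epsilon\neq 0$; the Zariski density of $\Gamma\cap(\ol{\mathcal{M}}_P\times\A^N)$ in every irreducible component of $\hat\Gamma$ is precisely what guarantees that such a branch exists.
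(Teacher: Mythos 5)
Your outline is correct, and for the substantive implication $(iii)\Rightarrow(i)$ it takes a genuinely different route from the paper. The paper works entirely in the affine picture: it performs curve selection twice by Noether normalization and a generic line (first a curve $C\subseteq\ol{W}_{\beta}$ through $h$ meeting $\theta(\ol{\M}_P)$ densely, then a curve $C^*\subseteq\ol{\M}$ dominating $C$), and then produces the Laurent--series parameter instance by embedding the localization of the integral closure of $\C[C]$ in $\C(C^*)$ into $\C[[\epsilon]]$; the holomorphy of $\theta(u(\epsilon))$ at the origin is deduced from integrality, i.e.\ ultimately from geometric robustness of $\theta$ restricted to the curve. You instead do a single curve selection in the closure $\hat{\Gamma}$ of the graph of the coefficient map inside $\hat{\M}\times\A^N$, with $\hat{\M}$ a projective compactification; this handles unbounded witnesses $u_k$ and makes the holomorphy of the $\A^N$--component $w(\epsilon)$ at $0$ automatic, so the integral--closure/DVR step of the paper disappears. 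The directions $(i)\Rightarrow(ii)$ and $(ii)\Leftrightarrow(iii)$ are treated as in the paper.

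The one step that is asserted rather than proved is exactly the one you flag as hardest. ``Applying Puiseux parametrization to an analytic branch of $\hat{\Gamma}^*$ through $(u^*,c(H))$'' is not meaningful as stated when $\dim\hat{\Gamma}^*>1$, and even a branch of a carelessly chosen curve through the point need not avoid the bad locus $\hat{\Gamma}^*\setminus\bigl(\Gamma\cap(\ol{\M}_P\times\A^N)\bigr)$. What you need is a curve selection lemma: an irreducible algebraic curve $C\subseteq\hat{\Gamma}^*$ through $(u^*,c(H))$ whose intersection with $\Gamma\cap(\ol{\M}_P\times\A^N)$ is dense in $C$; then $C$ meets the bad locus in finitely many points and a Puiseux parametrization of a branch of $C$ at the point yields your arc. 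This lemma is true, and its proof is essentially the paper's Noether--normalization--plus--generic--line argument (which you would need only once, applied in the graph), but density alone is a reason to believe it, not a proof of it -- and it is precisely where the paper invests its technical work. A second point worth a sentence: your claim that $\Gamma\cap(\ol{\M}_P\times\A^N)$ is Zariski dense in $\Gamma$ is not automatic for an arbitrary constructible map; it uses the strong continuity (geometric robustness) of the coefficient map $\Psi$ together with the Euclidean density of $\ol{\M}_P$ in $\M$. With these two points filled in, your argument is complete and somewhat leaner than the paper's.
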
 
\begin{proof}
The conditions $(ii)$ and $(iii)$ are obviously equivalent because one is only a restatement of the other. It suffices therefore to show the implications $(i)\Rightarrow (ii)$ and $(iii)\Rightarrow (i)$. We first prove $(i)\Rightarrow (ii)$.

Suppose that there exists an approximative parameter instance $u(\epsilon)$ for $\beta$ such that $\beta^{(u(\epsilon))}$ represents $H\in\C[X]$ by means of a polynomial $H^{(u(\epsilon))}\in\C[[\epsilon]][X]$ whose coefficient vector constitutes a holomorphic map germ at the origin. Thus $H+\epsilon H^{u(\epsilon)}$ is the final result of $\beta^{(u(\epsilon))}$. We may choose a sequence $(\epsilon_k)_{k\in\N}$ of non--zero complex numbers such that for any $k\in\N$ the germ $u(\epsilon)$ is holomorphic at $\epsilon_k$ and satisfies the condition $P(u(\epsilon_k))\neq 0$ and such that $(\epsilon_k)_{k\in\N}$ converges to zero. Without loss of generality we may suppose that the coefficients of $H^{(u(\epsilon))}$ are holomorphic at $\epsilon_k$ for any $k\in\N$.

Let $u_k:=u(\epsilon_k)$. Then $u_k$ belongs to $\mathcal{M}$ and $H+\epsilon_k H^{(u(\epsilon))} (\epsilon_k,X)$ is the final result of the ordinary arithmetic circuit $\beta^{(u_k)}$. Moreover the sequence $(H+\epsilon_k H^{u(\epsilon)}(\epsilon_k,X))_{k\in\N}$ converges to $H$. Therefore condition $(ii)$ is satisfied.

The other implication is more cumbersome. We adapt the argumentation of \cite{Alder84} (see also \cite{Lickteig90}, \S A) to our computation model. Let us now prove $(iii)\Rightarrow (i)$.

Without loss of generality we may suppose that $\ol{\mathcal{M}}$ is an irreducible affine variety. Let $\theta$ be the geometrically robust constructible map which assigns to each $u\in\mathcal{M}$ the coefficient vector of the final result $G^{(u)}$ of the circuit $\beta^{(u)}$. Observe that $\theta(\ol{\mathcal{M}}_{P})$ is Zariski dense in $\ol{W}_{\beta}$. Thus $B:=\ol{\ol{W}_{\beta}-\theta(\ol{\mathcal{M}}_P)}$ is a proper Zariski closed subset of $\ol{W}_{\beta}$. Let $q$ be the dimension of the irreducible affine variety $\ol{W}_{\beta}$ and let $h$ be the coefficient vector of the polynomial $H$. By assumption we have $h\in\ol{W}_{\beta}$. In case of $q=0$ we conclude $h\in W_{\beta}$ and we are finished. Therefore we may suppose without loss of generality $q>0$.

%pag 5  
By Noether's Normalization Lemma there exists a surjective finite morphism of irreducible affine varieties $\lambda:\ol{W}_{\beta}\to\A^q$. Since $B$ is a proper Zariski closed subset of $\ol{W}_{\beta}$ we have $\lambda(B)\subsetneqq  \A^q$. We may therefore choose a point $z\in\A^q - \lambda(B)$. Let $L$ be a straight line of $\A^r$ which passes through $\lambda(h)$ and $z$. Then $\lambda(h)$ belongs to $L$ and $\lambda(B)\cap L$ is a finite set. Since the morphism $\lambda$ is finite, the irreducible components of $\lambda^{-1}(L)$ are all closed subcurves of $\ol{W}_{\beta}$ which become mapped onto $L$ by $\lambda$. In particular there exists an irreducible component $C$ of $\lambda^{-1}(L)$ which contains $h$. Since $\lambda(B)\cap L$ is finite we have $C\nsubseteq B$. Therefore $C\cap B$ is finite too. Suppose that $\theta(\ol{\mathcal{M}}_{P}) \cap C$ is finite. Then we may conclude that $C \cap B$ is infinite, a contradiction. Therefore $\theta(\ol{\mathcal{M}}_P) \cap C$ is infinite. Hence the Zariski closure of $\theta^{-1}(C)$ in $\ol{\mathcal{M}}$ contains an irreducible component $V$ such that $\theta(V_P)$ is Zariski dense in $C$. Let $q^*:=\dim V$ and $u \in V_P$. Observe $q^* >0$ and that $B^*:=\ol{\theta^{-1}(\theta(u))} \cap V$ is a proper Zariski closed subset of $V$. Again by Noether's Normalization Lemma there exists a surjective finite morphism of irreducible affine varieties $\lambda^*:V \to \A^{q^*}$.

Since $B^*$ is a proper Zariski closed subset of $V$ we have $\lambda^*(B^*) \subsetneqq \A^{q^*}$. Therefore we may choose again a point $z^* \in\A^{q^*} - \lambda^*(B^*)$ and a straight line $L^*$ of $\A^{q^*}$ which passes through $\lambda^*(u)$ and $z^*$. Thus $\lambda^*(B^*) \cap L^*$ is a finite set and $\lambda^*(u)$ belongs to $L^*$. The irreducible components of $(\lambda^*)^{-1}(L^*)$ are all closed subcurves of $V$ which become mapped onto $L^*$ by $\lambda^*$. In particular there exists an irreducible component $C^*$ of $(\lambda^*)^{-1}(L^*)$ which contains $u$. Since $\lambda^*(B^*) \cap L^*$ is finite we conclude $C^* \nsubseteq B^*$. Moreover from $C^*$ irreducible, $u\in C^*$ and $u\in V_P$ we conclude that $(C^*)_P$ is Zariski dense in $C^*$. Hence $C^* \nsubseteq B_*$ implies that there exists a point $u^*\in(C^*)_P - B^*$. Thus we have $\theta(u^*) \neq \theta(u)$. Therefore $\theta((C^*)_P)$ is Zariski dense in $C$.

In this way we have found two irreducible closed subcurves $C^*$ of $\ol{\mathcal{M}}$ and $C$ of $\ol{W}_{\beta}$ with $(C^*)_P$ nonempty such that $\theta$ maps $(C^*)_P$ into $C$ and $\theta((C^*)_P)$ is Zariski dense in $C$. Moreover we have $h\in C$.

The restriction of $\theta$ to $(C^*)_P$ is again a geometrically robust constructible map. Therefore $\theta$ induces a finite field extension $\C(C) \subset \C(C^*)$.

Let $\mathcal{R}$ be the integral closure of $\C[C]$ in $\C(C^*)$. Then $\mathfrak{m}$ may be extended to a maximal ideal $\mathfrak{M}$ of $\mathcal{R}$. Observe that $\mathcal{R}$ is finitely generated over $\C$. Thus $\mathcal{R}$ is the coordinate ring of a non--singular curve and $\mathfrak{M}$ defines a point of this curve. Thus $\mathcal{R}_{\mathfrak{M}}$ is a regular $\C$--algebra of dimension one and therefore there exists an embedding of $\mathcal{R}_{\mathfrak{M}}$ into the power series ring $\C[[\epsilon]]$ which maps any generator of the maximal ideal of $\mathcal{R}_{\mathfrak{M}}$ onto a power series of order one. Moreover the elements of $\mathcal{R}_{\mathfrak{M}}$ become mapped onto power series which constitute holomorphic function germs at the origin. Hence the coordinate functions of $C^*$, given by the restrictions of $\pi_1,\dots,\pi_r$ to $C^*$, can be represented by Laurent series $u_1(\epsilon),\dots,u_r(\epsilon)$ of $\C((\epsilon))$ which constitute meromorphic function germs at the origin. Let $u(\epsilon):=(u_1(\epsilon),\dots,u_r(\epsilon))$. Then we have $P(u(\epsilon))\neq 0$ and $\theta(u(\epsilon))$ is by Lemma \ref{lemma open disc} a well defined meromorphic map germ at the origin. Moreover $\theta(u(\epsilon))$ belongs to $\C[C]$ and hence to $\mathcal{R}_{\mathfrak{M}}$. This implies that the entries of the vector $\theta(u(\epsilon))$ are power series of $\C[[\epsilon]]$ which constitute holomorphic function germs at the origin. Moreover $h - \theta(u(\epsilon))$ belongs to the maximal ideal of $\C[C]_{\mathfrak{m}}$ and hence to that of $\mathcal{R}_{\mathfrak{M}}$. This means that $\epsilon$ divides the entries of $h - \theta(u(\epsilon))$ in $\C[[\epsilon]]$. Observe now that $\theta(u(\epsilon))$ is the coefficient vector of the final result $G^{(u(\epsilon))}$ of $\beta^{(u(\epsilon))}$. We conclude now that there exists a polynomial $H^{(u(\epsilon))}\in\C[[\epsilon]][X]$, whose coefficients constitute holomorphic function germs at the origin, such that $H + \epsilon H^{(u(\epsilon))} = G^{(u(\epsilon))}$ holds. Since $C^*$ is contained in $\ol{\mathcal{M}}$ we have finally $A(u(\epsilon))=0$ for any polynomial $A\in\mathfrak{a}$.
\end{proof}

\ws

Theorem \ref{th: equivalent conditions} is the technical main result of \cite{Alder84}, where an analogous statement is proved in the particular case $\mathcal{M}:=\A^r$ (see also \cite{Lickteig90}, \S A).

In the sequel we retake the notations of the proof of Theorem \ref{th: equivalent conditions}.

Let $u(\epsilon)$ be an approximative parameter instance for $\beta$ such that $\beta^{(u(\epsilon))}$ represents a polynomial $H\in\C[X]$ whose coefficient vector $h$ belongs to $W_{\beta}$. Then, using interpolation, one may first compute the coefficient vector $\theta(u(\epsilon))$ of the final result $G^{(u(\epsilon))}$ of $\beta^{u(\epsilon)}$ and then recompute $G^{(u(\epsilon))}$ from the entries of $\theta(u(\epsilon))$ and the indeterminates $X_1,\dots,X_n$. We may compute $H$ in the same way from the entries of $h=\lim_{\epsilon\to 0}\theta(u(\epsilon))$ and $X_1,\dots,X_n$. This algorithm may be interpreted as two essentially division--free procedures $\mathcal{A}^{(1)}$ and $\mathcal{A}^{(2)}$ of our extended computation model where $\mathcal{A}^{(1)}$ becomes applied to the input circuit $\beta^{(u(\epsilon))}$ and returns the parameter vector $\theta(u(\epsilon))$ and $\mathcal{A}^{(2)}$ becomes applied to the input $\theta(u(\epsilon))=\mathcal{A}^{(1)}(\beta^{(u(\epsilon))})$ and returns $G^{(u(\epsilon))}$. Thus $\mathcal{A}^{(1)}$ computes on input $\beta^{(u(\epsilon))}$ only parameters. Therefore $\mathcal{A}:=(\mathcal{A}^{(1)},\mathcal{A}^{(2)})$ may be interpreted as an essentially division--free procedure of our extended computation model which accepts $\beta^{(u(\epsilon))}$ as input and returns $G^{(u(\epsilon))}$ as output (observe that $X_1,\dots,X_n$ play simultaneously the r\^ole of input and output variables). Suppose now more generally that there is given an essentially division--free procedure $\mathcal{A}=(\mathcal{A}^{(1)},\mathcal{A}^{(2)})$ of our extended computation model which accepts $\beta$ as input and returns as output a robust parameterized arithmetic circuit $\mathcal{A}_{\text{final}}(\beta)$ whose single final result is the polynomial $G$. 

Like in Section \ref{A computation model with robust parameterized arithmetic circuits}
 we consider now the circuit $\gamma:=\mathcal{A}_{\text{final}}(\beta)$ and the geometrically robust constructible map $\nu:=\mathcal{A}^{(1)}(\beta)$ whose domain of definition is $\mathcal{M}$. Observe that $\gamma$ is an essentially division--free, robust arithmetic circuit with parameter domain $\mathcal{M}$. Let $\mathcal{S}$ be the image of the geometrically robust constructible map $\nu$. Then $\mathcal{S}$ is a constructible subset of a suitable affine space $\A^p$ and there exists a geometrically robust constructible map $\psi:\mathcal{S}\to\A^m$ and a vector of $m$--variate polynomials $\omega^*$ such that for $\nu^*:=\psi\circ\nu$ the geometrically robust constructible map $\omega^*(\nu^*)=\omega^*\circ\nu^*$ is the coefficient vector $\theta$ of the final result $G$ of $\beta$. Observe that $\mathcal{S}^*:=\psi(\mathcal{S}):=\nu^*(\mathcal{M})$ is a constructible subset of $\A^m$ and that $\omega^*(\mathcal{S}^*)=\omega^*(\nu^*(\mathcal{M}))=\theta(\mathcal{M})=W_{\beta}$ holds. Thus we may interpret $\omega^*$ as a surjective polynomial map $\omega^*:\mathcal{S}^*\to W_{\beta}$. In the terminology of \cite{CaGiHeMaPa03}, Section 3.2, the constructible set $\mathcal{S}^*$ is a \emph{data structure}, $W_{\beta}$ an (abstract) \emph{object class} and $\omega^*$ a \emph{holomorphic encoding} of $W_{\beta}$ by $\mathcal{S}^*$, where $W_{\beta}$ represents the set of polynomials $\{ G^{(u)}; u\in\mathcal{M} \}$. Moreover $m$ is called the \emph{size} of the data structure $\mathcal{S}$. In an analogous way we may also consider $\mathcal{S}$ as a data structure and $\omega:=\omega^*\circ\psi$ an encoding of $W_{\beta}$ (observe that $W_{\beta}=\theta(\mathcal{M})=\omega^*\circ\psi\circ\nu(\mathcal{M})=\omega(\mathcal{S})$ holds). Again in the terminology of \cite{CaGiHeMaPa03}, Section 3.2, $\omega$ is a \emph{continuous encoding} of the object class $W_{\beta}$ by the data structure $\mathcal{S}$ whose size is $p$.  

\begin{lemma}
\label{lemma three sequences}
Let $(u_k)_{k\in\N}$, $u_k\in\mathcal{M}$, $(s_k)_{k\in\N}$, $s_k\in\mathcal{S}$ and $(s_k^*)_{k\in\N}$, $s^*_k\in\mathcal{S}^*$ three sequences which satisfy the following condition:
\begin{quote}
the sequences $(\theta(u_k))_{k\in\N}$, $(\omega(s_k))_{k\in\N}$ and $(\omega^*(s_k^*))_{k\in\N}$ converge each to a point of $W_{\beta}$.	
\end{quote}
Then the sequences $(\nu(u_k))_{k\in\N}$, $(s_k)_{k\in\N}$ and $(\nu^*(u_k))_{k\in\N}$, $(s_k^*)_{k\in\N}$ converge each to a point of $\mathcal{S}$ or $\mathcal{S}^*$, respectively.
\end{lemma}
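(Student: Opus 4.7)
The proof will rest on two fundamental consequences of the geometric robustness of the maps $\nu$, $\nu^{*}$, $\psi$, $\omega$, $\omega^{*}$ (with $\omega^{*}$ moreover polynomial by construction). First, by Theorem--Definition \ref{theorem-definition} each such map is continuous for the Euclidean topology, so it preserves limits wherever the limits exist. Second, condition $(i)$ of the same theorem yields local integral dependence of the components of each map over the appropriate local coordinate ring of the closure of its domain, which controls preimages of convergent sequences.

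First I would treat $(\nu(u_k))$. Let $w$ be the limit of $\theta(u_k)=\omega(\nu(u_k))$ and choose $s\in\mathcal{S}$ with $\omega(s)=w$, which is possible since $\omega(\mathcal{S})=W_{\beta}$. The local integrality of $\omega$ at $s$ forces $(\nu(u_k))$ to remain bounded in a neighborhood of $s$ inside $\ol{\mathcal{S}}$, so some subsequence converges to a point $s'\in\ol{\mathcal{S}}$. Continuity of $\omega$ on $\ol{\mathcal{S}}$ yields $\omega(s')=w$, and the continuity of the decoder arising from the geometric robustness of the encoding chain $\M\to\mathcal{S}\to W_{\beta}$ (a consequence of Zariski's Main Theorem underlying the Theorem--Definition) forces $s'\in\mathcal{S}$ and uniqueness of all subsequential limits. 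Hence $(\nu(u_k))$ converges in $\mathcal{S}$. The convergence of $(\nu^{*}(u_k))=(\psi(\nu(u_k)))$ in $\mathcal{S}^{*}$ then follows at once by continuity of $\psi$.

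Next I would reduce the cases $(s_k)$ and $(s_k^{*})$ to the previous case. Since $\mathcal{S}=\nu(\mathcal{M})$, for each $k$ I can choose some $u_k\in\mathcal{M}$ with $\nu(u_k)=s_k$; then $\theta(u_k)=\omega(s_k)$ converges to a point of $W_{\beta}$ by hypothesis, and the first case gives convergence of $(s_k)=(\nu(u_k))$ in $\mathcal{S}$. The symmetric argument, with $\nu^{*}$ and $\omega^{*}$ in place of $\nu$ and $\omega$ and exploiting $\nu^{*}(\mathcal{M})=\mathcal{S}^{*}$, handles $(s_k^{*})$ in $\mathcal{S}^{*}$.

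The main obstacle will be the last step of the first case, namely upgrading subsequential convergence to convergence of the full sequence, i.e.\ ruling out that $(\nu(u_k))$ oscillates between distinct preimages of $w$ in $\mathcal{S}$. Boundedness via local integrality is straightforward; what is delicate is showing that the encoding $\omega:\mathcal{S}\to W_{\beta}$ (and its lift to $\M$) admits a continuous local decoder on $W_{\beta}$, so that convergent sequences in $W_{\beta}$ lift to sequences in $\mathcal{S}$ with a \emph{unique} limit. Exactly this uniqueness is the content of geometric robustness in its algebraic incarnation; without it one would only obtain a convergent subsequence.
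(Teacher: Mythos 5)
Your reduction of the cases $(s_k)$ and $(s_k^*)$ to the first case (choosing preimages $u_k$ with $\nu(u_k)=s_k$ and using $\omega\circ\nu=\theta$) is exactly what the paper does, and your treatment of $(\nu^*(u_k))$ via continuity of $\psi$ is fine. The problem is the first case, which is the heart of the lemma, and there your argument has a genuine gap. What the paper uses is not the geometric robustness of the forward maps $\theta,\nu,\omega$ but the fact that the subprocedure $\mathcal{A}^{(1)}$ is \emph{output isoparametric} (Proposition \ref{prop: T subset and composition}): this supplies a geometrically robust constructible map $\sigma$ defined on $W_{\beta}=\theta(\mathcal{M})$ with $\sigma\circ\theta=\nu$. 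Once $\sigma$ exists, everything is immediate: $\sigma$ is continuous by Theorem--Definition \ref{theorem-definition}, so $\nu(u_k)=\sigma(\theta(u_k))\to\sigma(g)$, and writing $g=\theta(u)$ for some $u\in\mathcal{M}$ gives $\sigma(g)=\nu(u)\in\mathcal{S}$ --- no subsequence extraction, boundedness argument, or uniqueness-of-limits discussion is needed. You instead try to manufacture the decoder out of ``the geometric robustness of the encoding chain $\mathcal{M}\to\mathcal{S}\to W_{\beta}$'' via Zariski's Main Theorem, and you yourself flag the existence of a continuous local decoder with unique limits as ``the main obstacle'' without resolving it. But robustness of $\theta$, $\nu$, $\omega$ only gives continuity of these maps in the forward direction; it says nothing about factoring $\nu$ through $\theta$, and for a general (non--isoparametric) assignment $u\mapsto\nu(u)$ the conclusion of the lemma would simply be false: $\theta(u_k)$ could converge while $\nu(u_k)$ oscillates between distinct fibers. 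The decoder is an architectural hypothesis on the procedure, guaranteed by the computation model, not a consequence of robustness of the encodings.

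A second, more local flaw: your boundedness step, ``the local integrality of $\omega$ at $s$ forces $(\nu(u_k))$ to remain bounded in a neighborhood of $s$ inside $\ol{\mathcal{S}}$,'' does not follow. Condition $(i)$ of Theorem--Definition \ref{theorem-definition} applied to $\omega$ makes the components of $\omega$ integral over the local rings of $\ol{\mathcal{S}}$, which controls the \emph{values} of $\omega$ near points of $\mathcal{S}$; it gives no properness or finiteness of $\omega$ and hence no control over preimages of a convergent sequence in $W_{\beta}$ (think of an $\omega$ with positive-dimensional, unbounded fibers). So both the boundedness and the uniqueness halves of your first case rest on properties that the stated hypotheses do not provide; the missing ingredient in each instance is precisely the factorization $\nu=\sigma\circ\theta$ coming from output isoparametricity of $\mathcal{A}^{(1)}$.
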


\begin{proof}
Observe that the procedure $\mathcal{A}^{(1)}$ is output isoparametric. Therefore there exists a geometrically robust constructible map $\sigma$ with domain of definition $W_{\beta}$ such that $\sigma\circ\theta=\nu$ holds. Since $\sigma$ and $\theta$ are continuous with respect to the strong topologies of their source and image spaces and $(\theta(u_k))_{k\in\N}$ converges to a point $g$ of $W_{\beta}$, the sequence $(\nu(u_k))_{k\in\N}$ converges to $\sigma(g)$. Hence there exists a parameter instance $u\in\mathcal{M}$ with $g=\theta(u)$. Thus we have $\sigma(g)=(\sigma\circ\theta)(u)=\nu(u)$ and $\nu(u)$ belongs to $\mathcal{S}$. This implies that the sequence $(\nu(u_k))_{k\in\N}$ converges to a point of $\mathcal{S}$. 

Observe that for each $k\in\N$ there exists a parameter instance $u_k\in\mathcal{M}$ with $s_k=\nu(u_k)$. Therefore we have 
$$\omega(s_k)=\omega\circ\nu(u_k)=(\omega^*\circ\psi\circ\nu)(u_k)=(\omega^*\circ\nu^*)(u_k)=\theta(u_k).$$
By assumption the sequence $(\omega(s_k))_{k\in\N}$, and hence also the sequence $(\theta(u_k))_{k\in\N}$, converge to a point of $W_{\beta}$. As we have shown, this implies that the sequence $(\nu(u_k))_{k\in\N}$, i.e. the sequence $(s_k)_{k\in\N}$ converges to a point of $\mathcal{S}$, namely to $\sigma(g)$.

The statements about the sequences $(\nu^*(u_k))_{k\in\N}$ and $(s_k^*)_{k\in\N}$ are shown in the same way.
\end{proof}

\ws

Lemma \ref{lemma three sequences} implies that $\omega$ and $\omega^*$ are \emph{robust encodings} in the terminology of \cite{CaGiHeMaPa03}, Section 3.2.

Let now be given an approximative parameter instance $u(\epsilon)$ of $\beta$ such that $\beta^{(u(\epsilon))}$ represents a polynomial $H$ whose coefficient vector belongs to $W_{\beta}$.  

We wish to evaluate the polynomial $H$. There exists a parameter instance $u\in\mathcal{M}$ with $H=G^{(u)}$, but we are in general not able to find it. However we may be able to find a sequence $(\epsilon_k)_{k\in\N}$ of non zero complex numbers such that for any $k\in\N$ the germ $u(\epsilon)$ is holomorphic at $\epsilon_k$ and satisfies the condition $P(u(\epsilon_k))\neq 0$. This implies that $u_k:=u(\epsilon_k)$ belongs to $\mathcal{M}$. Since the approximative $\beta$--computation $\beta^{(u(\epsilon))}$ represents the polynomial $H$, the sequence $(G^{(u_k)})_{k\in\N}$ converges to $H$. From Lemma \ref{lemma three sequences} we deduce now that the sequences $(\nu(u_k))_{k\in\N}$ and $(\nu^*(u_k))_{k\in\N}$ converge to points $s$ and $s^*$ of $\mathcal{S}$ and $\mathcal{S}^*$, respectively. Therefore $\omega(s)$ and $\omega^*(s^*)$ are equal and form the coefficient vector of $H$. In case that the essential parameters of the circuit $\gamma=\mathcal{A}_{\text{final}}(\beta)$ constitute the entries of the geometrically robust constructible map $\nu^*$, we may reinterpret $\gamma$ as a robust parameterized arithmetic circuit with parameter domain $\mathcal{S}^*$. In this interpretation $\gamma^{(s^*)}$ becomes an ordinary division--free arithmetic circuit in $\C[X]$ whose single final result is $H$.

If the coefficient vector of $H$ belongs to $\ol{W_{\beta}}$, but not to $W_{\beta}$ we need the additional requirement that the sequences $(\nu(u(\epsilon_k)))_{k\in\N}$ or $(\nu^*(u(\epsilon_k)))_{k\in\N}$ are convergent (or at least bounded) in order to make the same kind of conclusions. However this depends on the technical condition that the geometrically robust constructible map $\sigma$ of the proof of Lemma \ref{lemma three sequences} may continuously be extended to the closure $\ol{W_{\beta}}$ of $W_{\beta}$ or on the choice of the sequence $(\epsilon_k)_{k\in\N}$.   

%TODO
In order to illustrate these technical considerations we are now going to discuss the following example.

Let $L,n$ be natural numbers, $r:=(L+n+1)^2$, $K_{L,n}:=4(L+n+1)^2+2$ and let $U_1,\dots,U_r$ be parameter and $X_1,\dots,X_n$ input variables. Let $\mathcal{M}_{L,n}:=\A^r$ and $\pi_1,\dots,\pi_r$ the canonical projections of $\mathcal{M}_{L,n}$ onto $\A^1$.

Following \cite{Burgisser97}, Chapter 9, Exercise 9.18 there exists a totally division--free generic computation $\beta_{L,n}$ with a single final result $G_{L,n}$ such that any polynomial $H\in\C[X_1,\dots,X_n]$ is evaluable by at most $L$ essential multiplications if and only if there exists a parameter instance $u\in\mathcal{M}_{L,n}$ such that $H=G_{L,n}^{(u)}$ holds. As a totally division--free generic computation we may interpret $\beta_{L,n}$ as a robust parameterized arithmetic circuit with parameter domain $\mathcal{M}_{L,n}$, basic parameters $\pi_1,\dots,\pi_r$ and inputs $X_1,\dots,X_n$.

Let us fix points $\gamma_1,\dots,\gamma_{K_{L,n}}$ of $\Z^n$ such that $(\gamma_1,\dots,\gamma_{K_{L,n}})$ is an identification sequence for the set $W_{L,n}:=\{ G_{L,n}^{(u)};u\in\mathcal{M}_{L,n} \}$. This means that for any $H_1,H_2 \in W_{L,n}$ the implication $H_1(\gamma_1)=H_2(\gamma_1),\dots,H_1(\gamma_{K_{L,n}})=H_2(\gamma_{K_{L,n}})\Rightarrow H_1=H_2$ holds. 

There exist identification sequences $(\gamma_1,\dots,\gamma_{K_{L,n}})$ of bit length at most $4(L+1)$ (see \cite{CaGiHeMaPa03} and \cite{GHMS09} for details).

As before let $\mathcal{A}=(\mathcal{A}^{(1)},\mathcal{A}^{(2)})$ be an essentially division--free procedure of our computation model which on input $\beta_{L,n}$ returns a robust parameterized arithmetic circuit $\mathcal{A}_{\text{final}}(\beta_{L,n})$ whose single final result is $G_{L,n}$. Let $\nu_{L,n}:=\mathcal{A}^{(1)}(\beta_{L,n})$ and let $\mathcal{S}_{L,n}$ be the image of the geometrically robust constructible map $\nu_{L,n}$. We think $\mathcal{S}_{L,n}$ as a constructible subset of an affine space $\A^{p_{L,n}}$. Furthermore let be given a geometrically robust constructible map $\psi_{L,n}:\mathcal{S}_{L,n}\to\A^{m_{L,n}}$ and a vector of $m_{L,n}$--variate polynomials $\omega^*_{L,n}$ such that for $\nu^*_{L,n}:=\psi_{L,n}\circ\nu_{L,n}$ the vector of coefficients of $G_{L,n}$ with respect to the variables $X_1,\dots,X_n$ can be written as $\omega^*_{L,n}\circ\nu^*_{L,n}$.

Assume now $p_{L,n}=K_{L,n}$ and that $\nu_{L,n}:\mathcal{M}_{L,n}\to\mathcal{S}_{L,n}$ is the polynomial (and hence geometrically robust constructible) map which for $u\in\mathcal{M}_{L,n}$ is defined by $\nu_{L,n}(u):=(G^{(u)}(\gamma_1),\dots,G^{(u)}(\gamma_{m_{L,n}}))$.

Let $\mathcal{A}^{(2)}$ be a procedure of our computation model which accepts $\nu_{L,n}=\mathcal{A}(\beta_{L,n})$ as input and returns on $\nu_{L,n}$ as output the polynomial $G_{L,n}$ (such procedures always exist). Then the size $p_{L,n}=K_{L,n}$ of the continuous encoding $\omega^*_{L,n}\circ\psi_{L,n}:\mathcal{S}_{L,n}\to W_{\beta_{L,n}}$ of $W_{\beta_{L,n}}$ is $K_{L,n}=4(L+n+1)^2+2$ whereas for $\mathcal{S}^*_{L,n}:=\psi_{L,n}(\mathcal{S})=\nu^*_{L,n}(\mathcal{M})$ the map $\omega^*_{L,n}:\mathcal{S}^*_{L,n}\to W_{\beta_{L,n}}$ represents a holomorphic encoding of $W_{\beta_{L,n}}$ of size $m_{L,n}=2^{\Omega(Ln)}$ (see \cite{GHMS09}, Theorem 23). The proof of this lower bound may be adapted to show the estimate $m_{L,n}=2^{\Omega(Ln)}$ for any procedure $\mathcal{A}=(\mathcal{A}^{(1)},\mathcal{A}^{(2)})$ of our extended computation model which recalculates the polynomial $G_{L,n}$ from the circuit $\beta_{L,n}$.

Thus, there are natural classes of polynomials which have continuous encodings of ``small size'' whereas their holomorphic encodings may become necessarily ``large''.

%-----------------------------------------------
%%%%%%%%%%%%%% Bibliography %%%%%%%%%%%%%%%%%%%%
%-----------------------------------------------

%\bibliographystyle{amsalpha} 
%\bibliographystyle{splncs03}

%\bibliographystyle{myalpha}
%\bibliography{simple}

\newcommand{\etalchar}[1]{$^{#1}$}

%%%%%%%%%%%%%%%%%%%%%%%%%%%%%%%%%%%%%%%%%%%%%%%%%%%%%%%%%%%%%%%%%%%%%%%%%
%%%%%%%%%%%%%%%%%%%%%%  END OF DOCUMENT  %%%%%%%%%%%%%%%%%%%%%%%%%%%%%%%%
%%%%%%%%%%%%%%%%%%%%%%%%%%%%%%%%%%%%%%%%%%%%%%%%%%%%%%%%%%%%%%%%%%%%%%%%%

\end{document}